\newtheorem{Th}{Theorem}
\newtheorem{Prop}[Th]{Proposition}
\newtheorem{Cor}[Th]{Corollary}
\theoremstyle{remark}
\newtheorem{F}{Fact}
\newtheorem{claim}[F]{Fact}
\newtheorem{R}{Remark}
\theoremstyle{definition}
\newtheorem{Df}{Definition}
\newtheorem{Ex}{Example}
\newtheorem{definition}[Df]{Definition}
\def\tr{\textnormal{tr}}
\def\lin{\textnormal{lin}}
\def\conv{\textnormal{conv}}
\def\ext{\textnormal{ext}}
\def\mixed{\mathcal S(\mathbb C^d)}
\begin{document}
	
\title{Morphophoric POVMs, generalised qplexes,\newline and 2-designs}

\author{Wojciech S{\l}omczy\'{n}ski}
\email{wojciech.slomczynski@im.uj.edu.pl}
\author{Anna Szymusiak}
\email{anna.szymusiak@uj.edu.pl}
\affiliation{Institute of Mathematics, Jagiellonian University, \L ojasiewicza 6, 30-348 Krak\'{o}w, Poland}

\begin{abstract}
	We study the class of quantum measurements with the property that the image of 	the set of quantum states under the measurement map transforming states into 	probability distributions is similar to this set and call such measurements morphophoric. 	This leads to the generalisation of the notion of a qplex, where SIC-POVMs are replaced 	by the elements of the much larger class of morphophoric POVMs, containing in particular 	2-design (rank-1 and equal-trace) POVMs. The intrinsic geometry of a~generalised qplex is the same 	as that of the set of quantum states, so we explore its external geometry, 	investigating, inter alia, the algebraic and geometric form of the inner (basis) and 	the outer (primal) polytopes between which the generalised qplex is sandwiched. In 	particular, we examine generalised qplexes generated by MUB-like 2-design POVMs utilising 	their graph-theoretical properties. Moreover, we show how to extend the primal equation 	of QBism designed for SIC-POVMs to the morphophoric case.
\end{abstract}

\maketitle

\section{Introduction and preliminaries}
\label{Introduction}

Over the last ten years in a series of papers
\cite{FucSch09,FucSch11,FucSch13,Appetal11,Appetal17,Ros11,Kiketal19} Fuchs, Schack,
Appleby, Stacey, Zhu and others have introduced first the idea of \textit{QBism}
(formerly \textit{quantum Bayesianism}) and then its probabilistic embodiment - the (\textit{Hilbert})
\textit{qplex}, both based on the notion of \textit{symmetric informationally
complete positive operator-valued measure} (\textit{SIC-POVM}), representing the
standard (or reference) measurement of a quantum system. In spite
of the fact that SIC-POVMs do indeed exhibit some unique properties that
distinguish them among other IC-POVMs \cite{DeBFucSta19b,DeBFucSta20,Kwa19},
we believe that similar (or, in a~sense, even richer from a purely mathematical
point of view) structures, say, \textit{generalised qplexes}, can be
successfully used to faithfully represent quantum states as probability
distributions if we replace SIC-POVMs with the broader class
of \textit{morphophoric POVMs}, containing in particular rank-$1$ measurements
generated by \textit{complex projective} $2$\textit{-designs}. As in the QBism
interpretation of quantum mechanics, the result is a new formalism for quantum theory,
which involves only probabilities.

The rank-$1$ equal-trace case\footnote{In fact, Fuchs and Schack considered such a generalisation
in \cite[pp. 22-23]{FucSch09}, but it seems that they rejected it. See also Fuchs' letter to Zauner
\cite[p. 1933]{Fuc15}, as well as \cite{Sta19}.} is easier
to cope with than the general one, as it leads to geometry similar, though not identical, to that for qplexes, as well as to an almost identical form of the \textit{primal
equation} (\textit{`Urgleichung'}) that plays the central role in the QBism approach to quantum mechanics.
The general case, where the effects of the measurement are not necessarily rank-$1$, seems to
be more complex: while the geometric picture remains in principle almost the same, the primal equation takes
a slightly different form.

Let us imagine that the system is in a given initial state. Then either we apply a measurement \textit{`on the ground'} directly, or we first send the quantum system through the measurement device
\textit{`in the sky'}. The original `Urgleichung' allows us to express the probabilities
of the outcomes of the measurement `on the ground' in terms of the probabilities of the results
of the counterfactual measurement `in the sky' (which is a SIC-POVM in the QBism formulation)
and the conditional, also counterfactual, probabilities of obtaining
a given outcome of the former measurement conditioned on receiving a given result of the latter,
provided the state of the system after performing the latter is given by the L\"{u}ders instrument.
Note, however, that in the general (not rank-$1$) case these conditional probabilities may depend on the initial state
of the system \cite[p. 23]{FucSch09} and, therefore, the primal equation cannot be easily extended to this case. That is why we
propose another equation connecting those probabilities which is equivalent to the primal equation
in the rank-$1$ equal-trace case, but employs the conditional probabilities defined for the maximally mixed initial
state rather than those for the original initial state for which the primal equation is formulated.
Despite this change, the new primal equation is also defined in purely probabilistic terms,
although its form does not resemble any of the known laws of classical probability.

\medskip

To see what path leads to this generalisation, let us start from several technicalities.
In standard (finite) $d$-dimensional quantum mechanics the states of a system
are identified with the set of positive trace-one operators $\mathcal{S}(\mathbb{C}^{d})$
(also known as density operators or density matrices) and the pure states with
the extreme boundary of this set, $\mathcal{P}(\mathbb{C}^{d})$,
i.e. the rank-1 orthogonal projections or, equivalently,
with the elements of the projective complex vector space $\mathbb{CP}^{d-1}$.
The quantum states can be also described as the elements of the ($d^{2}-1$)-dimensional
real Hilbert space $\mathcal{L}_{s}^{0}\left(\mathbb{C}^{d}\right)$
of Hermitian traceless operators on $\mathbb{C}^{d}$ endowed
with the Hilbert-Schmidt product. Namely, the map  $\mathcal{S}\left(
\mathcal{H}\right)  \ni\rho\rightarrow\rho-I/d\in\mathcal{L}_{s}^{0}\left(
\mathbb{C}^{d}\right)$ gives an affine embedding of the set of states
(resp. pure states) into the (\textit{outer}) ($d^{2}-1$)-dimensional ball
(resp. sphere) in $\mathcal{L}_{s}^{0}\left( \mathbb{C}^{d}\right)  $
centred at $0$ of radius $\sqrt{1-1/d}$. The image of this map is called
the \textit{Bloch body} (resp. the \textit{generalised Bloch sphere}). Only
for $d=2$ the map is onto, and for $d>2$ its image is a convex subset of the
ball of full dimension and contains the maximal (\textit{inner}) ball of
radius $1/\sqrt{d\left( d-1\right)}$ \cite{BenZyc17}.

A measurement with a finite number $n$ of possible outcomes can be described
by a \textit{positive operator-valued measure} (\textit{POVM}), i.e. an
ensemble of positive non-zero operators $\Pi=(\Pi_{j})_{j=1}^n$
on $\mathbb{C}^{d}$ that sum to the identity operator, i.e.
$\sum\nolimits_{j=1}^{n}\Pi_{j}=I$. According to the \textit{Born rule},
the affine \textit{measurement map} $p_{\Pi}$ sends an input quantum
state $\rho\in\mathcal{S}(\mathbb{C}^{d})  $ into the vector of
probabilities of the respective outcomes $(\operatorname{tr}\left(  \rho
\Pi_{j}\right)  )_{j=1}^n$ that belongs to the \textit{standard}
(\textit{probability}) \textit{simplex} $\Delta_{n}:=
\{x\in \left( \mathbb{R}^{+}\right)^{n}:\sum\nolimits_{j=1}^{n}x_{j}=1\}$.
If this map is one-to-one, we call $\Pi$
\textit{informationally complete}; then, necessarily, $n\geq d^{2}$. In this
situation the results of the measurement  uniquely determine the input state,
and so \textit{quantum tomography}, i.e. the task of reconstructing the quantum state
from the outcomes of the measurement, can be concluded. In this case
$\mathcal{Q}_{\Pi}:=p_{\Pi}(\mathcal{S}(  \mathbb{C}^{d})  )$ (the
\textit{probability range} of $\Pi$) is an affine ($d^{2}-1$)-dimensional
image of $\mathcal{S}(\mathbb{C}^{d})$ and so, clearly, the set of extreme
points of the closed convex set $\mathcal{Q}_\Pi$, i.e. its extreme boundary
$\ext\mathcal{Q}_{\Pi}$, is equal to $p_{\Pi}(\mathcal{P}(  \mathbb{C}^{d})  )$.
We shall analyse the structure of the image of $p_{\Pi}$ in Sections \ref{The probability range of POVM}
and \ref{The probability range similar to quantum state space}. If $\mathcal{S}(\mathbb{C}^{d})$ and
$\mathcal{Q}_{\Pi}$ are of the same shape (geometrically similar), we call such $\Pi$ \textit{morphophoric}
(or form-bearing, from Greek $\mu o\rho\phi\acute{\eta}$ - shape, form and $\phi \acute{o}\varrho o\sigma$ - bearing).
In this case we call $\mathcal{Q}_{\Pi}$ a \textit{generalised (Hilbert) qplex.}
It seems that morphophoric POVMs share with SIC-POVMs many properties crucial for developing natural generalisations of QBism and qplexes.

The main result of the present paper, Theorem \ref{thmsim}, provides a necessary and sufficient condition for
morphophoricity: the orthogonal projections onto $\mathcal L_s^0(\mathbb C^d)$ of the effects constituting
the POVM in question need to form a tight operator frame in this space. In consequence, we can
identify the set of quantum states with a convex subset of $\Delta_{n}$. In Section
\ref{The probability range similar to quantum state space} we take a closer look at the structure
of the class of morphophoric POVMs. In particular, we show that this notion coincides with that
of tight IC-POVMs introduced by Scott in \cite{Scott06} for POVMs consisting of effects of equal trace
(Corollary \ref{Scott}), but in general these classes are different (Example \ref{Scott_example}).

In this paper we pay special attention to POVMs $\Pi=(\Pi_{j})_{j=1}^n$ such that the effects $\Pi_{j}$ ($j=1,\ldots,n$)
are not only of equal trace but also one-dimensional, and so proportional to certain pure
states $\rho_{j}$ ($j=1,\ldots,n$). We show that in this case $\Pi$ is morphophoric if and only if these states
constitute a~complex projective $2$-design (Corollary \ref{cor2des}), or, equivalently, their image in the Bloch
body under the natural embedding is a tight frame in $\mathcal{L}_{s}^{0}\left(\mathbb{C}^{d}\right)$.
Note that the class of $2$-designs contains, besides SIC-POVMs \cite{Fucetal17},
also complete sets of MUBs (mutually unbiased bases)
\cite{KlaRot05}, MUB-like designs \cite{HugWal}, orbits of the multiqubit Clifford group \cite{Zhu17} (in fact,
the latter are $3$-designs), group designs \cite{Groetal07},
and many other discrete objects representing quantum measurements \cite{BenZyc17,BenZyc17b,BodHaa17}.
Let us now take a closer look at the geometry of $2$-design POVMs.

The main difference between (Hilbert) qplexes and their generalisation lies not in their
intrinsic geometry, but rather in their external geometry (i.e. the location in the underlying
real vector space), which we describe in detail in Section \ref{Geometric properties of generalised qplex}.
Both a (Hilbert) qplex and a (Hilbert) generalised qplex are isomorphic (similar) through the measurement map
to the space of quantum states (or, to put it another way, the Bloch body).
The former, however, is a subset of the standard $(d^{2}-1)$-dimensional
simplex $\Delta_{d^{2}}$ lying in a $d^{2}$-dimensional real vector space,
whilst the latter is a subset of a $(d^{2}-1)$-dimensional \textit{primal polytope} $\Delta$,
which is a $(d^2-1)$\textit{-section} of the `larger' simplex $\Delta_{n}$.
We show that the section in question has to be \textit{central}, i.e. it passes through the centre of the simplex $c_{n}$, and, what is more, \textit{medial}, i.e.
the vertices of the simplex are equidistant from the \textit{primal affine space}
$\mathcal{A}$ generated by $\Delta$. (The term `medial' in the general case, i.e. not restricted
to the hyperplane sections, has been introduced in \cite{Tal03}.)
The distance equal to $1-d^{2}/n$ measures the deviation from
minimality: for $n=d^{2}$ we get necessarily a SIC-POVM. Moreover, we consider
another polytope, namely a \textit{basis polytope} $D$ generated by the elements
of a $2$-design transformed by $p_{\Pi}$. This polytope is the image
under the homothety (with centre at $c_{n}$ and ratio  $1/(d+1)$)
of the orthogonal projection of $\Delta_{n}$ on $\mathcal{A}$.

The generalised qplex $\mathcal{Q}_{\Pi}$
is a convex set sandwiched between these two polytopes $D$ and $\Delta$. The situation is
somewhat analogous to the quantum (Bell) correlation picture, where the quantum
convex body lies in-between the \textit{local polytope} \cite{Pit86} and the
\textit{non-signalling polytope} \cite{PopRoh94}, see also
\cite{Bub06,Fri12,Brunneretal14,GelPia14,Pop14,Gohetal18}, or to the related
idea of applying graph theoretical concepts \cite{Groetal93,Knu94} to the
analysis of quantum contextuality, where the probabilities of the results of an experiment
in classical, quantum, and more general probabilistic theories are described
by, respectively, \textit{stable set polytope} $\subset $ \textit{theta body} $\subset $
\textit{fractional stable set polytope} (or, in other words, \textit{clique-constrained stable
set polytope}) of the graph related to the experiment \cite{Cabetal10,Cabetal14,AmaCun18}.
In fact, the analogy is even deeper since also in our situation $D$
and $\Delta$ represent, in a sense,
classical and `beyond-quantum' behaviour, respectively.

The polytopes $D$ and $\Delta$ are \textit{dual }in $\mathcal{A}$, see \cite{Wei11},
namely $D$ and $\iota(\Delta)$, where $\iota$ is the inversion through $c_{n}$, are
\textit{polar} with respect to the central sphere of radius $m := 1/\sqrt{n(d+1)}$.
Moreover, $\mathcal{Q}_\Pi$ is \textit{self-dual} in this sense,
every pair of elements $p,q\in\mathcal{Q}_\Pi$ fulfils the fundamental
inequalities: $dm^{2}\leq \langle p, q\rangle\leq2dm^{2}$, and $\mathcal{Q}_\Pi$ lies
in-between two central balls: the inscribed ball of $\Delta$ of radius
$r:=m/\sqrt{d-1}$ and the circumscribed ball of $D$ of radius
$R:=m\sqrt{d-1}$. Note that $m=\sqrt{rR}$.
Thus, the overall geometric picture is similar to the (Hilbert) qplex
case represented schematically in \cite[Figs. 2 and 3]{Appetal17}, but more complicated
as our polytopes are not necessarily simplices, see Figs.~\ref{Fig4} and \ref{Fig5}.
These geometric properties, especially the mediality of the constitutive cross-section
space $\mathcal{A}$, can be chosen as a starting point for an abstract definition of
generalised qplexes (in preparation).
Note that most of these properties hold true for every morphophoric POVM, as we see at the end
of Section \ref{Geometric properties of generalised qplex}.

This geometric approach can be supplemented by an algebraic one we present
in Section \ref{The probability range of 2-design POVMs}, namely, for
the extreme boundary $\ext\mathcal{Q}_\Pi$ of the qplex, we show that
it can be described by a set of polynomial equations (Theorem \ref{thm2des}): $n$ linear,
one quadratic, and one cubic (for $d > 2$). The system of $n$ linear equations defining the affine
space $\mathcal{A}$ is just an analogue of the famous primal equation (`Urgleichung') known
from the QBism theory where $\Pi$ plays the role of both the `ground' and the `sky' measurement.
Note, however, that for SIC-POVMs these equations do not impose any constraints on probabilities as they
are always satisfied (Corollary \ref{Special}.(ii)).
For low-dimensional systems ($d=2,3$) we provide also the description of the whole generalised
qplex $\mathcal{Q}_\Pi$ with the help of a set of polynomial equations and inequalities (Corollary
\ref{ineq2des}).

We provide a detailed analysis of several examples of
$2$-designs and the corresponding generalised qplexes: cubical,  cuboctahedral
and icosahedral POVMs in dimension $2$, as well as SIC-POVMs in dimension $3$ in Section \ref{Examples}.
 The linear equations defining primal affine space
$\mathcal{A}$ take an unexpectedly simple form for complete sets of MUBs and MUB-like 2-designs (i.e. two-distance
2-designs with one of inner products equal to $0$), see Section \ref{MUB-like POVMs}.
Apart from complete sets of MUBs, we know four  sporadic objects of this type, one in each of dimensions
4, 5, 6 and 28, and they all generate the orthogonality
graph on the elements of the POVM that is a strongly regular graph with some special properties.
In this case the equations describing $\mathcal{A}$ can be replaced by the requirement
that \textit{the sum of the probabilities over each orthonormal basis is constant}
(Theorem~\ref{Bases}).

In the last section we show that a POVM is morphophoric if and only if it obeys an equation which is a natural generalisation of the quantum law of total probability of QBism (Theorem \ref{morpheq} and Corollary \ref{morpheqcor}).
Namely, for a morphophoric measurement $\Pi=(\Pi_{j})_{j=1}^{n}$ `in the sky' and an arbitrary measurement
$\Xi=(\Xi_{k})_{k=1}^{N}$ `on the ground' we define the \textsl{deviations} of the
probability distributions of the measurements results for the pre-measurement
state $\rho$ from those for $\rho_{\ast}$, where $\rho_{\ast} := I/d$ is the
maximally mixed state: $\delta_{\Pi}(\rho):=p_{\Pi}(\rho)-p_{\Pi}(\rho_{\ast})$
and $\delta_{\Xi}(\rho):=p_{\Xi}(\rho)-p_{\Xi}(\rho_{\ast})$. Moreover,
we consider the transposed covariance matrix $C$ of two consecutive measurements $\Pi$
and $\Xi$ given by $C_{kj} :=p_{jk}^{\Pi\Xi}(\rho_{\ast})-p_{j}^{\Pi}(
\rho_{\ast}) p_{k}^{\Xi}(  \rho_{\ast})$, for $j=1,\ldots,n$ and $k=1,\ldots,N$,
and the constant
$\zeta := (d^2-1)^{-1}\sum_{j=1}^{n}(  p_{jj}^{\Pi\Pi}(\rho_{\ast})-(p_{j}^{\Pi}(\rho_{\ast}))^{2})$.
Then the following simple generalisation of the primal equation
$$\zeta \delta_{\Xi} = C\delta_{\Pi}$$
links the probabilities of both measurements.

\section{The probability range of POVMs}
\label{The probability range of POVM}

Let us consider a quantum system represented by $\mathbb C^d$ and a general discrete quantum
measurement on it  described by a POVM (positive operator-valued measure) $\Pi=(\Pi_j)_{j=1}^n$.
In particular, if  $\Pi_j$ are orthogonal projections, we call $\Pi$ a PVM (projection-valued measure). A POVM is called informationally complete (IC-POVM) if its statistics determine uniquely the pre-measurement state, i.e. the conditions $\tr(\rho\Pi_j)=\tr(\sigma\Pi_j)$ ($j=1,\ldots,n$) imply $\rho=\sigma$ for every input states $\rho,\sigma\in\mathcal S(\mathbb C^d)$.
Let us denote by $\mathcal{Q}_\Pi$ the set of `allowed' probabilities, i.e. the set of all possible
probability distributions of the measurement outcomes over all quantum states. It is known also as
the \emph{probability range} of $\Pi$ \cite{Heietal19}. More formally, if we denote by $\mathcal L_s(\mathbb C^d)$
the $d^2$-dimensional real ordered vector space of Hermitian operators on $\mathbb C^d$, then
$$\mathcal{Q}_\Pi:=p_\Pi(\mathcal S(\mathbb C^d)),$$ where
$$p_\Pi:\mathcal L_s(\mathbb C^d)\ni A\mapsto (\tr(A\Pi_1),\ldots,\tr(A\Pi_n))\in\mathbb R^n$$ is a positive linear map.
We introduce also $\mathcal{Q}_\Pi^1$ for the probability distributions attained for the pure states, i.e.
$$\mathcal {Q}_\Pi^1:=p_\Pi(\mathcal P(\mathbb C^d)).$$
As a straightforward consequence, we get the following properties of $\mathcal {Q}_\Pi$ and $\mathcal {Q}_\Pi^1$:
\begin{F}
$\mathcal {Q}_\Pi=\conv(\mathcal {Q}_\Pi^1)$ and
$\ext(\mathcal {Q}_\Pi)\subset \mathcal {Q}_\Pi^1$.
\end{F}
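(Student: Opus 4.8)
The plan is to prove both equalities by unwinding the definitions and using only two standard facts about convex sets: that a convex compact set is the convex hull of its extreme points (Krein–Milman), and that an affine image of a convex hull is the convex hull of the image. First I would recall that $\mathcal S(\mathbb C^d)$ is convex and compact, and that $\mathcal P(\mathbb C^d)=\ext\mathcal S(\mathbb C^d)$; this is the standard description of density operators with pure states as extreme points, already mentioned in the introduction. Since $p_\Pi$ is affine (indeed linear, restricted to the affine hyperplane $\tr A=1$), and $\mathcal S(\mathbb C^d)=\conv\mathcal P(\mathbb C^d)$, applying $p_\Pi$ gives $\mathcal Q_\Pi=p_\Pi(\mathcal S(\mathbb C^d))=p_\Pi(\conv\mathcal P(\mathbb C^d))=\conv(p_\Pi(\mathcal P(\mathbb C^d)))=\conv(\mathcal Q_\Pi^1)$, which is the first claim.

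For the second claim, $\ext\mathcal Q_\Pi\subset\mathcal Q_\Pi^1$, I would argue that every extreme point of a convex hull of a compact set must lie in that set. Concretely, if $x\in\ext\mathcal Q_\Pi$, then since $\mathcal Q_\Pi=\conv(\mathcal Q_\Pi^1)$ and $\mathcal Q_\Pi^1$ is compact (being the continuous image of the compact set $\mathcal P(\mathbb C^d)$), $x$ is a convex combination of points of $\mathcal Q_\Pi^1$; extremality forces $x$ itself to be one of those points, hence $x\in\mathcal Q_\Pi^1$. Alternatively and even more directly: pick $\rho\in\mathcal S(\mathbb C^d)$ with $p_\Pi(\rho)=x$, write $\rho$ as a convex combination of pure states $\rho=\sum_i\lambda_i\pi_i$, apply $p_\Pi$, and use extremality of $x$ in $\mathcal Q_\Pi$ to conclude $x=p_\Pi(\pi_i)$ for each $i$ with $\lambda_i>0$.

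There is no real obstacle here — the statement is labelled a \emph{Fact} precisely because it is an immediate consequence of general convexity together with the fact that the Born-rule map is affine and continuous. The only point requiring a modicum of care is the compactness needed to ensure $\conv(\mathcal Q_\Pi^1)$ is closed (so that talking about its extreme boundary is unambiguous) and that the convex-combination argument terminates in finitely many terms (Carathéodory's theorem in $\mathbb R^n$ suffices, since $\mathcal Q_\Pi\subset\Delta_n\subset\mathbb R^n$). I would state the proof in two or three sentences, citing Krein–Milman and the affineness of $p_\Pi$, and leave the routine verification of compactness implicit.
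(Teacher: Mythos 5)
Your argument is correct and is exactly the reasoning the paper has in mind: the paper states this as a Fact following "as a straightforward consequence", with the implicit justification being precisely that $p_\Pi$ is affine, so it carries $\mathcal S(\mathbb C^d)=\conv\mathcal P(\mathbb C^d)$ to $\conv(\mathcal Q_\Pi^1)$, and extreme points of the hull of the compact set $\mathcal Q_\Pi^1$ must lie in it. (Krein--Milman is not actually needed; your direct convex-combination argument already suffices.)
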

Let us recall that the \emph{joint numerical range} of Hermitian matrices $A_1,A_2,\ldots,A_n$ is
defined by
$$\mathcal F(A_1,A_2,\ldots,A_n)=\{(\langle z|A_1|z\rangle,\langle z|A_2|z\rangle,\ldots,\langle z|A_n|z\rangle)^T|z\in\mathbb C^d, \|z\|=1\}\subset\mathbb R^n.$$
As any pure state $\rho$ is necessarily the orthogonal projection onto some unit vector $z\in\mathbb C^d$,
we have $\tr(\rho\Pi_j)=\langle z|\Pi_j|z\rangle$, and so $\mathcal {Q}_\Pi^1$ can be described also
as the joint numerical range of $\Pi_1,\Pi_2,\ldots,\Pi_n$.

We are particularly interested in the shape of $\mathcal {Q}_\Pi$. For $d=2$, $\mathcal {Q}_\Pi$ is an ellipsoid,
possibly degenerated, as a linear image of 3-dimensional ball. Some examples of these ellipsoids are presented below.
\begin{Ex} Let us consider a one-parameter family of rank-1 normalised POVMs on $\mathbb C^2$ consisting of $n=4$ effects:
$$\Pi_1:=\frac{1}{4}\begin{pmatrix} 1+\sin\alpha & \cos\alpha\\
\cos\alpha & 1-\sin\alpha
\end{pmatrix},\ \Pi_2:=\frac{1}{4}\begin{pmatrix} 1+\sin\alpha & -\cos\alpha\\
-\cos\alpha & 1-\sin\alpha
\end{pmatrix},$$$$
\Pi_3:=\frac{1}{4}\begin{pmatrix} 1-\sin\alpha & -i\cos\alpha\\
i\cos\alpha & 1+\sin\alpha
\end{pmatrix},\ \Pi_4:=\frac{1}{4}\begin{pmatrix} 1-\sin\alpha & i\cos\alpha\\
-i\cos\alpha & 1+\sin\alpha
\end{pmatrix},
$$ where $\alpha\in[0,\pi/2]$. The pure states $\rho_j=2\Pi_j$ for $j=1,\ldots,4$ are represented on the Bloch sphere
by the vertices of certain tetragonal disphenoid, i.e.\  a tetrahedron whose four faces are congruent isosceles triangles.
Such a tetrahedron can be uniquely characterised by the length of the base (or legs) of that triangle. Taking into account
possible degenerations, we get five cases. For $\alpha=\frac{\pi}{2}$ the disphenoid degenerates to a segment (length of
the base $= 0$) and the measurement effectively  acts as the  projective von Neumann measurement. For $\alpha\in(0,\pi/2)$
there is no degeneracy and the POVM is informationally complete. In particular, for $\alpha\in(\arcsin\frac{1}{\sqrt{3}},\frac{\pi}{2})$
the length of the base of the face triangle is less than the length of its legs. For $\alpha=\arcsin\frac{1}{\sqrt{3}}$ we get the
tetrahedron (faces being equilateral triangles) representing SIC-POVM, the only case of 2-design here.  For
$\alpha\in(0,\arcsin\frac{1}{\sqrt{3}})$ the disphenoid is such that the length of the base of the face triangle
is greater than the length of its legs. Finally, for $\alpha=0$ the disphenoid degenerates to a square (length of the legs $=0$)
representing a non-informationally complete highly symmetric POVM. Its probability range in these cases is presented in Fig.\ \ref{sphs}
and evolves as follows: for $\alpha=\frac{\pi}{2}$ we obtain  a segment, then for $\alpha\in(\arcsin\frac{1}{\sqrt{3}},\pi/2)$ -- an elongated spheroid, for
$\alpha=\arcsin\frac{1}{\sqrt{3}}$ -- a~ball, for $\alpha\in(0,\arcsin\frac{1}{\sqrt{3}})$ -- a~flattened spheroid and, finally, for
$\alpha=0$ -- a disk.
\begin{figure}[htb]	
\begin{center}\includegraphics[scale=0.078]{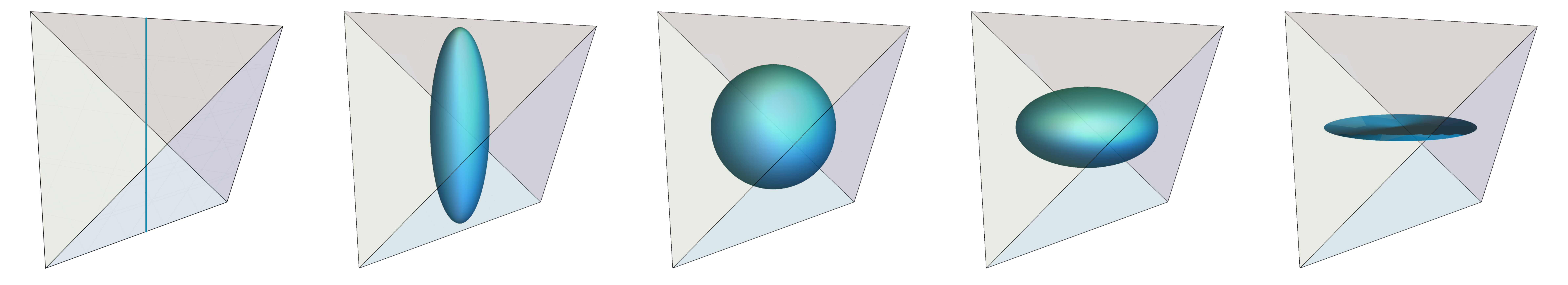}
\end{center}
\caption{$\mathcal {Q}_\Pi$ in the cases: $\alpha=\frac{\pi}{2}$ -- a segment; $\alpha\in(\arcsin\frac{1}{\sqrt{3}},\pi/2)$ -- an elongated spheroid; $\alpha=\arcsin\frac{1}{\sqrt{3}}$ -- a ball; $\alpha\in(0,\arcsin\frac{1}{\sqrt{3}})$ -- a flattened spheroid; $\alpha=0$ -- a disk. All solids are tangent to $\Delta_4$.}\label{sphs}
\end{figure}
\end{Ex}

It is much more difficult to describe the shape of $\mathcal {Q}_\Pi$ in dimensions higher than 2. However, it seems that there are two cases which are in some sense `nice'. The first one is when $\mathcal {Q}_\Pi$ is the full simplex. The second one is when it is of the same shape as $\mathcal S(\mathbb C^d)$. In what follows we deal with the sufficient and necessary conditions for these two to appear.
As for the full simplex case, the following characterisation  may be categorised as quantum information folklore \cite[p. 4]{Heietal19}, see below. The second case will be thoroughly analysed in the next section.

\begin{Df}
We say that a POVM $\Pi$ has the \textit{norm-1-property} if $\|\Pi_j\|=1$ for $j=1,\ldots,n$ (here $\|A\|:=\max_{\|x\|_2=1}\|Ax\|_2$).
\end{Df}

\begin{Th}
$\mathcal {Q}_\Pi=\Delta_n$ if and only if $\Pi$ has the norm-1-property. Moreover, in such case, $\mathcal {Q}_\Pi^1=\Delta_n$, $n\leq d$, and $n=d$ if and only if $\Pi$ is a rank-1 PVM.
\end{Th}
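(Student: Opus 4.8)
The plan is to prove the chain of equivalences by working directly with the joint numerical range characterisation of $\mathcal{Q}_\Pi^1$ and the convex-hull relation $\mathcal{Q}_\Pi=\conv(\mathcal{Q}_\Pi^1)$. First I would establish the easy direction: if $\mathcal{Q}_\Pi=\Delta_n$, then each vertex $e_j$ of $\Delta_n$ lies in $\mathcal{Q}_\Pi$, hence (being extreme) in $\mathcal{Q}_\Pi^1$, so there is a unit vector $z_j$ with $\langle z_j|\Pi_k|z_j\rangle=\delta_{jk}$; in particular $\langle z_j|\Pi_j|z_j\rangle=1$, and since $0\le\Pi_j\le I$ we get $\|\Pi_j\|=1$. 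Conversely, suppose $\|\Pi_j\|=1$ for all $j$. The core observation is that $0\le\Pi_j\le I$ forces the eigenvalue $1$ of $\Pi_j$ to come with an eigenvector $z_j$ that is simultaneously in the kernel of every other $\Pi_k$: indeed $\sum_k\langle z_j|\Pi_k|z_j\rangle=\langle z_j|I|z_j\rangle=1$ and $\langle z_j|\Pi_j|z_j\rangle=1$, so all remaining non-negative terms vanish, and positivity of $\Pi_k$ gives $\Pi_k z_j=0$ for $k\ne j$. Thus $p_\Pi(|z_j\rangle\langle z_j|)=e_j$, so every vertex of $\Delta_n$ is in $\mathcal{Q}_\Pi^1$; taking the convex hull and using $\mathcal{Q}_\Pi\subseteq\Delta_n$ yields $\mathcal{Q}_\Pi=\Delta_n$, and in fact the same argument shows $\mathcal{Q}_\Pi^1=\Delta_n$ since $\mathcal{Q}_\Pi^1$ is closed, connected, contains all the vertices, and is sandwiched between them and $\Delta_n$ — actually the cleanest route is to note $\Delta_n=\conv(\mathcal Q_\Pi^1)\subseteq\mathcal Q_\Pi^1\cup(\text{interior})$; I would instead argue that $\mathcal{Q}_\Pi^1$, being the continuous image of the connected set $\mathcal P(\mathbb C^d)$, is connected and compact, and a compact connected subset of $\Delta_n$ containing all $n$ vertices and whose convex hull is all of $\Delta_n$ must be $\Delta_n$ itself for $n\le d$, which we prove next.

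For the bound $n\le d$: from the previous step we have unit vectors $z_1,\ldots,z_n$ with $\Pi_k z_j=0$ for $k\ne j$ and $\Pi_j z_j=z_j$. I would show these vectors are pairwise orthogonal — apply $\Pi_j$ to the identity $\langle z_i|z_j\rangle$ by writing $\langle z_i|z_j\rangle=\langle z_i|\Pi_j z_j\rangle=\langle \Pi_j z_i|z_j\rangle=0$ for $i\ne j$ since $\Pi_j z_i=0$. Hence $z_1,\ldots,z_n$ is an orthonormal system in $\mathbb C^d$, forcing $n\le d$.

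Finally, the rank-1 PVM characterisation of the equality case $n=d$. If $\Pi$ is a rank-1 PVM then the $\Pi_j$ are rank-1 orthogonal projections summing to $I$, so they are mutually orthogonal rank-1 projections and there are exactly $d$ of them; clearly $\|\Pi_j\|=1$, so by the main equivalence $\mathcal{Q}_\Pi=\Delta_n$ with $n=d$. Conversely, if $\mathcal{Q}_\Pi=\Delta_n$ with $n=d$, then the orthonormal system $z_1,\ldots,z_d$ constructed above is an orthonormal \emph{basis} of $\mathbb C^d$; writing each $\Pi_j$ in this basis and using $\Pi_j z_j=z_j$, $\Pi_j z_i=0$ for $i\ne j$, we conclude $\Pi_j=|z_j\rangle\langle z_j|$, so $\Pi$ is a rank-1 PVM. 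The main obstacle I anticipate is the careful handling of the $\mathcal{Q}_\Pi^1=\Delta_n$ claim — one must be sure that not just the vertices but the whole simplex is attained by pure states; the slick fix is that for each point $x\in\Delta_n$ with rational (then, by density and closedness of $\mathcal Q_\Pi^1$, arbitrary) barycentric coordinates one builds a suitable unit vector as a coherent superposition $\sum_j\sqrt{x_j}\,e^{i\theta_j}z_j$ in the orthonormal system $(z_j)$, and checks directly that $\langle\cdot|\Pi_k|\cdot\rangle=x_k$ using $\Pi_k z_j=\delta_{jk}z_j$; density and compactness then give all of $\Delta_n$.
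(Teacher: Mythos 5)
Your proof is correct. The paper itself gives no argument for this theorem (it is cited as quantum-information folklore from Heinosaari--Jivulescu--Nechita), and what you write is essentially that standard argument: extract a unit eigenvector $z_j$ with $\Pi_j z_j = z_j$ from $\|\Pi_j\|=1$, use $\sum_k\langle z_j|\Pi_k|z_j\rangle=1$ and positivity to get $\Pi_k z_j=0$ for $k\neq j$, deduce orthonormality of the $z_j$ (hence $n\leq d$), hit every point of $\Delta_n$ with the coherent superposition $\sum_j\sqrt{x_j}z_j$ (which also gives $\mathcal Q_\Pi^1=\Delta_n$), and read off $\Pi_j=|z_j\rangle\langle z_j|$ in the basis $(z_j)$ when $n=d$. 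Two minor remarks: the intermediate claim that a compact connected subset of $\Delta_n$ containing all vertices with full convex hull must equal $\Delta_n$ is false as stated (the boundary of a triangle is a counterexample), but you abandon it and it is superseded by the explicit superposition construction; and the detour through rational coordinates plus density is unnecessary, since $\langle v|\Pi_k|v\rangle=x_k$ with $v=\sum_j\sqrt{x_j}e^{i\theta_j}z_j$ holds verbatim for every $x\in\Delta_n$ because $\Pi_k z_j=\delta_{kj}z_j$ and the $z_j$ are orthonormal.
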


\section{The probability range similar to quantum state space}
\label{The probability range similar to quantum state space}

In order to provide a characterisation of the second case we need to make some preparations.
From  now on for $A,B\in\mathcal L_s(\mathbb C^d)$ we will denote their Hilbert-Schmidt inner product by $(A|B):=\tr(AB)$ and the induced norm of $A$ by $\|A\|_{HS}$. We also consider
 $\mathcal L_s^0(\mathbb C^d):=\{A\in\mathcal L_s(\mathbb C^d)|\tr A=0\}$, i.e.\ the ($d^2-1$)-dimensional subspace of traceless Hermitian operators. It is easy to see that
\begin{equation}\label{pi0}\pi_0:\mathcal L_s(\mathbb C^d)\ni A\mapsto A-(\tr A/d)I\in\mathcal L_s(\mathbb C^d)\end{equation}
is the orthogonal projection onto $\mathcal L_s^0(\mathbb C^d)$.

 \begin{Prop}\label{IC}The following conditions are equivalent:
 	\begin{enumerate}[i.]
 		\item $\Pi=(\Pi_j)_{j=1}^n$ is an IC-POVM;
 		\item $\lin(\Pi):=\lin\{\Pi_j:j=1,\ldots,n\}=\mathcal L_s(\mathbb C^d)$;
 		\item $\lin(\pi_0(\Pi)):=\lin\{\pi_0(\Pi_j):j=1,\ldots,n\}=\mathcal L_s^0(\mathbb C^d)$.
 	\end{enumerate}
 	 \end{Prop}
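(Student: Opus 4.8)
The plan is to translate each of the three conditions into a statement about the linear map $p_\Pi$ on $\mathcal L_s(\mathbb C^d)$ and its Hilbert--Schmidt orthogonal complement, using throughout the single structural fact that $I=\sum_{j=1}^n\Pi_j\in\lin(\Pi)$.

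First I would establish $(i)\Leftrightarrow(ii)$. Since $p_\Pi$ is linear, its kernel is $\ker p_\Pi=\{A\in\mathcal L_s(\mathbb C^d):\tr(A\Pi_j)=0\text{ for }j=1,\ldots,n\}=\lin(\Pi)^{\perp}$. Because $I\in\lin(\Pi)$, this kernel is contained in $I^{\perp}=\mathcal L_s^0(\mathbb C^d)$. On the other hand, $\Pi$ is IC precisely when $p_\Pi(\rho-\sigma)=0$ forces $\rho=\sigma$ for all states $\rho,\sigma$; as the set $\{\rho-\sigma:\rho,\sigma\in\mathcal S(\mathbb C^d)\}$ contains a neighbourhood of $0$ in $\mathcal L_s^0(\mathbb C^d)$ (given traceless Hermitian $H$, the operators $I/d\pm H/(2t)$ are states for $t>0$ large enough and differ by $H/t$), and hence linearly spans $\mathcal L_s^0(\mathbb C^d)$, informational completeness is equivalent to $\ker p_\Pi\cap\mathcal L_s^0(\mathbb C^d)=\{0\}$. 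Combining the two observations, $\Pi$ is IC iff $\lin(\Pi)^{\perp}=\{0\}$, i.e. iff $\lin(\Pi)=\mathcal L_s(\mathbb C^d)$.

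Next, for $(ii)\Leftrightarrow(iii)$, I would use the decomposition $\Pi_j=\pi_0(\Pi_j)+(\tr\Pi_j/d)\,I$. Since $I\in\lin(\Pi)$ this gives $\lin(\Pi)=\lin(\pi_0(\Pi))+\mathbb R I$, and the sum is direct because $\lin(\pi_0(\Pi))\subseteq\mathcal L_s^0(\mathbb C^d)$ while $I\notin\mathcal L_s^0(\mathbb C^d)$. Hence $\dim\lin(\Pi)=\dim\lin(\pi_0(\Pi))+1$. As $\dim\mathcal L_s(\mathbb C^d)=d^2=\dim\mathcal L_s^0(\mathbb C^d)+1$, condition (ii) holds iff $\lin(\pi_0(\Pi))$ is a $(d^2-1)$-dimensional subspace of the $(d^2-1)$-dimensional space $\mathcal L_s^0(\mathbb C^d)$, that is, iff $\lin(\pi_0(\Pi))=\mathcal L_s^0(\mathbb C^d)$, which is (iii).

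The argument is entirely elementary linear algebra, so I do not anticipate a genuine obstacle; the only points needing a word of care are the spanning property of the difference body of $\mathcal S(\mathbb C^d)$ and the repeated use of $I\in\lin(\Pi)$, which is exactly what forces $\ker p_\Pi$ into the traceless subspace in the first step and makes the dimension count close in the second.
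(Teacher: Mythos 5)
Your proof is correct, and it differs from the paper's in both halves, in ways worth noting. For $(i)\Leftrightarrow(ii)$ the paper simply cites an external reference (Heinosaari--Ziman, Prop.\ 3.51), whereas you prove it from scratch via the identity $\ker p_\Pi=\lin(\Pi)^{\perp}$ together with the observation that state differences span $\mathcal L_s^0(\mathbb C^d)$ (your $I/d\pm H/(2t)$ construction); this makes the statement self-contained, and your use of $I\in\lin(\Pi)$ to force $\ker p_\Pi\subset\mathcal L_s^0(\mathbb C^d)$ is exactly the right mechanism. For $(ii)\Leftrightarrow(iii)$ the paper works with orthogonal complements: it notes $(\lin(\Pi))^{\perp}\subset\mathcal L_s^0(\mathbb C^d)$ and then checks, by unwinding $(A|\pi_0(\Pi_j))=(A|\Pi_j)-\tr A\,\tr\Pi_j/d$ for traceless $A$, that $\mathcal L_s^0\cap(\lin(\pi_0(\Pi)))^{\perp}=\mathcal L_s\cap(\lin(\Pi))^{\perp}$, so one span is full iff the other is. You instead use the direct-sum decomposition $\lin(\Pi)=\lin(\pi_0(\Pi))\oplus\mathbb R I$ (directness from $\lin(\pi_0(\Pi))\subseteq\mathcal L_s^0$ and $I\notin\mathcal L_s^0$, the reverse inclusion again from $I\in\lin(\Pi)$) and close with a dimension count $d^2=(d^2-1)+1$. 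The two arguments are of comparable length; the paper's annihilator identity has the small advantage of meshing directly with the kernel description used for informational completeness (it literally identifies the obstruction spaces), while your decomposition-plus-dimension-count is arguably more transparent and avoids any inner-product computation in that step.
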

 \begin{proof}
 	The equivalence of (i) and (ii) can be found in \cite[Prop. 3.51]{HeiZim12}. In order to see the equivalence of (ii) and (iii), let us first observe that $(\lin(\Pi))^\perp\subset\mathcal L_s^0(\mathbb C^d)$ since $I\in\lin(\Pi)$. 	Thus
 \begin{align*}\mathcal L_s^0\cap(\lin(\pi_0(\Pi)))^\perp&=\{A\in\mathcal L_s^0:(A|\Pi_j)-\tr A\tr\Pi_j/d=0, j=1,\ldots,n\}\\
 	&=\{A\in\mathcal L_s^0:(A|\Pi_j)=0, j=1,\ldots,n\}=\mathcal L_s^0\cap(\lin(\Pi))^\perp=\mathcal L_s\cap(\lin(\Pi))^\perp. \qedhere
 \end{align*}
 \end{proof}
  The following definition gives a precise description of what we mean by `the same shape' in terms of similarity. Throughout the paper we use $\langle\cdot,\cdot\rangle$ and $\|\cdot\|$ to denote the standard inner product  and the induced norm on $\mathbb R^n$.
\begin{Df}
We say that $\mathcal {Q}_\Pi$ is \textit{similar} to quantum state space $\mathcal S(\mathbb C^d)$ if there exists $\alpha>0$ such that
\begin{equation}
\|p_\Pi(\rho)-p_\Pi(\sigma)\|^2=\alpha\|\rho-\sigma\|_{HS}^2\quad \textnormal{for all }\rho,\sigma\in\mathcal S(\mathbb C^d).
\end{equation}
\end{Df}
\begin{R} The above definition can be expressed equivalently as
\begin{equation}\label{iso}\langle p_\Pi(\rho)-c_\Pi,p_\Pi(\sigma)-c_\Pi\rangle=\alpha(\rho-\rho_{\ast}|\sigma-\rho_{\ast})
\end{equation}
for all $\rho,\sigma\in\mathcal S(\mathbb C^d)$, where $c_\Pi:=p_\Pi(\rho_{\ast})=(\tr\Pi_1/d,\ldots,\tr\Pi_n/d)$. (Note that $\rho-\rho_{\ast}=\pi_0(\rho)$ and $p_\Pi(\rho)-c_\Pi=p_\Pi(\pi_0(\rho))$.) In particular, $p_\Pi \vert_{\mathcal L_s^0(\mathbb C^d)}$ is a similarity with the similarity ratio $s=\sqrt{\alpha}$.
\end{R}
\begin{Df}
	We say that a POVM $\Pi$ is \emph{morphophoric} if $\mathcal Q_\Pi$ is similar to the quantum state space $\mathcal S(\mathbb C^d)$.	
\end{Df}
Let us recall some basic facts about finite tight frames. Let $V$ be a finite-dimensional Hilbert space with inner product $\langle\cdot|\cdot\rangle$ and let
$F:=\{f_1,\ldots,f_m\}\subset V$.
\begin{Df}	
	$F$ is a \textit{tight frame} if there exists $\alpha>0$ such that $$\alpha\|v\|^2=\sum_{i=1}^m|\langle v|f_i\rangle|^2\quad \textnormal{for all } v\in V.$$
	In such situation $\alpha$ is referred to as the \textit{frame bound}.
	The operator $S:=\sum_{i=1}^m|f_i\rangle\langle f_i|$ is called the \textit{frame operator}.
\end{Df}
The next theorem is a well-known fact in frame theory (see, e.g. \cite{Waldron}) that provides us with some other handful equalities characterising tight frames.


\begin{Th}\label{tf}The following conditions are equivalent:
\begin{enumerate}[i.]
\item $F$ is a tight frame with frame constant $\alpha$;
\item $\sum_{i=1}^m\langle u|f_i\rangle\langle f_i|v\rangle=\alpha \langle u|v\rangle$ for all $u,v\in V$;
\item $S=\alpha I$.
\end{enumerate}
\end{Th}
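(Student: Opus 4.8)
The plan is to route everything through the frame operator $S=\sum_{i=1}^m|f_i\rangle\langle f_i|$, using the elementary identity that for all $u,v\in V$ one has $\langle u|S|v\rangle=\sum_{i=1}^m\langle u|f_i\rangle\langle f_i|v\rangle$, and for $u=v$ this reads $\langle v|S|v\rangle=\sum_{i=1}^m|\langle v|f_i\rangle|^2$ since $\langle f_i|v\rangle=\overline{\langle v|f_i\rangle}$. With this in hand each of the three conditions becomes a statement purely about the operator $S$, and the equivalences follow by chasing the cycle (iii)$\Rightarrow$(ii)$\Rightarrow$(i)$\Rightarrow$(iii).

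First I would prove (iii)$\Rightarrow$(ii): if $S=\alpha I$, then $\sum_{i=1}^m\langle u|f_i\rangle\langle f_i|v\rangle=\langle u|S|v\rangle=\alpha\langle u|v\rangle$ for all $u,v\in V$. Then (ii)$\Rightarrow$(i) is immediate by specialising to $u=v$ and using $\langle v|f_i\rangle\langle f_i|v\rangle=|\langle v|f_i\rangle|^2$, which recovers exactly the tight-frame identity with constant $\alpha$.

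The only step requiring a word of justification is (i)$\Rightarrow$(iii). Condition (i) says $\langle v|S|v\rangle=\alpha\|v\|^2=\langle v|(\alpha I)|v\rangle$ for every $v\in V$, i.e. the operator $T:=S-\alpha I$ has vanishing quadratic form. Now $S$ is manifestly self-adjoint (each $|f_i\rangle\langle f_i|$ is), hence so is $T$; and on a complex Hilbert space the sesquilinear form $(u,v)\mapsto\langle u|T|v\rangle$ is determined by the quadratic form $v\mapsto\langle v|T|v\rangle$ via the polarisation identity, so the latter vanishing identically forces $T=0$, i.e. $S=\alpha I$. (Equivalently, one may diagonalise the self-adjoint $T$ and observe that all its eigenvalues must be zero.) This closes the cycle.

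I do not expect any real obstacle here: the statement is standard frame-theoretic bookkeeping. The single point worth flagging is that the passage from the scalar identity in (i) to the operator identity in (iii) relies on $V$ being a complex Hilbert space — or, in a real space, on the self-adjointness of $S$ — which is precisely why Hermiticity of $S$ is built into the setup.
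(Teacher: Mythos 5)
Your proof is correct; the paper itself gives no argument for this statement, simply citing it as standard frame theory (Waldron's book), and your cycle (iii)$\Rightarrow$(ii)$\Rightarrow$(i)$\Rightarrow$(iii) via the frame operator and polarisation is exactly the textbook route. One small remark: in the paper's application the space $V$ is $\mathcal L_s^0(\mathbb C^d)$ with the Hilbert--Schmidt product, which is a \emph{real} Hilbert space, so the operative case is precisely the one you flag, where the vanishing of the quadratic form of $T=S-\alpha I$ forces $T=0$ because $T$ is symmetric (real polarisation, or diagonalisation in finite dimension); your proof covers this, so nothing is missing.
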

Now we can proceed to the characterisation of morphophoric POVMs.
\begin{Th}\label{thmsim}
Let $\Pi$ be a POVM. Then $\mathcal {Q}_\Pi$ is similar to the quantum state space if and only if $\pi_0(\Pi)$ is a~tight (operator) frame in $\mathcal L_s^0(\mathbb C^d)$. Moreover, the frame bound and the square of the similarity ratio  coincide and are equal to \begin{equation}\label{squaresim}\alpha=\frac{1}{d^2-1}\left(\sum_{j=1}^n\tr\Pi_j^2-\frac{1}{d}\sum_{j=1}^n(\tr\Pi_j)^2\right).\end{equation}
\end{Th}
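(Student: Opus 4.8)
The plan is to reduce the statement about the geometry of $\mathcal{Q}_\Pi$ to the tight frame condition via the characterisations of tightness collected in Theorem~\ref{tf} and the remark following the definition of similarity. Recall from that remark that $\mathcal{Q}_\Pi$ is similar to $\mathcal{S}(\mathbb{C}^d)$ precisely when there exists $\alpha>0$ with $\langle p_\Pi(\rho)-c_\Pi, p_\Pi(\sigma)-c_\Pi\rangle = \alpha(\rho-\rho_\ast|\sigma-\rho_\ast)$ for all states $\rho,\sigma$, and that $p_\Pi(\rho)-c_\Pi = p_\Pi(\pi_0(\rho))$ while $\rho-\rho_\ast=\pi_0(\rho)$. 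So the first step is to rewrite the left-hand side: by the Born rule and self-adjointness of $\pi_0$ (it is an orthogonal projection), for $A=\pi_0(\rho)$, $B=\pi_0(\sigma)\in\mathcal{L}_s^0(\mathbb{C}^d)$ we have
\begin{equation*}
\langle p_\Pi(A), p_\Pi(B)\rangle = \sum_{j=1}^n (A|\Pi_j)(\Pi_j|B) = \sum_{j=1}^n (A|\pi_0(\Pi_j))(\pi_0(\Pi_j)|B),
\end{equation*}
where the last equality uses $(A|\Pi_j)=(A|\pi_0(\Pi_j))$ for traceless $A$, which holds because $\pi_0$ is self-adjoint and fixes $A$. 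Thus the similarity condition becomes exactly
\begin{equation*}
\sum_{j=1}^n (A|\pi_0(\Pi_j))(\pi_0(\Pi_j)|B) = \alpha\,(A|B)\qquad\text{for all }A,B\in\mathcal{L}_s^0(\mathbb{C}^d),
\end{equation*}
which is condition (ii) of Theorem~\ref{tf} applied to the family $\pi_0(\Pi)=\{\pi_0(\Pi_j)\}_{j=1}^n$ in the Hilbert space $V=\mathcal{L}_s^0(\mathbb{C}^d)$. This gives the equivalence in both directions.

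There is one subtlety to dispatch: the definition of a tight frame implicitly requires the $f_i$ to span $V$ (otherwise no $\alpha>0$ can work on $V^\perp\cap V$), and similarly the similarity ratio must be strictly positive. I would note that either the frame-type identity above with $\alpha>0$ forces $\lin(\pi_0(\Pi))=\mathcal{L}_s^0(\mathbb{C}^d)$ directly (plug in $A=B$ orthogonal to all $\pi_0(\Pi_j)$), or invoke Proposition~\ref{IC}: morphophoricity makes $p_\Pi|_{\mathcal{L}_s^0(\mathbb{C}^d)}$ injective (a similarity), hence $\Pi$ is IC, hence by Proposition~\ref{IC}.(iii) the $\pi_0(\Pi_j)$ span $\mathcal{L}_s^0(\mathbb{C}^d)$; conversely a tight frame is spanning by definition. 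So the span condition is automatic on both sides and need not be carried as an extra hypothesis.

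It remains to compute the frame bound. Here I would use condition (iii) of Theorem~\ref{tf}: the frame operator $S=\sum_{j=1}^n |\pi_0(\Pi_j)\rangle\langle\pi_0(\Pi_j)|$ on the $(d^2-1)$-dimensional space $\mathcal{L}_s^0(\mathbb{C}^d)$ equals $\alpha I$, so taking the trace gives $(d^2-1)\alpha = \sum_{j=1}^n \|\pi_0(\Pi_j)\|_{HS}^2$. Then $\|\pi_0(\Pi_j)\|_{HS}^2 = \tr(\Pi_j - (\tr\Pi_j/d)I)^2 = \tr\Pi_j^2 - (\tr\Pi_j)^2/d$ by expanding the square, which yields formula \eqref{squaresim}. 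The only genuinely delicate point in the whole argument is making sure the passage from $\Pi_j$ to $\pi_0(\Pi_j)$ inside the inner products is legitimate — i.e. that restricting everything to the traceless subspace loses no information — but this is exactly what the identity $(A|\Pi_j)=(A|\pi_0(\Pi_j))$ for $A\in\mathcal{L}_s^0(\mathbb{C}^d)$ guarantees, so there is no real obstacle, only bookkeeping.
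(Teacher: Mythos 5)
Your proposal is correct and follows essentially the same route as the paper's proof: rewrite the shifted inner products so that the frame operator $S=\sum_{j=1}^n|\pi_0(\Pi_j))(\pi_0(\Pi_j)|$ appears, identify morphophoricity with $S=\alpha\mathcal I_0$ on $\mathcal L_s^0(\mathbb C^d)$ via Theorem~\ref{tf}, and take the trace of that identity to get \eqref{squaresim}, using $\|\pi_0(\Pi_j)\|_{HS}^2=\tr\Pi_j^2-(\tr\Pi_j)^2/d$. The only step you gloss over is the passage from the identity tested on $\pi_0(\rho),\pi_0(\sigma)$ with $\rho,\sigma\in\mathcal S(\mathbb C^d)$ to the identity for all $A,B\in\mathcal L_s^0(\mathbb C^d)$, which requires noting that $\textnormal{span}\{\rho-\rho_{\ast}:\rho\in\mathcal S(\mathbb C^d)\}=\mathcal L_s^0(\mathbb C^d)$ and invoking bilinearity (the opening line of the paper's argument); this is routine, and your additional remark that the spanning of $\pi_0(\Pi)$ is automatic is correct but not needed, since the frame identity with $\alpha>0$ already enforces it.
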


\begin{proof}
Since $\textnormal{span}\{\rho-\rho_{\ast}:\rho\in\mixed\}=\mathcal L_s^0(\mathbb C^d)$, it follows that  \eqref{iso} holds for all quantum states if and only if it holds for all $\rho,\sigma\in\mathcal L_s(\mathbb C^d)$ such that $\tr\rho=\tr\sigma=1$. Thus, by  linearity, we can extend this condition to arbitrary Hermitian operators by writing it equivalently  as
\begin{equation}\label{simall}\langle p_\Pi(A)-\tr Ac_\Pi,p_\Pi(B)-\tr Bc_\Pi\rangle=\alpha( A-(\tr A/d)I|B-(\tr B/d)I)\quad \textnormal{for all }A,B\in\mathcal L_s(\mathbb C^d).
\end{equation} Transforming the lhs, we get
\begin{align*}\langle p_\Pi(A)&-\tr Ac_\Pi,p_\Pi(B)-\tr Bc_\Pi\rangle=\sum_{j=1}^n(\tr(A\Pi_j)-\tr A\tr\Pi_j/d)(\tr(B\Pi_j)-\tr B\tr\Pi_j/d)\\
&=\sum_{j=1}^n\tr[(A-(\tr A/d)I)(\Pi_j-(\tr\Pi_j/d)I)]\tr[(B-(\tr B/d)I)(\Pi_j-(\tr\Pi_j/d)I)]\\
&=\sum_{j=1}^n(A-(\tr A/d)I|\pi_0(\Pi_j))(\pi_0(\Pi_j)|B-(\tr B/d)I)\\
&=(A-(\tr A/d)I|S|B-(\tr B/d)I),
\end{align*} where $S:=\sum_{j=1}^n|\pi_0(\Pi_j))(\pi_0(\Pi_j)|$. Thus,  \eqref{simall} is equivalent to
\begin{equation}
(A|S|B)=\alpha(A|B) \quad \text{for all } A,B\in\mathcal L_s^0(\mathbb C^d),
\end{equation}
which is equivalent to $\pi_0(\Pi)$ being a tight operator frame in $\mathcal L_s^0(\mathbb C^d)$ with the frame bound $\alpha$. In order to calculate $\alpha$, let us take the trace on both sides of the equation
\begin{equation}
\sum_{j=1}^n|\pi_0(\Pi_j))(\pi_0(\Pi_j)|=\alpha\mathcal I_0,
\end{equation}
where $\mathcal I_0$ denotes the identity superoperator on $\mathcal L_s^0(\mathbb C^d)$.
On the rhs we get $\alpha(d^2-1)$, since $\dim\mathcal L_s^0(\mathbb C^d)=d^2-1$. On the lhs we get
\begin{align*}
\sum_{j=1}^n(\pi_0(\Pi_j)|\pi_0(\Pi_j))&=\sum_{j=1}^n\left[(\Pi_j|\Pi_j)-\frac{\tr\Pi_j}{d}(I|\Pi_j)-\frac{\tr\Pi_j}{d}(\Pi_j|I)+\frac{(\tr\Pi_j)^2}{d^2}(I|I)\right]\\
&=\sum_{j=1}^n\tr\Pi_j^2-\frac{1}{d}\sum_{j=1}^n(\tr\Pi_j)^2,
\end{align*}
which completes the proof.
\end{proof}
It is easy to see that if $\Pi^1,\Pi^2,\ldots,\Pi^m$ are morphophoric POVMs with squares of the similarity ratios equal to $\alpha_1,\alpha_2,\ldots,\alpha_m$, then also $\Pi:=(t_1\Pi^1)\cup(t_2\Pi^2)\cup \ldots\cup (t_m\Pi^m)$ is a morphophoric POVM for any $t_1,t_2,\ldots,t_m\geq0$ such that $t_1+t_2+\ldots+t_m=1$. In such case the square of the similarity ratio for $\Pi$ is equal to $\alpha=t_1\alpha_1+t_2\alpha_2+\ldots+t_m\alpha_m$.

The following propositions not only give us an insight into the structure of the set of morphophoric POVMs on $\mathbb C^d$ with a fixed number of elements $n$ but also provide us with  examples of such POVMs. The first one states that an addition of classical noise  $q\mathbb I=(q_j I)_{j=1}^n$, where $q\in\Delta_n$, preserves the morphophoricity.
\begin{Prop}
	Let $\Pi=(\Pi_j)_{j=1}^n$ be a morphophoric POVM and let $q\in\Delta_n$.  Then $\Pi_{\lambda,q}:=\lambda\Pi+(1-\lambda)q \mathbb I$ is also a morphophoric POVM for $\lambda\in(0,1]$. In such case, the square of the  similarity ratio for $\Pi_{\lambda,q}$	is equal to $\lambda^2\alpha$, where $\alpha$ is the square of the similarity ratio for $\Pi$.
\end{Prop}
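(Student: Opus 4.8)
The plan is to reduce everything to Theorem~\ref{thmsim}: a POVM is morphophoric precisely when the projected effects $\pi_0(\Pi_j)$ form a tight operator frame in $\mathcal L_s^0(\mathbb C^d)$, with the square of the similarity ratio equal to the frame bound. So I would first compute the projected effects of $\Pi_{\lambda,q}$. Since $\pi_0$ is linear and $\pi_0(q_j I) = q_j I - (q_j \tr I / d) I = 0$, the noise term disappears under the projection: $\pi_0((\Pi_{\lambda,q})_j) = \pi_0(\lambda \Pi_j + (1-\lambda) q_j I) = \lambda\, \pi_0(\Pi_j)$. This is the one genuine observation in the argument, and it is the reason classical noise is ``invisible'' to morphophoricity.

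Next I would invoke the frame operator. Writing $S$ for the frame operator of $\pi_0(\Pi)$ on $\mathcal L_s^0(\mathbb C^d)$, which by hypothesis and Theorem~\ref{tf}.(iii) equals $\alpha \mathcal I_0$, the frame operator of $\pi_0(\Pi_{\lambda,q})$ is
\begin{equation*}
S_{\lambda,q} := \sum_{j=1}^n |\lambda\,\pi_0(\Pi_j))(\lambda\,\pi_0(\Pi_j)| = \lambda^2 \sum_{j=1}^n |\pi_0(\Pi_j))(\pi_0(\Pi_j)| = \lambda^2 S = \lambda^2 \alpha\, \mathcal I_0.
\end{equation*}
For $\lambda \in (0,1]$ we have $\lambda^2 \alpha > 0$, so $\pi_0(\Pi_{\lambda,q})$ is again a tight frame in $\mathcal L_s^0(\mathbb C^d)$, now with frame bound $\lambda^2 \alpha$. (One should also check that $\Pi_{\lambda,q}$ is a legitimate POVM: each $(\Pi_{\lambda,q})_j = \lambda \Pi_j + (1-\lambda) q_j I$ is a nonzero positive operator since $\lambda > 0$ and $q \in \Delta_n$, and $\sum_j (\Pi_{\lambda,q})_j = \lambda I + (1-\lambda) I = I$.) Applying Theorem~\ref{thmsim} in the reverse direction yields that $\mathcal Q_{\Pi_{\lambda,q}}$ is similar to $\mathcal S(\mathbb C^d)$, i.e.\ $\Pi_{\lambda,q}$ is morphophoric, with square of the similarity ratio equal to the frame bound $\lambda^2 \alpha$.

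There is essentially no obstacle here; the proposition is a direct corollary of Theorem~\ref{thmsim} once one notices that $\pi_0$ annihilates the identity. The only point requiring a word of care is the degenerate case $\lambda = 0$, which is correctly excluded: then $\Pi_{\lambda,q} = q\mathbb I$ has all projected effects zero, the ``frame bound'' would be $0$, and $\mathcal Q_{\Pi_{\lambda,q}}$ collapses to the single point $q$ rather than being similar to $\mathcal S(\mathbb C^d)$. Alternatively, one could present the same argument purely in terms of the similarity identity~\eqref{iso}, observing that $p_{\Pi_{\lambda,q}}(\rho) - c_{\Pi_{\lambda,q}} = \lambda(p_\Pi(\rho) - c_\Pi)$ for every state $\rho$ (again because the noise contributes the state-independent constant $(1-\lambda)q$), so that the inner products on the left-hand side simply scale by $\lambda^2$; but routing through the frame operator is cleaner and matches the proof of Theorem~\ref{thmsim}.
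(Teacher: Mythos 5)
Your argument is correct, but it takes a different route from the paper. You go through Theorem~\ref{thmsim}: since $\pi_0$ annihilates multiples of the identity, $\pi_0(\lambda\Pi_j+(1-\lambda)q_jI)=\lambda\,\pi_0(\Pi_j)$, so the frame operator scales by $\lambda^2$ and the new frame bound, hence the square of the similarity ratio, is $\lambda^2\alpha$. The paper instead argues in one line at the level of probability ranges: since $p_{\Pi_{\lambda,q}}(\rho)=\lambda\,p_\Pi(\rho)+(1-\lambda)q$ for every state $\rho$, one has $\mathcal Q_{\Pi_{\lambda,q}}=\lambda\,\mathcal Q_\Pi+(1-\lambda)q$, i.e.\ the new range is the image of $\mathcal Q_\Pi$ under the homothety with centre $q$ and ratio $\lambda$; composing this with the given similarity of ratio $\sqrt{\alpha}$ immediately gives morphophoricity with ratio $\lambda\sqrt{\alpha}$. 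This is essentially the alternative you sketch in your last sentence (your identity $p_{\Pi_{\lambda,q}}(\rho)-c_{\Pi_{\lambda,q}}=\lambda\bigl(p_\Pi(\rho)-c_\Pi\bigr)$ is the same homothety in disguise), so you were one step away from the paper's shorter argument. What each approach buys: the paper's is more elementary and avoids invoking the frame machinery altogether, while yours stays uniformly within the tight-frame characterisation that organises the whole section, makes the frame bound $\lambda^2\alpha$ explicit, and has the minor virtue of actually checking that $\Pi_{\lambda,q}$ is a legitimate POVM (positivity, non-vanishing, summing to $I$), which the paper leaves tacit. Your remark about excluding $\lambda=0$ is also apt.
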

\begin{proof}
	It is enough to observe that $\mathcal Q_{\lambda\Pi+(1-\lambda)q\mathbb I}=\lambda\mathcal Q_\Pi+(1-\lambda)q$, which is a homothety (therefore a~similarity) with the centre at $q$ and the ratio equal to $\lambda$.		
\end{proof}
\begin{Df}
	We say that a POVM $\Pi$ is \emph{boundary} if   $\Pi_j$ has at least one eigenvalue equal to zero for every $j\in\{1,\ldots,n\}$.
\end{Df}

The next proposition states that every morphophoric POVM is a unique convex combination of a~boundary morphophoric POVM with some classical noise.
\begin{Prop}
	Let $\Pi$ be a morphophoric POVM which is not boundary. Then there exists a~unique boundary morphophoric POVM $E=(E_j)_{j=1}^n$ such that $\Pi=\lambda E+(1-\lambda)q\mathbb I$ for some $\lambda\in(0,1)$ and $q\in\Delta_n$.	
\end{Prop}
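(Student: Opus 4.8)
The plan is to peel off from each effect $\Pi_j$ the largest multiple of the identity it contains and to collect these pieces into the classical--noise part. Let $\mu_j$ denote the smallest eigenvalue of $\Pi_j$, so that $\mu_j\ge0$, the operator $\Pi_j-\mu_jI$ is positive semidefinite, and its smallest eigenvalue is $0$. Since $\Pi$ is not boundary, $\mu_{j_0}>0$ for some $j_0$, hence $\sum_{j=1}^n\mu_j>0$, and we may set
\[ 1-\lambda:=\sum_{j=1}^n\mu_j,\qquad q_j:=\frac{\mu_j}{1-\lambda},\qquad E_j:=\frac{1}{\lambda}\bigl(\Pi_j-\mu_jI\bigr), \]
the last definition being legitimate once we verify $\lambda>0$; note already that $\lambda=1-\sum_j\mu_j<1$, and that $\Pi_j=\lambda E_j+(1-\lambda)q_jI$ by construction. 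Uniqueness is now automatic: in any decomposition $\Pi_j=\lambda'E_j'+(1-\lambda')q_j'I$ of the required form ($E'$ a boundary POVM, $\lambda'\in(0,1)$, $q'\in\Delta_n$), the effect $E_j'=(\lambda')^{-1}\bigl(\Pi_j-(1-\lambda')q_j'I\bigr)$ is positive semidefinite and has $0$ as an eigenvalue, so its smallest eigenvalue is $0$; since that smallest eigenvalue equals $(\lambda')^{-1}(\mu_j-(1-\lambda')q_j')$, we get $(1-\lambda')q_j'=\mu_j$ for every $j$. Summing over $j$ and using $\sum_jq_j'=1$ yields $1-\lambda'=\sum_j\mu_j=1-\lambda$, whence $\lambda'=\lambda$, $q'=q$ and $E'=E$.

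For existence it remains to check that these objects have the stated properties. Clearly $q\in\Delta_n$, since $q_j\ge0$ and $\sum_jq_j=1$. Granting $\lambda>0$, the family $E=(E_j)_{j=1}^n$ is a boundary POVM: each $E_j\succeq0$, one has $\sum_jE_j=\lambda^{-1}\bigl(I-(1-\lambda)I\bigr)=I$, and the smallest eigenvalue of $E_j$ is $\lambda^{-1}(\mu_j-\mu_j)=0$, so every $E_j$ has a zero eigenvalue. It is moreover morphophoric: as $\pi_0(I)=0$ we get $\pi_0(E_j)=\lambda^{-1}\pi_0(\Pi_j)$, so $\pi_0(E)$ is the tight operator frame $\pi_0(\Pi)$ rescaled by $1/\lambda$, hence again a tight operator frame (with bound $\alpha/\lambda^2$), and Theorem~\ref{thmsim} gives morphophoricity of $E$. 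Together with $\lambda\in(0,1)$ this is exactly the asserted decomposition.

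The only step that really uses the hypothesis is the strict inequality $\lambda>0$, equivalently $\sum_j\mu_j<1$, and this is where morphophoricity enters. The positive semidefinite operators $\Pi_j-\mu_jI$ sum to $\bigl(1-\sum_j\mu_j\bigr)I=\lambda I$; were $\lambda=0$, a sum of positive semidefinite operators would vanish, forcing $\Pi_j=\mu_jI$ and hence $\pi_0(\Pi_j)=0$ for every $j$. But by Theorem~\ref{thmsim} the family $\pi_0(\Pi)$ is a tight frame in $\mathcal L_s^0(\mathbb C^d)$, a nonzero space for $d\ge2$, and a tight frame there cannot consist solely of zero vectors --- a contradiction. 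Hence $\lambda>0$. Thus the exclusion of the degenerate case $\lambda=0$ by means of morphophoricity is the heart of the argument, and everything else is bookkeeping and routine verification.
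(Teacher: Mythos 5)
Your proof is correct and follows essentially the same route as the paper: you peel off the smallest eigenvalue $\mu_j$ of each effect, set $\lambda=1-\sum_j\mu_j$, $q_j=\mu_j/(1-\lambda)$, $E_j=(\Pi_j-\mu_jI)/\lambda$, obtain morphophoricity from $\pi_0(E_j)=\pi_0(\Pi_j)/\lambda$, and prove uniqueness by matching the smallest eigenvalues, exactly as in the paper's argument. The only minor difference is in how $\sum_j\mu_j<1$ is excluded: you rule out the degenerate case via the observation that a tight frame in $\mathcal L_s^0(\mathbb C^d)$ cannot consist solely of zero vectors, whereas the paper invokes $\Pi\neq(I/n)_{j=1}^n$; your version is, if anything, a more carefully justified treatment of the same point.
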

\begin{proof}
	Let $0\leq\lambda_{j,1}\leq...\leq\lambda_{j,d}$ be the eigenvalues of $\Pi_j$. As $\Pi$ is not boundary,  $\sum_{j=1}^n\lambda_{j,1}>0$. Moreover, from the fact that $\sum_{j=1}^n\sum_{k=1}^d\lambda_{j,k}=\sum_{j=1}^n\tr\Pi_j=d$ and $\Pi\neq (I/n)_{j=1}^n$ we obtain $\sum_{j=1}^n\lambda_{j,1}<1$. Let $q_j:=\lambda_{j,1}/(\sum_{j=1}^n\lambda_{j,1})$ and $\lambda := 1-\sum_{j=1}^n\lambda_{j,1}$. Then $q\in\Delta_n$ and $\lambda\in(0,1)$. Put $E_j:=(\Pi_j+(\lambda-1)q_jI)/\lambda$. Then the eigenvalues of $E_j$ are of the form $\mu_{j,k}:=(\lambda_{j,k}+(\lambda-1)q_j)/\lambda=(\lambda_{j,k}-\lambda_{j,1})/\lambda\geq 0$ and so $\mu_{j,1}=0$. Clearly, $\sum_{j=1}^nE_j=I$ and $\pi_0(E_j)=\pi_0(\Pi_j)/\lambda$. Thus, $E=(E_j)_{j=1}^n$ is a boundary morphophoric POVM.
	
	To prove uniqueness, let us assume that there exist a boundary morphophoric POVM $E'$, $\lambda'\in(0,1)$ and $q'\in\Delta_n$ such that $\Pi=\lambda'E'+(1-\lambda')q'\mathbb I$. Let $\mu_{j,k}':=(\lambda_{j,k}+(\lambda'-1)q'_j)/\lambda'$ denote the $k$-th  eigenvalue of $E_j'$. Then $\mu_{j,1}'=0=\mu_{j,1}$ and, in consequence, $(\lambda'-1)q_j'=(\lambda-1)q_j$ for  $j=1,\ldots,n$. Summing over $j$, we obtain $\lambda'=\lambda$, and so $q_j'=q_j$ for $j=1,\ldots,n$. 	
\end{proof}
A similar reasoning provides us with a \emph{dual} boundary morphophoric POVM:
\begin{Prop}
	Let $\Pi$ be a morphophoric POVM. Then there exists a unique boundary morphophoric POVM $\tilde{E}$ such that $\lambda \Pi+(1-\lambda)\widetilde{E}=q\mathbb I$ for some $\lambda\in(0,1)$ and $q\in\Delta_n$.
\end{Prop}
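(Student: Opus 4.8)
The plan is to mirror the construction in the proof of the preceding proposition, but now to solve for $\tilde E$ in the identity $\lambda\Pi+(1-\lambda)\tilde E=q\mathbb I$; that is, to set $\tilde E_j:=(q_jI-\lambda\Pi_j)/(1-\lambda)$ and to choose $\lambda$ and $q$ so that $\tilde E=(\tilde E_j)_{j=1}^n$ becomes a boundary morphophoric POVM. Writing $0\leq\lambda_{j,1}\leq\cdots\leq\lambda_{j,d}$ for the eigenvalues of $\Pi_j$, the eigenvalues of $\tilde E_j$ are $(q_j-\lambda\lambda_{j,k})/(1-\lambda)$, so the smallest of them corresponds to the \emph{largest} eigenvalue $\lambda_{j,d}$ of $\Pi_j$ (the roles of largest and smallest are thus swapped with respect to the previous proposition). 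Hence the requirements $\tilde E_j\geq0$ and $\tilde E_j$ singular force $q_j=\lambda\lambda_{j,d}$, and then $\sum_{j}q_j=1$ forces $\lambda=1/\sum_{j=1}^n\lambda_{j,d}$ and $q_j=\lambda_{j,d}/\sum_{i=1}^n\lambda_{i,d}$.

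The first thing I would check is that this $\lambda$ lies in $(0,1)$. Positivity is clear; for $\lambda<1$ one needs $\sum_j\lambda_{j,d}>1$. Since $\lambda_{j,d}=\|\Pi_j\|\geq\tr\Pi_j/d$, with equality precisely when $\Pi_j$ is a scalar multiple of $I$, summing gives $\sum_j\lambda_{j,d}\geq\tr(\sum_j\Pi_j)/d=1$; if equality held then every $\pi_0(\Pi_j)$ would vanish, which is impossible because, by Theorem~\ref{thmsim}, $\pi_0(\Pi)$ is a tight frame in the nontrivial space $\mathcal L_s^0(\mathbb C^d)$. Hence $\lambda\in(0,1)$ and $q\in\Delta_n$.

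Next I would verify that $\tilde E$ has all the required properties. That $\sum_j\tilde E_j=(I-\lambda I)/(1-\lambda)=I$ and $\lambda\Pi_j+(1-\lambda)\tilde E_j=q_jI$ are immediate; each $\tilde E_j$ is positive semidefinite with smallest eigenvalue $0$ by the choice of $q_j$, so $\tilde E$ is boundary; and since $\pi_0(\tilde E_j)=-\tfrac{\lambda}{1-\lambda}\,\pi_0(\Pi_j)$ (because $\pi_0(q_jI)=0$), the family $\pi_0(\tilde E)$ is a rescaling of the tight frame $\pi_0(\Pi)$ and hence again a tight frame, so $\tilde E$ is morphophoric by Theorem~\ref{thmsim}, with square of the similarity ratio $\bigl(\lambda/(1-\lambda)\bigr)^2\alpha$. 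For uniqueness I would argue as in the preceding proposition: if $\lambda'\Pi+(1-\lambda')\tilde E'=q'\mathbb I$ with $\tilde E'$ boundary morphophoric and $\lambda'\in(0,1)$, $q'\in\Delta_n$, then $\tilde E'_j=(q'_jI-\lambda'\Pi_j)/(1-\lambda')$ has smallest eigenvalue $(q'_j-\lambda'\lambda_{j,d})/(1-\lambda')=0$, so $q'_j=\lambda'\lambda_{j,d}$; summing and using $\sum_jq'_j=1$ gives $\lambda'=1/\sum_j\lambda_{j,d}=\lambda$, whence $q'=q$ and $\tilde E'=\tilde E$.

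The only real obstacle is the bookkeeping in the first step — recognising the swap of the extreme eigenvalues and correctly extracting the strict inequality $\sum_j\lambda_{j,d}>1$ from morphophoricity via Theorem~\ref{thmsim}; everything else is routine linear algebra paralleling the previous proof. (As in the preceding proposition, this tacitly assumes $d\geq2$ and sets aside the degenerate situation in which some $\Pi_j$ is a multiple of $I$, which would force $\tilde E_j=0$.)
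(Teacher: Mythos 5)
Your proof is correct and follows essentially the same route as the paper's: the same choice $\lambda=1/\sum_j\lambda_{j,d}$, $q_j=\lambda\lambda_{j,d}$, the same eigenvalue computation showing $\widetilde{E}$ is boundary, the observation $\pi_0(\widetilde{E_j})=-\tfrac{\lambda}{1-\lambda}\pi_0(\Pi_j)$ for morphophoricity, and the same uniqueness argument via the vanishing smallest eigenvalue. Your added justification that $\sum_j\lambda_{j,d}>1$ (excluding the case where every $\Pi_j$ is proportional to $I$, which would contradict tightness of $\pi_0(\Pi)$) is in fact slightly more careful than the paper, which only invokes $\sum_j\lambda_{j,d}\geq 1$.
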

\begin{proof}
	Let $0\leq\lambda_{j,1}\leq\ldots\leq\lambda_{j,d}$ be the eigenvalues of $\Pi_j$. Let $q_j:=\lambda_{j,d}/(\sum_{j=1}^n\lambda_{j,d})$ and $\lambda:=1/(\sum_{j=1}^n\lambda_{j,d})$. Then $q\in\Delta_n$ and $\lambda\in (0,1)$ since $\sum_{j=1}^n\lambda_{j,d}\geq 1$. Put  $\widetilde{E_j}:=(1-\lambda)^{-1}(-\lambda \Pi_j+q_jI)$. The eigenvalues of $\widetilde{E_j}$ are of the form $\widetilde{\lambda_{j,k}}:=(-\lambda\lambda_{j,k}+q_j)/(1-\lambda)=(\lambda/(1-\lambda))(\lambda_{j,d}-\lambda_{j,k})\geq 0$ and $\widetilde{\lambda_{j,d}}=0$. Obviously, $\sum_{j=1}^n\widetilde{E_j}=I$ and $\pi_0(\widetilde{E_j})=\frac{-\lambda}{1-\lambda}\pi_0(E_j)$. Thus, $\widetilde{E}=(\widetilde{E_j})_{j=1}^n$ is a boundary morphophoric POVM.
	
	To prove uniqueness, we proceed in the same way as in the previous proposition. Let us assume that there exist a boundary morphophoric POVM $\widetilde{E'}$, $\lambda'\in(0,1)$ and $q'\in\Delta_n$ such that $\lambda'\Pi+(1-\lambda')\widetilde{E'}=q'\mathbb I$. Let $\widetilde{\lambda_{j,k}'}:= (-\lambda'\lambda_{j,k}+q'_j)/(1-\lambda')$ denote the $k$-th  eigenvalue of $\widetilde{E_j'}$. Then $\widetilde{\lambda_{j,d}'}=0=\widetilde{\lambda_{j,d}}$ and, in consequence, $q_j'/\lambda'=q_j/\lambda$ for  $j=1,\ldots,n$. Summing over $j$, we obtain $\lambda'=\lambda$, and so $q_j'=q_j$ for $j=1,\ldots,n$. 	
\end{proof}

The condition of $\pi_0(\Pi)$ being a tight operator frame in $\mathcal L_s^0(\mathbb C^d)$ brings to mind the definition of a tight IC-POVM given by Scott in \cite{Scott06}.
Let us introduce the following notation: $P_j:=\Pi_j/\tr\Pi_j$, $P:=(P_j)_{j=1}^n$ and
$\mathcal F:=\sum_{j=1}^n\tr\Pi_j|P_j)( P_j|=\sum_{j=1}^n\frac{1}{\tr\Pi_j}|\Pi_j)(\Pi_j|$.
\begin{Df} A POVM
$\Pi$ is a \textit{tight IC-POVM} if $\pi_0(P)$ is a tight operator frame with respect to the trace measure $\tau$ ($\tau(j)=\tr\Pi_j$, $j=1,\ldots,n$)
in $\mathcal L_s^0(\mathbb C^d)$, i.e. if there exists $\beta>0$ such that
$$\sum_{j=1}^n \tr\Pi_j|\pi_0(P_j))(\pi_0(P_j)|=\beta\mathcal I_0.$$
\end{Df}
\begin{R} The definition of a tight IC-POVM can be expressed equivalently in the following way:
$$\mathcal F=\beta\mathcal I+\frac{1-\beta}{d}|I)(I|,$$ where $\mathcal I$ stands for the identity superoperator on $\mathcal L_s(\mathbb C^d)$.
Note that $\beta$ can be calculated by taking the trace on both sides of one of the above equations:
$$\beta=\frac{1}{d^2-1}\left(\sum_{j=1}^n\frac{\tr\Pi_j^2}{\tr\Pi_j}-1\right).$$
\end{R}
Applying Theorem \ref{thmsim}, we get the following statement as a simple observation.
\begin{Cor}
\label{Scott}
 Let $\Pi$ be a POVM consisting of effects of equal trace.
Then $\Delta_\Pi$ is similar to the quantum state space if and only if $\Pi$ is a tight IC-POVM.
The square of the similarity ratio $\alpha$ is then equal to $\frac{1}{d^2-1}\left(\sum_{j=1}^n\tr\Pi_j^2-\frac{d}{n}\right)$.
\end{Cor}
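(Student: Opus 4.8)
The statement is essentially immediate from Theorem~\ref{thmsim}: the only work is to check that, under the equal-trace hypothesis, the tight-operator-frame condition on $\pi_0(\Pi)$ appearing there coincides, up to a harmless positive rescaling, with the tight-IC-POVM condition on $\pi_0(P)$, and then to specialise the formula \eqref{squaresim} for the frame bound. So the plan is: first normalise using equal trace, then match the two frame operators, then invoke Theorem~\ref{thmsim} and plug in.

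First I would use the normalisation forced by equal trace. If $\tr\Pi_j=t$ for all $j$, summing $\sum_{j=1}^n\Pi_j=I$ and taking traces gives $nt=d$, so $t=d/n$; hence $P_j=\Pi_j/\tr\Pi_j=(n/d)\Pi_j$, and by linearity of $\pi_0$ we get $\pi_0(P_j)=(n/d)\,\pi_0(\Pi_j)$ for $j=1,\ldots,n$. Substituting this into the defining identity of a tight IC-POVM, $\sum_{j=1}^n\tr\Pi_j\,|\pi_0(P_j))(\pi_0(P_j)|=\beta\mathcal I_0$, the left-hand side becomes $\tfrac{d}{n}\cdot\tfrac{n^2}{d^2}\sum_{j=1}^n|\pi_0(\Pi_j))(\pi_0(\Pi_j)|=\tfrac{n}{d}\,S$, where $S:=\sum_{j=1}^n|\pi_0(\Pi_j))(\pi_0(\Pi_j)|$ is exactly the frame operator of $\pi_0(\Pi)$ from the proof of Theorem~\ref{thmsim}. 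Thus $\Pi$ is a tight IC-POVM with constant $\beta$ if and only if $S=\tfrac{\beta d}{n}\mathcal I_0$, i.e.\ if and only if $\pi_0(\Pi)$ is a tight operator frame in $\mathcal L_s^0(\mathbb C^d)$, the bounds being related by $\alpha=\beta d/n$.

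Combining this equivalence with Theorem~\ref{thmsim}, which says that $\mathcal Q_\Pi$ is similar to the quantum state space exactly when $\pi_0(\Pi)$ is a tight operator frame, yields the claimed characterisation. For the value of the similarity ratio I would simply feed $\tr\Pi_j=d/n$ into \eqref{squaresim}: then $\sum_{j=1}^n(\tr\Pi_j)^2=n\,(d/n)^2=d^2/n$, so $\tfrac1d\sum_{j=1}^n(\tr\Pi_j)^2=d/n$, and \eqref{squaresim} collapses to $\alpha=\tfrac{1}{d^2-1}\bigl(\sum_{j=1}^n\tr\Pi_j^2-\tfrac{d}{n}\bigr)$, as asserted. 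There is no genuine obstacle here; the only point that deserves care is tracking the scalar relating $\alpha$ and $\beta$, and one can cross-check it against the formula $\beta=\tfrac{1}{d^2-1}\bigl(\sum_{j=1}^n\tfrac{\tr\Pi_j^2}{\tr\Pi_j}-1\bigr)$ from the preceding remark, which with $\tr\Pi_j=d/n$ indeed gives $\beta d/n=\tfrac{1}{d^2-1}\bigl(\sum_{j=1}^n\tr\Pi_j^2-\tfrac{d}{n}\bigr)=\alpha$.
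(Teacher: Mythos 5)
Your proof is correct and follows the same route as the paper, which simply notes that the corollary is a direct application of Theorem~\ref{thmsim}: under the equal-trace normalisation $\tr\Pi_j=d/n$ the tight-IC-POVM condition and the tight-frame condition on $\pi_0(\Pi)$ coincide up to the scalar you identify, and \eqref{squaresim} specialises to the stated value of $\alpha$. Your explicit bookkeeping of the relation $\alpha=\beta d/n$ is just a spelled-out version of the paper's ``simple observation''.
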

Observe, however, that the equal-trace assumption cannot be dropped.
\begin{Ex}[A morphophoric POVM which is not a tight IC-POVM]
\label{Scott_example}
	Let $X$, $Y$ and $Z$ denote the Pauli matrices and let  $a:=\frac{\sqrt 3-1}{4}$ and $b:=\frac{3-\sqrt{3}}{6}$. Put $$\Pi_1:=aZ+aI,\quad \Pi_2:=-aZ+aI,\quad \Pi_3:=bX+bI,\quad \Pi_4:=-\frac{b}{2}X+aY+bI,\quad \Pi_5:=-\frac{b}{2}X-aY+bI.$$
	Then $(\Pi_j)_{j=1}^5$ is a rank-1 POVM which is morphophoric but not tight. Indeed, since  $\{X,Y, Z,I\}$ forms an (equal-norm) orthogonal basis of $\mathcal L_s(\mathbb C^2)$, it is easy to see that the projections of $\Pi_j$, $j=1,\ldots,5$, onto the 3-dimensional subspace of traceless operators can be identified as vertices of a triangular bipyramid: $(0,0,a)$, $(0,0,-a)$, $(b,0,0)$, $(-b/2,a,0)$, $(-b/2,-a,0)$, constituting a tight frame, see Fig. \ref{exnottight}. On the other hand, non-tightness of IC-POVM can be easily verified by direct computation.
	\begin{figure}[htb]
	\begin{center}	\includegraphics[scale=0.6]{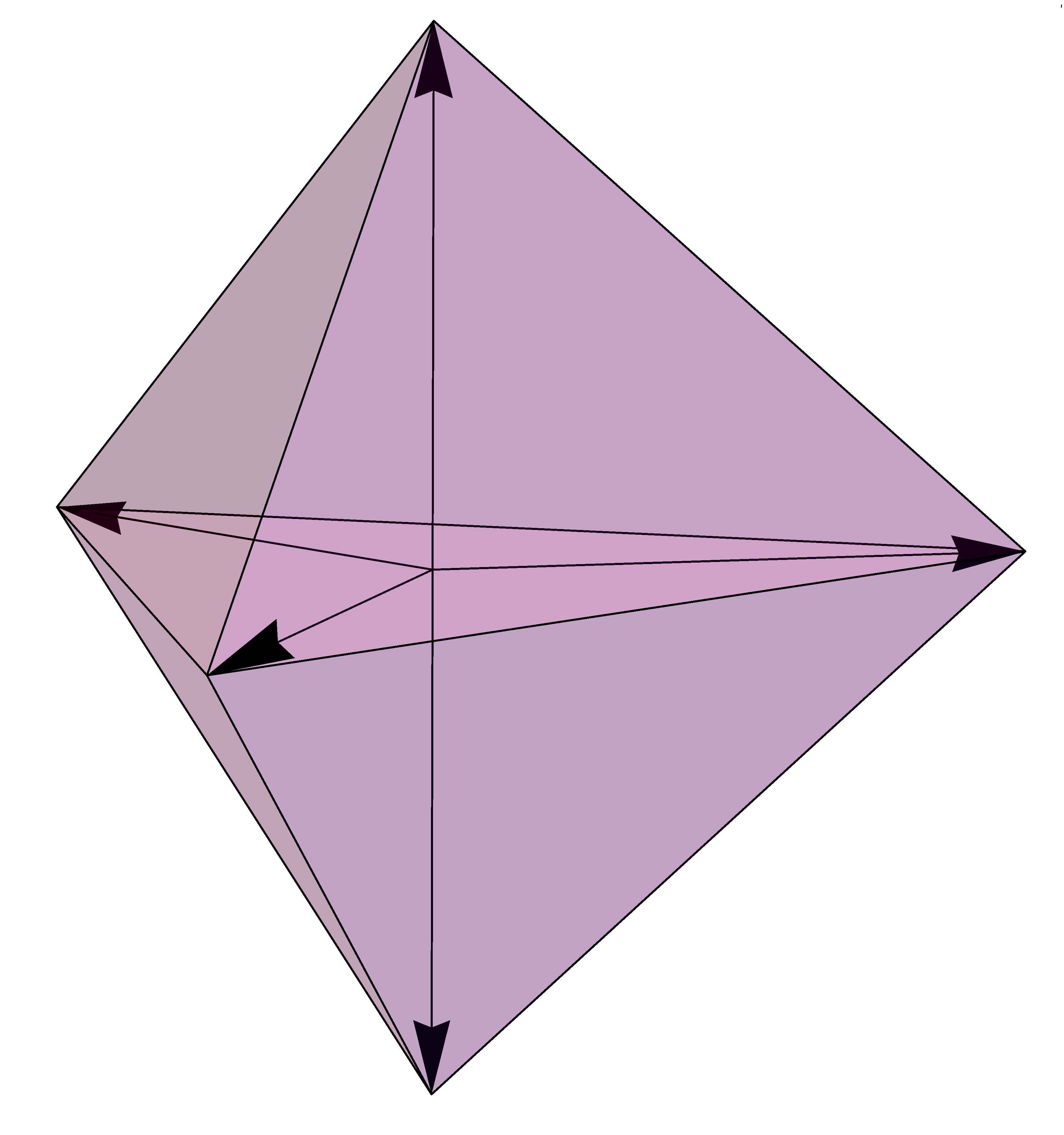}
		\end{center}
		\caption{A triangular bipyramid with vertices at $(0,0,a)$, $(0,0,-a)$, $(b,0,0)$, $(-b/2,a,0)$, $(-b/2,-a,0)$. The set of vertices forms a tight frame and therefore represents projections of effects constituting a morphophoric POVM.}\label{exnottight}
	\end{figure}	
\end{Ex}
\section{The probability range of 2-design POVMs}
\label{The probability range of 2-design POVMs}

\begin{Df}
We say that $\{\rho_j\}_{j=1}^n\subset \mathcal P(\mathbb C^d)$ is a $t$-design if
$$\frac{1}{n^2}\sum_{j,k=1}^nf(\tr(\rho_j\rho_k))=\iint_{(\mathcal P(\mathbb C^d))^2}f(\tr(\rho\sigma))d\mu(\rho)d\mu(\sigma)$$
holds for every $f:\mathbb R\to\mathbb R$  polynomial of degree $t$ or less, where by $\mu$ we mean the unique unitarily invariant measure on $\mathcal P(\mathbb C^d)$.
\end{Df}
Obviously, any $t$-design is also an $s$-design for $s<t$. Note that $\{\rho_j\}_{j=1}^n\subset \mathcal P(\mathbb C^d)$ is a 1\nobreakdash-design if and only if $(\frac{d}{n}\rho_j)_{j=1}^n$ is a POVM. Moreover, it is a 2-design if and only if $(\frac{d}{n}\rho_j)_{j=1}^n$ is a tight IC\nobreakdash-POVM \cite[Prop.~13]{Scott06}. We will refer to such rank-1 POVM as a 2-design POVM. That property allows us to rephrase the definition of a 2-design in a more comprehensible way, namely $\{\rho_j\}_{j=1}^n\subset \mathcal P(\mathbb C^d)$ is a~2-design if and only if
\begin{equation}\label{2des}
\tau=(d+1) \sum_{j=1}^n\frac{d}{n}\tr(\tau\rho_j)\rho_j-I
\end{equation}
for every $\tau\in\mathcal L_s(\mathbb C^d)$ such that $\tr\tau=1$.

\begin{Cor}\label{cor2des} Let $\Pi$ be a rank-1 POVM consisting of effects of equal trace. Then
$\mathcal {Q}_\Pi$ is similar to the quantum state space if and only if $\Pi$ is a 2-design POVM.
The square of the similarity ratio $\alpha$ is then equal to $\frac{d}{n(d+1)}$.
\end{Cor}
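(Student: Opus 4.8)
The plan is to reduce the statement to Corollary \ref{Scott} together with the characterisation of $2$-designs recalled in the previous paragraph. First I would observe that a rank-$1$ POVM $\Pi=(\Pi_j)_{j=1}^n$ consisting of effects of equal trace is of the form $\Pi_j=\frac{d}{n}\rho_j$ for some $\rho_j\in\mathcal P(\mathbb C^d)$; indeed $\tr\Pi_j$ is then constant and summing the constraint $\sum_j\Pi_j=I$ gives $\sum_j\tr\Pi_j=d$, so $\tr\Pi_j=d/n$, and being rank-$1$ and positive each $\Pi_j$ is $\tr\Pi_j$ times a rank-$1$ projection $\rho_j$. Hence $\Pi$ is a $2$-design POVM (in the sense defined just before the corollary) precisely when $\{\rho_j\}_{j=1}^n$ is a complex projective $2$-design, which by \cite[Prop.~13]{Scott06} is equivalent to $\Pi$ being a tight IC-POVM.

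Next, since $\Pi$ consists of effects of equal trace, Corollary \ref{Scott} applies directly: $\mathcal Q_\Pi$ is similar to the quantum state space if and only if $\Pi$ is a tight IC-POVM. Chaining the two equivalences yields the first assertion: $\mathcal Q_\Pi$ is similar to $\mathcal S(\mathbb C^d)$ iff $\Pi$ is a $2$-design POVM.

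It remains to compute the similarity ratio. By Corollary \ref{Scott}, $\alpha=\frac{1}{d^2-1}\bigl(\sum_{j=1}^n\tr\Pi_j^2-\frac{d}{n}\bigr)$. Here $\Pi_j=\frac{d}{n}\rho_j$ with $\rho_j$ a rank-$1$ projection, so $\Pi_j^2=\bigl(\frac{d}{n}\bigr)^2\rho_j$ and $\tr\Pi_j^2=\bigl(\frac{d}{n}\bigr)^2$; summing over $j$ gives $\sum_{j=1}^n\tr\Pi_j^2=\frac{d^2}{n}$. Therefore
\begin{equation*}
\alpha=\frac{1}{d^2-1}\left(\frac{d^2}{n}-\frac{d}{n}\right)=\frac{1}{d^2-1}\cdot\frac{d(d-1)}{n}=\frac{d}{n(d+1)},
\end{equation*}
as claimed.

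There is essentially no hard step here — the work was done in Theorem \ref{thmsim} and Corollary \ref{Scott}, and the only thing to be careful about is invoking the correct direction of \cite[Prop.~13]{Scott06} so that ``$2$-design POVM'' and ``tight IC-POVM'' are genuinely interchangeable for rank-$1$ equal-trace POVMs; once that identification is in place, the equivalence and the trace computation are routine.
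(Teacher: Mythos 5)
Your proposal is correct and follows exactly the route the paper intends: reduce to Corollary \ref{Scott} via the observation that equal trace forces $\tr\Pi_j=d/n$ (so $\Pi_j=\frac{d}{n}\rho_j$), identify ``tight IC-POVM'' with ``$2$-design POVM'' via \cite[Prop.~13]{Scott06} as quoted just before the corollary, and substitute $\sum_j\tr\Pi_j^2=d^2/n$ into the formula for $\alpha$. The computation of the similarity ratio is also correct.
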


From now on we assume that $\Pi=(\frac{d}{n}\rho_j)_{j=1}^n$ is a 2-design POVM.
For $\tau\in\mixed$, $p_j(\tau):=(p_\Pi(\tau))_j=\frac{d}{n}\tr(\tau\rho_j)$ is the
probability of obtaining the $j$-th outcome. The following equalities are derived directly from \eqref{2des}:
\begin{align}
\sum_{j=1}^np_j(\tau)&=1,\\
\label{sqaresgeneral}
\sum_{j=1}^n(p_j(\tau))^2&=\frac{d (\tr(\tau^2)+1)}{n (d+1)},\\
\label{triplegeneral}
\sum_{j,k,l=1}^np_j(\tau)p_k(\tau)p_l(\tau)\tr(\rho_j\rho_k\rho_l)&= \frac{\tr(\tau+I)^3}{(d+1)^3}.
\end{align}
In particular, for pure states we get
\begin{align}\label{squares}
\sum_{j=1}^n(p_j(\tau))^2&=\frac{2d}{n (d+1)},\\
\label{triple}
\sum_{j,k,l=1}^np_j(\tau)p_k(\tau)p_l(\tau)\tr(\rho_j\rho_k\rho_l)&= \frac{d+7}{(d+1)^3}.
\end{align}

\begin{R}
Note that the rhs of the last equation does not depend on $n$.
\end{R}
The proof of the following statement can be found in \cite{JonLin05}.
\begin{F}
 A self-adjoint operator $\tau$ is a pure quantum state if and only if $\tr\tau=1$, $\tr\tau^2=1$ and $\tr\tau^3=1$.
\end{F}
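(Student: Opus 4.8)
The plan is to prove the two implications separately, the forward one being immediate. If $\tau$ is a pure quantum state then it is a rank-$1$ orthogonal projection, hence self-adjoint with spectrum $\{1,0,\ldots,0\}$, and therefore $\tr\tau=\tr\tau^2=\tr\tau^3=1$ at once. All the content is in the converse, which I would handle by spectral analysis.

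For the converse I would first diagonalise: since $\tau$ is self-adjoint it has real eigenvalues $\lambda_1,\ldots,\lambda_d$, and the three hypotheses become the power-sum identities $p_k:=\sum_{i=1}^d\lambda_i^k=1$ for $k=1,2,3$. The crucial move is to examine the manifestly nonnegative quantity $\sum_{i=1}^d\lambda_i^2(1-\lambda_i)^2$ (each summand is a square of a real number) and to expand it as $p_2-2p_3+p_4=p_4-1$, where $p_4=\sum_i\lambda_i^4$.

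The next step is to bound $p_4$ from above using only $p_2$. Putting $x_i:=\lambda_i^2\ge 0$, one has $p_4=\sum_i x_i^2\le\bigl(\sum_i x_i\bigr)^2=p_2^2=1$, the inequality being the elementary $\sum_i x_i^2\le\bigl(\sum_i x_i\bigr)^2$ valid for nonnegative reals. Consequently $\sum_{i=1}^d\lambda_i^2(1-\lambda_i)^2=p_4-1\le 0$; being a sum of squares it must vanish, so $\lambda_i^2(1-\lambda_i)^2=0$, i.e.\ $\lambda_i\in\{0,1\}$, for every $i$. Combining this with $\sum_i\lambda_i=1$ (equivalently $\sum_i\lambda_i^2=1$) forces exactly one eigenvalue to equal $1$ and all the others $0$; hence $\tau$ is a rank-$1$ orthogonal projection, that is, a pure state.

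The only step that requires a little thought is the observation that $p_4$, which does not appear among the hypotheses, is nevertheless pinned down from above by $p_2$ alone; once that inequality is in place everything is forced by elementary positivity of sums of squares. It is worth emphasising that the statement assumes only self-adjointness, not positivity, of $\tau$: nonnegativity of the spectrum is an \emph{output} of the argument, obtained for free the moment one knows $\lambda_i\in\{0,1\}$. Accordingly the proof must — and does — use realness of the eigenvalues in an essential way, precisely through the inequality $\sum_i x_i^2\le\bigl(\sum_i x_i\bigr)^2$, which would fail for complex $\lambda_i$.
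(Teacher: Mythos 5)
Your argument is correct. For the record, the paper itself does not prove this Fact at all: it simply refers the reader to Jones and Linden \cite{JonLin05}. So what you have produced is a complete, self-contained elementary proof where the paper only offers a citation. Each step checks out: with real eigenvalues $\lambda_1,\dots,\lambda_d$ the hypotheses give $p_1=p_2=p_3=1$, the expansion $\sum_i\lambda_i^2(1-\lambda_i)^2=p_2-2p_3+p_4=p_4-1$ is right, and the bound $p_4=\sum_i(\lambda_i^2)^2\le\bigl(\sum_i\lambda_i^2\bigr)^2=1$ is exactly the elementary inequality for nonnegative reals you invoke; positivity of the sum of squares then forces $\lambda_i\in\{0,1\}$, and $p_1=1$ pins down a single unit eigenvalue, i.e.\ a rank-one projection. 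Your closing remark is also well taken and worth keeping: the statement assumes only self-adjointness, and your proof correctly treats nonnegativity of the spectrum as a conclusion rather than a hypothesis (indeed the third condition is genuinely needed, since e.g.\ a Hermitian operator with spectrum roughly $(0.9,\,0.35,\,-0.25)$ can satisfy $\tr\tau=\tr\tau^2=1$ without being a state). The only thing to note is that no comparison with the paper's method is possible, since the paper delegates the proof entirely to the reference.
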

This fact will be helpful in proving the next theorem that characterises algebraically the probability distributions belonging to $\mathcal {Q}_\Pi^1$.
In spite of its apparent similarity to \cite[Cor. 2.3.4.]{GraPhD}, the statement is more general and has different meaning
since we do not assume that the vector satisfying conditions \eqref{itm:lin}-\eqref{itm:cub} below is a probability distribution from $\mathcal {Q}_\Pi$.
In fact, we do not even assume that it is a probability distribution at all.
\begin{Th}\label{thm2des} Let $\Pi=(\frac{d}{n}\rho_j)_{j=1}^n$ be a 2-design POVM. Then $(p_1,\ldots,p_n)\in\mathcal {Q}_{\Pi}^1$ if and only if
\begin{enumerate}[i.]
	\item\label{itm:lin} $\frac{n}{d}p_l=(d+1)\sum_{j=1}^np_j\tr(\rho_j\rho_l)-1$ for $l=1,\ldots,n$;
	\item\label{itm:squ} $\sum_{j=1}^n p_j^2=\frac{2d}{n (d+1)}$;
	\item\label{itm:cub} $\sum_{j,k,l=1}^np_jp_kp_l\tr(\rho_j\rho_k\rho_l)=\frac{d+7}{(d+1)^3}$.
\end{enumerate}
\end{Th}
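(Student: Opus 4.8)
The plan is to prove the two implications separately, with the reverse one carrying essentially all of the content. The forward direction is almost immediate from the identities already established for $2$-design POVMs: if $(p_1,\ldots,p_n)=p_\Pi(\tau)$ for a pure state $\tau$, then $p_l=\tfrac d n\tr(\tau\rho_l)$, and pairing the operator identity \eqref{2des} (for this $\tau$) with $\rho_l$ gives exactly \ref{itm:lin}, while \ref{itm:squ} and \ref{itm:cub} are literally \eqref{squares} and \eqref{triple}. So the substance is the converse: given a vector $(p_1,\ldots,p_n)$ satisfying \ref{itm:lin}--\ref{itm:cub} --- not assumed to be a probability distribution, nor indeed to satisfy any normalisation --- one must produce a pure state $\tau$ with $p_j=\tfrac d n\tr(\tau\rho_j)$.

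For the converse I would take the obvious candidate suggested by \ref{itm:lin}, namely $\tau:=(d+1)\sum_{j=1}^n p_j\rho_j-I$, a Hermitian operator, and verify the three scalar conditions $\tr\tau=1$, $\tr\tau^2=1$, $\tr\tau^3=1$; the Fact of Jones and Linden then forces $\tau$ to be a pure quantum state. Observe first that \ref{itm:lin}, rewritten, reads $\tfrac n d p_l=\tr(\tau\rho_l)$, i.e. $p_l=\tfrac d n\tr(\tau\rho_l)=(p_\Pi(\tau))_l$; hence, once $\tau$ is known to be pure, the membership $(p_1,\ldots,p_n)=p_\Pi(\tau)\in\mathcal Q_\Pi^1$ comes for free. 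The first scalar condition is where the missing normalisation gets recovered: summing \ref{itm:lin} over $l$ and using that $\Pi$ is a POVM, $\sum_l\rho_l=\tfrac n d I$ (so $\sum_l\tr(\rho_j\rho_l)=\tfrac n d$), the sum collapses to $\tfrac n d\sum_l p_l=(d+1)\tfrac n d\sum_j p_j-n$, which forces $\sum_j p_j=1$ and hence $\tr\tau=(d+1)\sum_j p_j-d=1$.

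The other two conditions are bookkeeping with the $2$-design structure. Multiplying \ref{itm:lin} by $p_l$, summing over $l$, and substituting the value $\sum_l p_l^2=\tfrac{2d}{n(d+1)}$ from \ref{itm:squ} yields the auxiliary identity $\sum_{j,k}p_jp_k\tr(\rho_j\rho_k)=\tfrac{d+3}{(d+1)^2}$. Writing $A:=(d+1)\sum_j p_j\rho_j$, so that $\tau=A-I$ with $A$ commuting with $I$, one computes $\tr A=d+1$ and $\tr A^2=(d+1)^2\cdot\tfrac{d+3}{(d+1)^2}=d+3$, whence $\tr\tau^2=\tr A^2-2\tr A+d=1$; and $\tr A^3=(d+1)^3\sum_{j,k,l}p_jp_kp_l\tr(\rho_j\rho_k\rho_l)=d+7$ by \ref{itm:cub}, whence $\tr\tau^3=\tr A^3-3\tr A^2+3\tr A-d=1$ after the linear combination telescopes. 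Invoking the Fact concludes the proof.

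I do not expect a genuine obstacle here --- the whole argument is a controlled unwinding of \eqref{2des} in both directions. The one point demanding care is the logical ordering in the converse: one must derive $\sum_j p_j=1$ from the linear system \ref{itm:lin} alone before any of the trace computations are legitimate, and must scrupulously avoid assuming $(p_1,\ldots,p_n)\in\Delta_n$ or $(p_1,\ldots,p_n)\in\mathcal Q_\Pi$ at the outset. Incidentally, for $d=2$ condition \ref{itm:cub} is redundant, since $\tr\tau=\tr\tau^2=1$ already pins the two eigenvalues of a $2\times 2$ Hermitian $\tau$ to $0$ and $1$; retaining it merely keeps the statement uniform in $d$.
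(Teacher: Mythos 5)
Your proposal is correct and follows essentially the same route as the paper: the forward direction from \eqref{2des}, \eqref{squares} and \eqref{triple}, and for the converse the same candidate $\tau:=(d+1)\sum_{j}p_j\rho_j-I$, recovering $\sum_j p_j=1$ by summing (i), then verifying $\tr\tau=\tr\tau^2=\tr\tau^3=1$ and invoking the Jones--Linden fact. The only cosmetic difference is that you expand $\tr(A-I)^2$ and $\tr(A-I)^3$ by hand, whereas the paper reaches the same conclusion by comparing (ii) and (iii) with the general identities \eqref{sqaresgeneral} and \eqref{triplegeneral}.
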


\begin{proof} We already know that every probability distribution from $\mathcal {Q}_{\Pi}^1$ satisfies all the equations above.
Thus, to complete the proof, it suffices to show that any vector $(p_1,\ldots,p_n)\in\mathbb R^n$ fulfilling equalities
\eqref{itm:lin}-\eqref{itm:cub} belongs to $\mathcal {Q}_\Pi^1$.   Let us define $\tau:=(d+1)\sum_{j=1}^np_j\rho_j-I$.
Then $$\tr(\tau\rho_l)= (d+1)\sum_{j=1}^np_j\tr(\rho_j\rho_l)-\tr(\rho_l)=\frac{n}{d}p_l$$
and thus $p_l=\frac{d}{n}\tr(\tau\rho_l)$. It is now enough to prove that $\tau\in\mathcal P(\mathbb C^d)$. Obviously, $\tau^*=\tau$.
Using \eqref{itm:lin}, we show that the numbers $p_1,p_2,\ldots,p_n$ sum up to 1:
$$\sum_{l=1}^n p_l=\frac{d}{n}\left( (d+1)\sum_{j=1}^np_j\tr\left(\rho_j\sum_{l=1}^n\rho_l\right)-n\right)=\frac{d}{n} (d+1) \frac{n}{d} \sum_{j=1}^np_j-\frac{d}{n} n=(d+1)\sum_{j=1}^np_j-d,$$
which gives $\sum_{l=1}^n p_l=1$. In consequence, $\tr\tau=(d+1) \sum_{j=1}^np_j\tr(\rho_j)-d=1$. Since \eqref{sqaresgeneral}
and \eqref{triplegeneral} hold true for any Hermitian operator with trace equal to 1, we can compare them with our assumptions
concerning the sum of squares and triple products, respectively, to obtain $\tr\tau^2=\tr\tau^3=1$, which completes the proof.
\end{proof}

Let us take a closer look at equations \eqref{itm:lin}-\eqref{itm:cub} from Theorem \ref{thm2des}. As we already know from Corollary~\ref{cor2des}, they need to  describe a $(2d-2)$-dimensional submanifold of the $(d^2-2)$-dimensional sphere. From the proof above we deduce that  $p\in\mathbb R^n$ fulfils the system of $n$ linear equations \eqref{itm:lin} if and only if $p$ lies in the affine span of $\mathcal Q_\Pi$.
\begin{Df}\label{A} By the \emph{primal affine space} we mean the affine span of $\mathcal Q_\Pi$ and denote it by $\mathcal A$.
\end{Df}
 Note that $\mathcal A\subset\{x\in\mathbb R^n:x_1+\ldots+x_n=1\}$ and $\dim\mathcal A=d^2-1$, so the number of equations can be reduced to $n-(d^2-1)$. In particular, if $n=d^2$, this system of equations reduces to the single equation $\sum_{j=1}^{d^2} p_j=1$.  The only 2-design POVMs with $d^2$ elements are  SIC-POVMs, i.e. those with $\tr(\rho_j\rho_k)=\frac{1}{d+1}$ for $j\neq k$.  The quadratic form \eqref{itm:squ} is obviously the equation of the sphere in $\mathbb R^n$ centred at 0 and of radius $\sqrt{\frac{2d}{n(d+1)}}$. The intersection of this sphere with the affine subspace $\mathcal A$ gives us the $(d^2-2)$-dimensional sphere we are looking for. Finally, the cubic form \eqref{itm:cub} is responsible for cutting the $(2d-2)$-dimensional submanifold from the sphere in question. We can now state the following:
\begin{Cor}\hfill
\label{Special}
\begin{enumerate}[i.]
\item The set $\mathcal {Q}_\Pi^1$ is fully described by  equations \eqref{itm:lin}-\eqref{itm:squ} if and only if $d=2$.
\item The linear equations \eqref{itm:lin} can be reduced to the single equation $\sum_{j=1}^n p_j=1$ if and only if $\Pi$ is a~SIC-POVM.
\end{enumerate}
\end{Cor}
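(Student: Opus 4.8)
The plan is to analyse the two parts separately, in each case treating the ``if'' and ``only if'' directions. For part (i), recall from Corollary \ref{cor2des} (together with the discussion following Theorem \ref{thm2des}) that $\mathcal Q_\Pi^1$ is a $(2d-2)$-dimensional submanifold, while equations \eqref{itm:lin}-\eqref{itm:squ} cut out the $(d^2-2)$-dimensional sphere obtained by intersecting the sphere \eqref{itm:squ} with the primal affine space $\mathcal A$. Thus \eqref{itm:lin}-\eqref{itm:squ} suffice to describe $\mathcal Q_\Pi^1$ precisely when $2d-2=d^2-2$, i.e. when $d(d-2)=0$, which for $d\geq 2$ forces $d=2$. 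For the converse, when $d=2$ one must check that the cubic condition \eqref{itm:cub} is automatically implied by \eqref{itm:lin}-\eqref{itm:squ}; the cleanest route is to observe that for $d=2$ any Hermitian $\tau$ with $\tr\tau=1$ and $\tr\tau^2=1$ already satisfies $\tr\tau^3=1$ (a $2\times 2$ Hermitian matrix with these two trace constraints has eigenvalues $0$ and $1$), so the operator $\tau:=(d+1)\sum_j p_j\rho_j-I$ built in the proof of Theorem \ref{thm2des} is automatically a pure state once \eqref{itm:lin} and \eqref{itm:squ} hold.

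For part (ii), the ``if'' direction is the easy one: if $\Pi$ is a SIC-POVM then $n=d^2$, and the remark following Definition \ref{A} already shows that the system \eqref{itm:lin} collapses to $\sum_j p_j=1$. For the ``only if'' direction, suppose the $n$ equations \eqref{itm:lin} are equivalent to the single equation $\sum_j p_j=1$. Since $\mathcal A\subset\{x:\sum_j x_j=1\}$ always and $\dim\mathcal A=d^2-1$, this equivalence forces $d^2-1=n-1$, hence $n=d^2$; and a $2$-design POVM with $n=d^2$ elements is precisely a SIC-POVM (the standard fact that a complex projective $2$-design of cardinality $d^2$ is a SIC, equivalently $\tr(\rho_j\rho_k)=1/(d+1)$ for $j\neq k$, which follows by saturating the Welch-type bound underlying \eqref{2des}).

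The main obstacle, such as it is, lies in part (i): one must be careful that the dimension-counting argument is rigorous, i.e. that the $(2d-2)$-dimensional manifold $\mathcal Q_\Pi^1$ is genuinely not contained in \emph{any} proper subvariety of the $(d^2-2)$-sphere cut out by \eqref{itm:lin}-\eqref{itm:squ} when $d>2$ — equivalently, that \eqref{itm:cub} is a nontrivial constraint there. This is guaranteed by Theorem \ref{thm2des} itself, since that theorem asserts \eqref{itm:lin}-\eqref{itm:cub} together cut out exactly $\mathcal Q_\Pi^1$, so if \eqref{itm:cub} were redundant the sphere and the manifold would coincide, contradicting $2d-2<d^2-2$. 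The remaining verifications — the $2\times 2$ eigenvalue computation for $d=2$, and the identification of $d^2$-element $2$-designs with SIC-POVMs — are short and standard.
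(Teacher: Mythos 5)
Your proposal is correct and follows essentially the same route as the paper, which obtains the corollary from the dimension count ($\mathcal{Q}_\Pi^1$ is a $(2d-2)$-dimensional submanifold of the $(d^2-2)$-dimensional sphere cut out by \eqref{itm:lin}--\eqref{itm:squ}) together with the observation that the linear system collapses to $\sum_{j=1}^n p_j=1$ exactly when $n=d^2$, i.e. exactly for SIC-POVMs. Your explicit verification that for $d=2$ the cubic condition is automatic (a Hermitian operator with $\tr\tau=\tr\tau^2=1$ on $\mathbb{C}^2$ has eigenvalues $0$ and $1$, so the $\tau$ built in the proof of Theorem \ref{thm2des} is already pure) merely makes precise a step the paper leaves to its geometric discussion.
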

A natural question arises whether  we can provide a similar characterisation of $\mathcal {Q}_\Pi$ as we did  for $\mathcal {Q}_\Pi^1$ in Theorem \ref{thm2des}. As we can see, the proof of the latter is based on the fact that the pure states are fully described by $\tr\tau=\tr\tau^2=\tr\tau^3=1$. The mixed states obviously satisfy $\tr\tau=1$ and $\tr\tau^2\leq 1$ and it is tempting to write down the third inequality in a similar manner. However, it turns out that in order to provide a complete characterisation of $\tau\in\mathcal S(\mathbb C^d)$ in terms of $\tr\tau^k$, we need additional $d-2$ inequalities. Put together, they take the form $S_k\geq 0$ ($k\in\{2,\ldots,d\}$), where $S_k$ can be defined recursively by $S_k=\frac{1}{k}\sum_{j=1}^k(-1)^{j-1}\tr(\tau^j)S_{k-j}$ and $S_0=S_1=1$ \cite{ByrKha03}. For example, the condition on $\tr\tau^3$ is as follows: $3\tr\tau^2-2\tr\tau^3\leq1$. Thus, such a characterisation is possible but it becomes more and more complicated with the growth of $d$. However, for $d=2,3$ we can characterise $\mathcal Q_\Pi$ in a relatively simple way.
\begin{Cor}\label{ineq2des} Let $\Pi=(\frac{d}{n}\rho_j)_{j=1}^n$ be a 2-design POVM. If $(p_1,\ldots,p_n)\in\mathcal {Q}_{\Pi}$, then
\begin{enumerate}[i.]
	\item $\frac{n}{d}p_l=(d+1)\sum_{j=1}^np_j\tr(\rho_j\rho_l)-1$ for $l=1,\ldots,n$;
	\item \label{sqineq}$\sum_{j=1}^n p_j^2\leq\frac{2d}{n (d+1)}$;
	\item $\sum_{j,k,l=1}^np_jp_kp_l\tr(\rho_j\rho_k\rho_l)\geq\frac{9n}{2d(d+1)^2}\sum_{j=1}^n p_j^2+\frac{d-2}{(d+1)^3}$.
	\end{enumerate}
	Moreover, conditions (i)-(iii) are sufficient for $d=3$ and (i)-(ii) are sufficient for $d=2$.
\end{Cor}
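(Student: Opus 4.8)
The plan is to deduce Corollary~\ref{ineq2des} from Theorem~\ref{thm2des} together with the characterisation of the quantum state space by Newton-type inequalities on power traces mentioned after Corollary~\ref{Special}. First I would observe that the necessity of (i) is immediate: it holds for every $\tau\in\mixed$ (and indeed for every Hermitian $\tau$ with $\tr\tau=1$) by \eqref{2des}, hence for every $p\in\mathcal Q_\Pi$. For (ii), note that every $\rho\in\mixed$ satisfies $\tr\rho^2\le1$, so \eqref{sqaresgeneral} gives $\sum_j p_j(\rho)^2=\frac{d(\tr\rho^2+1)}{n(d+1)}\le\frac{2d}{n(d+1)}$. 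For (iii), the key identity is again \eqref{triplegeneral}, which rewrites as $\sum_{j,k,l}p_jp_kp_l\tr(\rho_j\rho_k\rho_l)=\frac{(\tr\tau^3+3\tr\tau^2+3\tr\tau+1)}{(d+1)^3}$ once we expand $\tr(\tau+I)^3$ and use $\tr I=d$; with $\tr\tau=1$ this is $\frac{\tr\tau^3+3\tr\tau^2+4+3(d-1)}{(d+1)^3}$. Wait---more carefully, $\tr(\tau+I)^3=\tr\tau^3+3\tr(\tau^2)+3\tr\tau\cdot 1\cdot(\text{coefficient from }I)+\ldots$; I would expand it cleanly as $\tr\tau^3+3\tr(\tau^2 I)+3\tr(\tau I^2)+\tr I^3=\tr\tau^3+3\tr\tau^2+3\tr\tau+d=\tr\tau^3+3\tr\tau^2+3+d$. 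The Newton/Byrd--Khaneja inequality $S_3\ge0$, i.e. $3\tr\tau^2-2\tr\tau^3\le 1$ for $\tr\tau=1$, yields $\tr\tau^3\ge\tfrac12(3\tr\tau^2-1)$. Substituting and then re-expressing $\tr\tau^2$ via \eqref{sqaresgeneral} as $\tr\tau^2=\frac{n(d+1)}{d}\sum_j p_j^2-1$ produces exactly the stated bound; this is the routine algebra I would present compactly.

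For the sufficiency direction, suppose $p\in\mathbb R^n$ satisfies (i)--(iii) (for $d\le3$). As in the proof of Theorem~\ref{thm2des}, set $\tau:=(d+1)\sum_j p_j\rho_j-I$; then (i) forces $p_l=\frac{d}{n}\tr(\tau\rho_l)$ and, exactly as there, $\sum_l p_l=1$ and hence $\tr\tau=1$, with $\tau^*=\tau$. Using that \eqref{sqaresgeneral} and \eqref{triplegeneral} hold for any Hermitian $\tau$ with $\tr\tau=1$, inequality (ii) translates to $\tr\tau^2\le1$ and inequality (iii) translates, by the same substitutions as above, to $3\tr\tau^2-2\tr\tau^3\le1$, i.e. $S_3(\tau)\ge0$. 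For $d=2$, the conditions $\tr\tau=1$, $\tr\tau^2\le1$ already characterise $\mathcal S(\mathbb C^2)$ (the two eigenvalues are nonnegative iff their sum is $1$ and sum of squares is $\le1$), so $\tau\in\mathcal S(\mathbb C^2)$ and $p=p_\Pi(\tau)\in\mathcal Q_\Pi$. For $d=3$, the conditions $\tr\tau=1$, $S_2(\tau)=\tfrac12(1-\tr\tau^2)\ge0$, $S_3(\tau)\ge0$ are precisely the statement that all three eigenvalues of $\tau$ are nonnegative (the elementary symmetric functions $e_1=1,e_2=S_2,e_3=S_3$ of the eigenvalues are all $\ge0$, and with $e_1>0$ this forces all roots $\ge0$), so again $\tau\in\mathcal S(\mathbb C^3)$ and $p\in\mathcal Q_\Pi$.

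The only genuinely delicate point is the passage, for $d=3$, from "$e_1,e_2,e_3\ge0$" to "all eigenvalues $\ge0$", and more precisely identifying $S_k$ with the elementary symmetric polynomials $e_k$ of the eigenvalues: this is Newton's identity, and one must check the sign conventions in the recursion $S_k=\frac1k\sum_{j=1}^k(-1)^{j-1}\tr(\tau^j)S_{k-j}$ match $e_k$ rather than $(-1)^k e_k$. Once that bookkeeping is pinned down, the cubic-polynomial root-positivity argument is elementary (a real cubic with all elementary symmetric functions of its roots nonnegative and leading root-sum positive cannot have a negative root). I would state the eigenvalue characterisation as a one-line lemma citing \cite{ByrKha03} and then let the substitutions do the rest; I do not expect any serious obstacle beyond this sign check and the clean expansion of $\tr(\tau+I)^3$.
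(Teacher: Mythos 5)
Your proposal is correct and follows essentially the route the paper intends: necessity comes from \eqref{2des}, \eqref{sqaresgeneral}, \eqref{triplegeneral} together with $\tr\tau^2\le 1$ and the Byrd--Khaneja inequality $3\tr\tau^2-2\tr\tau^3\le 1$ (i.e.\ $S_3\ge 0$), and sufficiency for $d=2,3$ repeats the construction of $\tau$ from the proof of Theorem \ref{thm2des}, with positivity recovered from $\tr\tau=1$, $S_2\ge0$ (and $S_3\ge0$ for $d=3$) via the identification $S_k=e_k$. Your algebraic translation of (iii) into $3\tr\tau^2-2\tr\tau^3\le1$ checks out, so there is no gap.
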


\section{Examples}
\label{Examples}

In the following examples we provide some explicit formulae for the $(d^2-1)$-dimensional affine subspaces $\mathcal A$ defined by the system of linear equations \eqref{itm:lin} in Theorem \ref{thm2des}.
\begin{Ex}[Cubical POVM]
Let us consider a cubical POVM $\Pi=(\frac{1}{4}\rho_j)_{j=1}^8$, i.e. the states $\rho_1,\ldots,\rho_8$ are represented on the Bloch sphere by the vertices of a cube, e.g. $\frac{1}{\sqrt{6}}(\pm1,\pm1,\pm1)$. Let us label these vertices in the following way: the vertices of the bottom face are labelled in sequence  with 1-4 and if $i$~and $j$ are the labels of the opposite vertices, then $j=i+4$ ($mod\ 8$). The eight linear equations \eqref{itm:lin} from Theorem \ref{thm2des} reduce to the following $8-(2^2-1)=5$ linearly independent ones:
$$\begin{cases}
p_1+p_5=\frac{1}{4}\\
p_2+p_6=\frac{1}{4}\\
p_3+p_7=\frac{1}{4}\\
p_4+p_8=\frac{1}{4}\\
p_1+p_3+p_6+p_8=\frac{1}{2}.
\end{cases}$$
It is easy to see that the first four of them guarantee that $p_1+\ldots+p_8=1$. Obviously, the 3-dimensional affine subspace of $\mathbb R^8$ defined by these equations is label-dependent. The number of possible labellings is $8!=40320$. However, two of them produce the same set of equations, if the  permutation transforming one to the other corresponds to an isometric operation on the cube. The isometries of the cube are described by the elements of the octahedral group $O_h$. Thus the number of different subspaces is $\frac{8!}{|O_h|}=\frac{40320}{48}=840$.
\end{Ex}

\begin{Ex}[Cuboctahedral and icosahedral POVMs] The polyhedra  defining these POVMs (cuboctahedron and icosahedron) have the same number of vertices: $n=12$, which can be represented on the Bloch sphere by $\frac{1}{\sqrt{2(s^2+1)}}(\pm s,\pm 1,0)$, $\frac{1}{\sqrt{2(s^2+1)}}(0,\pm s,\pm 1)$ and $\frac{1}{\sqrt{2(s^2+1)}}(\pm 1,0,\pm s)$, where $s=1$ for cuboctahedron and $s=(1+\sqrt{5})/2$ (\emph{golden ratio}, usually denoted by $\varphi$) for icosahedron. If we label these vertices as follows: $v_1=(+,+,0)$, $v_3=(+,-,0)$, $v_5=(0,+,+)$, $v_7=(0,+,-)$, $v_9=(+,0,+)$, $v_{11}=(-,0,+)$ and $v_{2j}=-v_{2j-1}$ for $j\in\{1,\ldots,6\}$, then the different values of the parameter $s$ result in slightly  different affine subspaces defined by the following systems of 9 linear equations:
$$\textnormal{cuboctahedron:} \begin{cases}
p_1+p_2=\frac{1}{6}\\
p_3+p_4=\frac{1}{6}\\
p_5+p_6=\frac16\\
p_7+p_8=\frac16\\
p_9+p_{10}=\frac16\\
p_{11}+p_{12}=\frac16\\
p_1-p_3+p_6+p_8=\frac16\\
p_5-p_7+p_{10}+p_{12}=\frac16\\
p_9-p_{11}+p_2+p_4=\frac16
\end{cases}, \quad\textnormal{icosahedron:}\begin{cases}
p_1+p_2=\frac{1}{6}\\
p_3+p_4=\frac{1}{6}\\
p_5+p_6=\frac16\\
p_7+p_8=\frac16\\
p_9+p_{10}=\frac16\\
p_{11}+p_{12}=\frac16\\
\varphi(p_1-p_3)+p_6+p_8=\frac16\\
\varphi(p_5-p_7)+p_{10}+p_{12}=\frac16\\
\varphi(p_9-p_{11})+p_2+p_4=\frac16
\end{cases}.$$
As in the previous case, the definitions of the subspaces are label-dependent. The number of possible labellings is $12!$, but taking into account the symmetry groups of cuboctahedron and icosahedron, i.e. the octahedral group $O_h$ and the icosahedral group $I_h$, we get the numbers of different subspaces are  $\frac{12!}{|O_h|}=\frac{12!}{48}=9979200$ and $\frac{12!}{|I_h|}=\frac{12!}{120}=3991680$.
\end{Ex}

Starting from dimension $d=3$, the equation of third degree plays a crucial role in defining $\mathcal {Q}_\Pi^1$. In the following example we take a closer look at the 1-parameter family of 2-design POVMs for which the equations of first and second degree coincide, but not these of third degree. But first let us observe that the lhs of \eqref{itm:cub} in Theorem \ref{thm2des} can be written in a slightly different way if we  assume that both \eqref{itm:lin} and \eqref{itm:squ} from the same theorem hold. We start with:
$$\sum_{j,k,l=1}^np_jp_kp_l\tr(\rho_j\rho_k\rho_l)= \sum_{j=1}^np_j^3\tr(\rho_j^3)+3\sum_{j\neq k}p_j^2p_k\tr(\rho_j^2\rho_k)+\sum_{j\neq k\neq l\neq j}p_jp_kp_l\tr(\rho_j\rho_k\rho_l).$$
But $\rho_j^2=\rho_j$ and $\tr(\rho_j^3)=1$. Moreover,
\begin{align*}
\sum_{j\neq k}p_j^2p_k\tr(\rho_j\rho_k)&= \sum_{j=1}^np_j^2\left(\sum_{k=1}^np_k\tr(\rho_j\rho_k)-p_j\tr(\rho_j^2)\right)=\sum_{j=1}^np_j^2\left(\frac{1}{d+1}\left(\frac{n}{d}p_j+1\right)-p_j\right)\\
&=\left(\frac{n}{d(d+1)}-1\right)\sum_{j=1}^np_j^3+\frac{1}{d+1}\sum_{j=1}^np_j^2=\left(\frac{n}{d(d+1)}-1\right)\sum_{j=1}^np_j^3+\frac{2d}{n(d+1)^2}.
\end{align*}
Thus, the third degree equation now takes the form:
\begin{equation}\label{new3rd}
\left(\frac{3n}{d(d+1)}-2\right)\sum_{j=1}^np_j^3+\sum_{j\neq k\neq l\neq j}p_jp_kp_l\tr(\rho_j\rho_k\rho_l)=\frac{n(d+7)-6d(d+1)}{n(d+1)^3}
\end{equation}
\begin{Ex}[SIC-POVMs in dimension 3]
 Let $$v_{0,j}^t=\frac{1}{\sqrt{2}}(-e^{it}\eta^j,0,1),\quad v_{1,j}^t=\frac{1}{\sqrt{2}}(1,-e^{it}\eta^j,0),\quad v_{2,j}^t=\frac{1}{\sqrt{2}}(0,1,-e^{it}\eta^j),$$ where $\eta=e^{2\pi i/3}$, $j\in\{0,1,2\}$ and $t\in[0,\pi/3)$. Then  $\Pi^t:=(\frac{1}{3}\rho_{m,j}^t)_{m,j=0}^2:=(\frac{1}{3}|v_{m,j}^t\rangle\langle v_{m,j}^t|)_{m,j=0}^2$ is a~SIC-POVM and for $t\neq t'$ the sets $\Pi^t$ and $\Pi^{t'}$ are not unitarily equivalent. It is easy to see that the equations of first and second degree are the same for every $t$, but that is not the case for the third degree equation. Let us introduce the following notation:
 \begin{align*}
J&:=\{(\alpha,\beta,\gamma)\in (\mathbb Z_3\times \mathbb Z_3)^3| \alpha\neq\beta\neq\gamma\neq\alpha\}  \\
J_k&:=\{(\alpha,\beta,\gamma)\in (\mathbb Z_3\times \mathbb Z_3)^3| \alpha_1\neq\beta_1\neq\gamma_1\neq\alpha_1, \alpha_2+\beta_2+\gamma_2=k\ (\textnormal{mod }3)\},\quad k=0,1,2 \\
J_3&:=\{(\alpha,\beta,\gamma)\in (\mathbb Z_3\times \mathbb Z_3)^3| \alpha_1=\beta_1=\gamma_1, \alpha_2\neq\beta_2\neq\gamma_2\neq\alpha_2\}\\
J'&:=J\setminus(J_0\cup J_1\cup J_2\cup J_3).
 \end{align*}
Now, using \eqref{new3rd}, we get (for the sake of clarity we omit the label $t$)
\begin{align*}
\frac{1}{32}&=\frac{1}{4}\sum_{\alpha\in\mathbb Z_3\times\mathbb Z_3}p_\alpha^3 +\sum_{(\alpha,\beta,\gamma)\in J}p_\alpha p_\beta p_\gamma \tr(\rho_\alpha\rho_\beta\rho_\gamma)\\
&=\frac{1}{4}\sum_{\alpha\in\mathbb Z_3\times\mathbb Z_3}p_\alpha^3-\frac{1}{8}\cos(3t)\sum_{(\alpha,\beta,\gamma)\in J_0}p_\alpha p_\beta p_\gamma
+\frac{1}{16}(\cos(3t)+\sqrt{3}\sin(3t))\sum_{(\alpha,\beta,\gamma)\in J_1}p_\alpha p_\beta p_\gamma\\&+\frac{1}{16}(\cos(3t)-\sqrt{3}\sin(3t))\sum_{(\alpha,\beta,\gamma)\in J_2}p_\alpha p_\beta p_\gamma-\frac{1}{8}\sum_{(\alpha,\beta,\gamma)\in J_3}p_\alpha p_\beta p_\gamma+\frac{1}{16}\sum_{(\alpha,\beta,\gamma)\in J'}p_\alpha p_\beta p_\gamma
\end{align*}
For $t=0$ we get the Hesse configuration and the above equation takes a particularly nice form \cite{TabApp}:
$$\sum_{\alpha\in\mathbb Z_3\times\mathbb Z_3}p_\alpha^3-\frac{1}{2}\sum_{(\alpha,\beta,\gamma)\in J_0\cup J_3}p_\alpha p_\beta p_\gamma=0.$$
\end{Ex}

\section{MUB-like POVMs}
\label{MUB-like POVMs}

\subsection{MUBs}
\label{MUBs}

Another remarkable example of a $2$-design POVM is a \textit{complete set of
mutually unbiased bases} (\textit{MUBs}) in $\mathbb{C}^{d}$, where $d$ is a
prime power. (The question whether complete sets of MUBs exist in other
dimensions remains open). Such POVM $\Pi$ consists of $d(d+1)$ effects of the
form $\frac{1}{d+1}\rho_{j}$, where $\rho_{j}\in\mathcal{P}\left(
\mathbb{C}^{d}\right)  $ for $j=1,2,\ldots,d(d+1)$, and
\[
\operatorname{tr}(\rho_{kd+l}\rho_{k^{\prime}d+l^{\prime}})=\left\{
\begin{tabular}
[c]{ll}
$1$ & $\text{\text{for }}k=k^{\prime},l=l^{\prime}$\\
$0$ & $\text{\text{for }}k=k^{\prime},l\neq l^{\prime}$\\
$\frac{1}{d}$ & for\text{ }$k\neq k^{\prime}$
\end{tabular}
\right.
\]
for $k,k^{\prime}\in\{0,\ldots,d\}$ and $l,l^{\prime}\in\{1,\ldots,d\}$.

The fact that a complete set of MUBs constitutes a $2$-design POVM was first shown
explicitly in \cite{KlaRot05}, however, this also follows easily \cite{BelSmo08} from
the fact that in this case $\Pi$\ fulfils both the \textit{Welch bound}
\cite{Wel74}\ and the \textit{Levenshtein bound} \cite{Lev92} for the angles
between $2$-design lines.

Now, the system of linear equations defining the primal affine space
$\mathcal{A}$ takes a particularly nice form since, instead of the initial
$n=d(d+1)$ equations, it suffices to consider the following $d+1$ equations
expressing the simple fact that the probabilities over any basis ($k=0,\ldots
,d$) need to sum up to $\frac{1}{d+1}$:
\[
\sum_{l=1}^{d}p_{kd+l}=\frac{1}{d+1}\text{.}
\]
We shall show that this result is valid  even in more general settings, see
Theorem \ref{Bases} below.

\subsection{MUB-like 2-designs: definition and the orthogonality graph}
\label{MUB-like 2-designs: definition and the orthogonality graph}

Let us consider a $2$-design POVM $\Pi=( \frac{d}{n}\rho_{j})
_{j=1}^{n}$, where $\rho_{j}\in\mathcal{P}\left(  \mathbb{C}^{d}\right)  $ for
$j=1,\ldots,n$. We call $\Pi$ \textit{MUB-like} \cite{HugWal}\ if there
exists $a>0$ such that $\left\{  \operatorname{tr}\left(  \rho_{j}\rho
_{k}\right)  :j,k=1,\ldots,n,j\neq k\right\}  =\left\{  0,a\right\}  $. The
\textit{orthogonality graph} for $\Pi$ is the graph $\Gamma$ with the vertex
set $\left\{  \rho_{j}\right\}  _{j=1}^{n}$ and the adjacency relation defined
by $\rho\sim\sigma$ iff $\rho\perp\sigma$ iff $\operatorname{tr}\left(
\rho\sigma\right)  =0$. Now, we can apply the results of Hoggar, who, in even
more general setting, examined the properties of $\Gamma$ \cite[Theorem
5.2]{Hog82}\cite[Sec. 3]{Hog84}, employing the unpublished notes of Neumaier
\cite{Neu81}. In particular, he showed that $\Gamma$ is a distance regular
graph with diameter $2$, and so strongly regular, see \cite{Broetal89}. We now
specify Hoggar's results for MUB-like $2$-design POVMs. (Note, however, that
Hoggar analysed the graph complementary to $\Gamma$.) Denote by:

\begin{itemize}
\item $\varkappa$ - the number of vertices of $\Gamma$ adjacent (orthogonal)
to a given vertex;

\item $\lambda$ (resp. $\mu$) - the number of vertices of $\Gamma$ adjacent to
a pair of adjacent (resp. not adjacent) vertices.
\end{itemize}

Obviously, $\Gamma$ is non-complete and connected unless $\mu=0$. The
adjacency matrix of $\Gamma$\ has three eigenvalues $\varkappa$, $r$, and
$-q$, where $r,q\in\mathbb{N}$, $1\leq q\leq r^{2}$, $\psi:=\frac{r(q-1)}
{r+q}\in\mathbb{N}$, with the multiplicities $1$, $f$, and $g$, respectively.
Note that the inequality $q\leq r^{2}$ follows from the
Delsarte-Goethals-Seidel bound \cite{Deletal75}, see also \cite[Theorem
7.2]{LehTay09}, for the cardinality of sets of lines in $\mathbb{C}^{d}$
having a prescribed number of angles. Moreover, $a=\frac{1}{r+1}$, and
\begin{align*}
d  &  =r+\psi+1\text{, \ \ }n=d(r+2q)\text{,}\\
\varkappa &  =(d-1)q\text{, \ \ }\mu=\psi q\text{, \ \ }\lambda=\mu
+r-q\text{,}\\
f  &  =n-d^{2}\text{, \ \ }g=d^{2}-1\text{.}
\end{align*}

Clearly, $\Pi$\ saturates both the Welch bound and the Levenshtein bound,
which are in this case equivalent as%
\begin{equation}
a=\frac{2n-d(d+1)}{(n-d)(d+1)}\text{.}\label{WelLev}%
\end{equation}
In fact, as observed in \cite[Proposition 3]{Ega19},
if $\Pi$ is a rank-one POVM consisting of effects of equal trace such that
$\left\{0,a\right\} = \left\{  \operatorname{tr}\left(  \rho_{j}\rho_{k}\right)
:j,k=1,\ldots,n,j\neq k\right\}$, then $\Pi$ forms a
$2$-design if and only if \eqref{WelLev} holds.

Clearly, a maximal clique (full subgraph) in $\Gamma$, i.e. a maximal
orthogonal set, has at most $d$ elements, which is just the Hoffman-Delsarte
bound for $\Gamma$ \cite{Soi15}, as $1+\varkappa/q=d$. Let us recall that a
clique $C$ is called \textit{regular} if every point not in $C$ is adjacent to
the same number (the \textit{nexus}) of points in $C$. It follows from
\cite[Theorem 1.1, Corollary 1.2]{Neu81b} that all regular cliques in $\Gamma
$, if exist, have $d$ elements (and so they are so called \textit{Delsarte
cliques}) with the nexus equal to $\psi$. From \cite[Theorem 3.1.iv]{Soi15} we
deduce that every clique in $\Gamma$ of size $d$ is regular. Hence, $\Gamma$
is a~\textit{Delsarte clique graph }(for definition see \cite{Banetal07}) if
and only if for every $j,k=1,\ldots,n$, $j\neq k$ such that $\rho_{j}\perp
\rho_{k}$, $\left\{  \rho_{j},\rho_{k}\right\}  $ is contained in the same
number (say $c$) of bases. More precisely, $\Gamma$ is then a~Delsarte clique
graph with parameters $\left(  \varkappa,d-1,c\right)  $. Moreover, double
counting argument shows that the number of maximal cliques in every local
graph is $cq$, and the number of maximal cliques in $\Gamma$ is $\left(
r+2q\right)  cq$. Note that if $\Gamma$ is an \textit{edge transitive graph}
and there is a~Delsarte clique in $\Gamma$, then $\Gamma$ is a Delsarte clique
graph. In particular, this is true for \textit{rank }$3$\textit{ graphs}, i.e.
strongly regular graphs generated by a \textit{rank }$3$\textit{ permutation
group} of even order acting on a finite set $X$ (i.e. a group with exactly
three orbits on $X\times X$), as they are edge transitive \cite{Hig64}.

\subsection{Examples of MUB-like 2-designs}
\label{Examples2}

Let us now consider two extreme cases: $q=1$ and $q=r^{2}$, where $n$ is,
resp., minimal and maximal, for a given $d$.

\begin{itemize}
\item $q=1$ (iff $\psi=0$) \smallskip

In this case $a=\frac{1}{d}$, $n=d(d+1)$, $\varkappa=d-1$, $\lambda=d-2$ and
$\mu=0$. In consequence, $\Pi$ must be a \textit{complete set of mutually
unbiased bases} (\textit{MUBs}).

\item $q=r^{2}$ (iff $\psi=r-1$) \smallskip

Then $d=2r$, $a=\frac{1}{r+1}$, $n=2r^{2}\left(  2r+1\right)  =d^{2}\left(
d+1\right)  /2$, $\varkappa=\left(  2r-1\right)  r^{2}$, $\lambda=\left(
r-1\right)  ^{2}r$ and $\mu=\left(  r-1\right)  r^{2}$. Hoggar showed
\cite[Theorem 3.7]{Hog84} that this is also a sufficient and necessary
condition for $\Pi$\ to be a (minimal or tight) $3$\textit{-design}. Note that
$\Pi$ cannot be a $4$-design \cite[Theorem 5.2]{Hog82}.
\end{itemize}

Note that the case $q=r=1$, i.e. the POVM consisting of $3$ MUBs in dimension
$d=2$ that form the regular octahedron inscribed in the Bloch sphere,\ lies at
the intersection of these two extremes. Besides complete sets of MUBs, which
exist for every prime power $d$, but not necessarily for other dimensions, we
know four other MUB-like POVMs, one in each of dimensions $4$, $5$, $6$ and $28$,
see also \cite{Ega19}. The parameters of these POVMs and their orthogonality
graphs are collected in Tab. \ref{Tab:Graphs}.

\begin{table}[!htbp]
	\begin{center}
\begin{tabular}
[c]{|c|c|c|c|c|c|c|c|c|}\hline
$n$ & $\varkappa$ & $\lambda$ & $\mu$ & $d$ & $r$ & $q$ & $\psi$ & $c$\\\hline
$d(d+1)$ & $d-1$ & $d-2$ & $0$ & $d$ & $d-1$ & $1$ & $0$ & $1$\\\hline
$40$ & $12$ & $2$ & $4$ & $4$ & $2$ & $4$ & $1$ & $1$\\\hline
$45$ & $12$ & $3$ & $3$ & $5$ & $3$ & $3$ & $1$ & $1$\\\hline
$126$ & $45$ & $12$ & $18$ & $6$ & $3$ & $9$ & $2$ & $3$\\\hline
$4060$ & $1755$ & $730$ & $780$ & $28$ & $15$ & $65$ & $12$ & $45$\\\hline
\end{tabular}
\end{center}
\caption{Parameters of the orthogonality graphs generated by MUB-like POVMs.}
\label{Tab:Graphs}
\end{table}
Note that all these graphs are Delsarte clique graphs:
\begin{itemize}
\item  For MUBs $\Gamma$ is the disjoint union of maximal cliques and, in
consequence, a Delsarte clique graph with $\psi=0$ and $c=1$.\smallskip

\item $d=4$ \cite[Ex. 7]{Hog82} This $3$-design was introduced in \cite[Sec. 6]{Pen92} under the name of \textit{Penrose dodecahedron}, see
also \cite{ZimPen93,Pen94,MasAra99}, and, independently,
  as the diameters of the
\textit{Witting polytope} \cite{Cox74,CoxShe92}. Waegell and Aravind
\cite{WaeAra17} showed that these two constructions are in fact equivalent.
Hoggar \cite{Hog83} proved that in this case $\Gamma$ is the collinearity
graph of one of two generalised \textit{quadrangles of order }$(3,3)$
\cite{Pay75}\cite[Secs. 3.1.1 \& 6.2.1]{PayTha09}, namely $W(3)$ arising as the
set of absolute points and lines of a symplectic polarity of projective
geometry $PG(3,3)$, see also \cite[Ex. 10.10]{BroMal}. Hence $\Gamma$ is a
(rank $3$) Delsarte clique graph with $\psi=1$ and $c=1$.\smallskip

\item $d=5$ \cite[Ex. 18]{Hog82} This $2$-design was defined in
\cite{Deletal75} and it is equal to the \textit{line system }$\mathcal{K}%
_{5}$ \cite[pp. 107 \& 166]{LehTay09}. We shall show below (Ex. \ref{dim5}) that
in this case $\Gamma$ is the collinearity graph of the unique \textit{generalised
quadrangle of order }$(4,2)$ \cite{Hir94}\cite[Ex. 10.11]{BroMal}. In consequence,
$\Gamma$ is a (rank $3$) Delsarte clique graph with $\psi=1$ and $c=1$.\smallskip

\item $d=6$ \cite[Ex. 7]{Hog82} This $3$-design was introduced in
\cite{Mit1914} and it is just the \textit{line system }$\mathcal{K}_{6}$
\cite[pp.~107 \& 166]{LehTay09}. The orthogonality graph for this design is a
unique rank 3 strongly regular graph with parameters $(126,45,12,18)$
\cite{BloBro84}\cite[Ex. 10.32]{BroMal}, having the collinearity graph of
unique quadrangle of order $(4,2)$, see above, as its local graph \cite[Sec.
5.2]{Crnetal10}. Hence, $\Gamma$ is a~(rank $3$) Delsarte clique graph with
$\psi=2$ and $c=3$.\smallskip

\item $d=28$ \cite[Ex. 20]{Hog82} This MUB-like $2$-design (with $a=1/16$) was
defined in \cite{ConWal73} and it comes from a $28$-dimensional representation
of the sporadic simple Rudvalis group. This group is a~rank $3$ permutation group
of even order \cite{Hig64}\ on a $4060$-element set and, as such, it generates an
edge transitive strongly regular (\textit{Rudvalis}) \textit{graph}, which is
the orthogonality graph for the $2$-design. Now, $\Gamma$ is a (rank $3$)
Delsarte clique graph with $\psi=12$ and $c=45$ \cite[Ex. 10.65]{BroMal}.\smallskip
\end{itemize}

However, unfortunately, we do not know neither whether it is true that the orthogonality graph of a~MUB-like POVM is always a Delsarte clique graph, nor even whether any other MUB-like POVMs apart from those discussed above exist. The answer to this question seems to be of some importance, as we use this property in the next section.

\subsection{Primal affine space in MUB-like case}
\label{Main result}

The next result gives us an alternative description of the primal affine space (see Definition \ref{A}), and ipso facto an alternative form
of the particular case of the primal equation  in the MUB-like case,
assuming that the orthogonality graph $\Gamma$ is a~Delsarte clique graph. On the other hand, it
shows that the primal polytope (see Section \ref{Geometric properties of generalised qplex})
$\mathcal{A\cap}\Delta_{n} = \frac{d}{n}\cdot\operatorname{QSTAB}(\Gamma)\mathcal{\cap}\Delta_{n}$,
where $\operatorname{QSTAB}(\Gamma)$ is the \textit{clique constrained stable
set polytope} for $\Gamma$ \cite{AmaCun18}.

\begin{Th}
\label{Bases}Let $\Pi$ be a MUB-like POVM and let its orthogonality graph be a
Delsarte clique graph. Denote by $\mathcal{C}$  the set of orthogonal bases
(maximal cliques) contained in $\left\{  \rho_{j}\right\}  _{j=1}^{n}$. For $C\in\mathcal{C}$ set
$J_{C}:=\left\{  j=1,\ldots,n:\rho_{j}\in C\right\}$.
Then the following conditions are equivalent for $p\in\left(\mathbb{R}^{+}\right)^{n}$:
\begin{enumerate}[i.]
\item $p$ fulfils (i) from Theorem \ref{thm2des};

\item $p\in\Delta_{n}$ and for every $C\in\mathcal{C}$%

\begin{equation}
\sum_{j\in J_{C}}p_{j}=\frac{d}{n}\text{;} \label{MUBur}%
\end{equation}

\item $p\in\Delta_{n}$ and for every clique $C$ in $\Gamma$%
\begin{equation}
\sum_{j\in J_{C}}p_{j}\leq\frac{d}{n}\text{.} \label{MUBin}%
\end{equation}
\end{enumerate}
\end{Th}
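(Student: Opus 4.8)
The plan is to prove the cycle of implications (i)$\Rightarrow$(ii)$\Rightarrow$(iii)$\Rightarrow$(i). The implication (ii)$\Rightarrow$(iii) should be the easiest: since every clique in $\Gamma$ of size $d$ is a maximal clique lying in $\mathcal{C}$, and since $p\geq 0$, any clique $C$ is a subset of some maximal clique $C'\in\mathcal{C}$, whence $\sum_{j\in J_C}p_j\leq\sum_{j\in J_{C'}}p_j=d/n$; and $p\in\Delta_n$ carries over verbatim. For (iii)$\Rightarrow$(i), I would first use $p\in\Delta_n$ together with the Delsarte-clique structure to upgrade the inequalities \eqref{MUBin} to the equalities \eqref{MUBur}: fix a point $\rho_\ell$; the $cq$ maximal cliques through $\rho_\ell$ partition the $\varkappa=(d-1)q$ neighbours of $\rho_\ell$ into groups of size $d-1$ (each such clique, minus $\rho_\ell$, contributing $d-1$ of them, with $c$ cliques sharing each neighbour — so a weighted count), and summing the inequality \eqref{MUBin} over all maximal cliques together with $\sum_j p_j=1$ forces each \eqref{MUBin} to be an equality. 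Concretely, summing $\sum_{j\in J_C}p_j\leq d/n$ over all $C\in\mathcal{C}$ gives $\sum_j (\#\{C: \rho_j\in C\})p_j\leq |\mathcal{C}|\, d/n$; since $\Gamma$ is a Delsarte clique graph every vertex lies in the same number of maximal cliques, say $N_0$, and $|\mathcal{C}|=nN_0/d$, so the left side equals $N_0\sum_j p_j=N_0$ and the right side equals $N_0$; hence every individual inequality is tight, giving \eqref{MUBur}.

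The substantive step is then (ii)$\Rightarrow$(i) — and, dually, (i)$\Rightarrow$(ii), which I expect to be the main obstacle. Here I would exploit the precise arithmetic of the orthogonality graph recorded in Section \ref{MUB-like 2-designs: definition and the orthogonality graph}. The key is to re-express the linear equation (i) from Theorem \ref{thm2des},
$$\frac{n}{d}p_\ell=(d+1)\sum_{j=1}^n p_j\operatorname{tr}(\rho_j\rho_\ell)-1,$$
by splitting the sum over $j$ according to whether $j=\ell$, $\rho_j\perp\rho_\ell$, or $\operatorname{tr}(\rho_j\rho_\ell)=a=\frac{1}{r+1}$. Since $\operatorname{tr}(\rho_\ell^2)=1$ and the third class of indices has $\operatorname{tr}(\rho_j\rho_\ell)=a$, one gets
$$\frac{n}{d}p_\ell=(d+1)\Bigl(p_\ell + a\,\Bigl(1-p_\ell-\sum_{j:\,\rho_j\perp\rho_\ell}p_j\Bigr)\Bigr)-1,$$
using $\sum_j p_j=1$ to collapse the $a$-weighted sum. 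This is now a \emph{linear} relation between $p_\ell$ and $\Sigma_\ell:=\sum_{j:\,\rho_j\perp\rho_\ell}p_j$, the total probability on the neighbourhood of $\rho_\ell$. On the other hand, the Delsarte-clique hypothesis says the neighbourhood of $\rho_\ell$ is covered by exactly $c q$ maximal cliques (each of size $d$, i.e. $d-1$ neighbours), with each neighbour covered $c$ times; hence, \emph{assuming} (ii), $\sum_{C\ni\rho_\ell}\sum_{j\in J_C}p_j = p_\ell\cdot(\text{number of cliques through }\rho_\ell) + c\,\Sigma_\ell = cq\cdot\frac{d}{n}$, which expresses $\Sigma_\ell$ affinely in terms of $p_\ell$. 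Substituting this into the displayed linear relation should turn it into an identity once the graph parameters ($a=1/(r+1)$, $d=r+\psi+1$, $n=d(r+2q)$, $\varkappa=(d-1)q$, and the clique count through a vertex) are plugged in and the dust settles — this is the one place where a genuine (but finite and mechanical) computation is unavoidable, and where an arithmetic slip would be fatal. The converse direction (i)$\Rightarrow$(ii) runs the same linear algebra backwards: from (i) one recovers the affine relation between $p_\ell$ and $\Sigma_\ell$; one then needs that the clique-sum conditions \eqref{MUBur} are the \emph{only} affine conditions compatible with $\dim\mathcal{A}=d^2-1$, i.e. that the $|\mathcal{C}|$ equations \eqref{MUBur} cut out an affine subspace of exactly the right dimension $d^2-1$ inside $\{x:\sum x_j=1\}$ and that $\mathcal{A}$ is contained in it — the containment $\mathcal{A}\subseteq\{\eqref{MUBur}\}$ following because each $p_\Pi(\tau)$ for $\tau$ a state satisfies $\sum_{j\in J_C}p_j(\tau)=\frac{d}{n}\sum_{j\in J_C}\operatorname{tr}(\tau\rho_j)=\frac{d}{n}\operatorname{tr}(\tau\cdot I_C)$, and $\sum_{\rho_j\in C}\rho_j=I$ because $C$ is an orthonormal basis. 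That last observation, in fact, gives (i)$\Rightarrow$(ii) directly and cleanly, bypassing the graph arithmetic: if $p=p_\Pi(\tau)$ for a state $\tau$ then \eqref{MUBur} is immediate; and by Theorem \ref{thm2des}(i)-equivalent reasoning (i) characterizes exactly $p\in\mathcal{A}=\operatorname{aff}\mathcal{Q}_\Pi$, so it suffices to check the extreme case and extend by affinity. I would therefore structure the proof as: (i)$\Leftrightarrow$"$p\in\mathcal{A}$" (already in the text, from the proof of Theorem \ref{thm2des}); $\mathcal{A}\subseteq\{\eqref{MUBur}\}$ via $\sum_{\rho_j\in C}\rho_j=I$; a dimension count showing the system \eqref{MUBur} over $\mathcal{C}$ has solution space of dimension $d^2-1$ (this is where the Delsarte-clique combinatorics and the parameter identities enter, bounding the rank of the $|\mathcal{C}|\times n$ incidence matrix from below by $n-(d^2-1)$); hence (i)$\Leftrightarrow$(ii); and finally the short squeeze (ii)$\Rightarrow$(iii)$\Rightarrow$(i) via maximal-clique containment and the averaging argument above. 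The main obstacle is the dimension count / rank computation for the clique-incidence matrix — establishing that imposing "sum over each basis equals $d/n$" does not over-constrain — which is exactly where the hypothesis that $\Gamma$ is a Delsarte clique graph (so that cliques of size $d$ are plentiful and uniformly distributed) does the real work.
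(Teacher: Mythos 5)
Your overall architecture is workable, and two of its legs are correct and in places cleaner than the paper's: you get (i)$\Rightarrow$(ii) from the observation that each $C\in\mathcal{C}$ is an orthonormal basis, so $\sum_{\rho_j\in C}\rho_j=I$ and every $p\in\mathcal A=p_\Pi\{\tau=\tau^\ast:\operatorname{tr}\tau=1\}$ satisfies $\sum_{j\in J_C}p_j=\frac dn\operatorname{tr}(\tau I)=\frac dn$ (the paper instead proves this direction by a nexus computation using $a(d-\psi)=1$); and you get (iii)$\Rightarrow$(ii) by summing \eqref{MUBin} over all of $\mathcal C$ (each vertex lies in $cq$ bases and $|\mathcal C|\,d=n\,cq$, so both sides sum to $cq$), forcing every inequality to be tight, whereas the paper sums only over the bases through a fixed vertex and goes straight to (i). The genuine gap is the remaining direction (ii)$\Rightarrow$(i), which after your reductions carries essentially the whole content of the theorem and which you never actually establish: your committed plan rests on the unproved claim that the basis-incidence system \eqref{MUBur} cuts out an affine space of dimension exactly $d^2-1$ (you yourself call this ``the main obstacle'' and offer only the hope that the Delsarte-clique hypothesis ``does the real work''), and your alternative computational route stops at ``the dust settles''. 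As written, neither branch is a proof.

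Both branches can in fact be completed, and you should commit to one. The computation does settle: from (ii), $\Sigma_\ell:=\sum_{j\in O_\ell}p_j=q\bigl(\tfrac dn-p_\ell\bigr)$, and then, using $\sum_jp_j=1$, $(d+1)\bigl(p_\ell+a(1-p_\ell-\Sigma_\ell)\bigr)-1=\bigl[(d+1)(1+a(q-1))\bigr]p_\ell+(d+1)a\bigl(1-\tfrac q{r+2q}\bigr)-1=\tfrac nd\,p_\ell$, because $d+1=\frac{(r+1)(r+2q)}{r+q}$, $a=\frac1{r+1}$ and $\frac nd=r+2q$; this is exactly condition (i), and it is, in substance, the paper's argument run in the opposite direction (the paper derives the per-vertex relation as an inequality from (iii) and closes it by a global sum). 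Alternatively, your rank claim follows from spectral data you already quoted: if $N$ is the $|\mathcal C|\times n$ incidence matrix, then $N^{T}N=c(qI+A)$ (diagonal entries $cq$, entries $c$ on orthogonal pairs, $0$ otherwise), and since $-q$ is an eigenvalue of $A$ with multiplicity $g=d^2-1$ while $\varkappa$ and $r$ exceed $-q$, we get $\operatorname{rank}N=n-(d^2-1)$; hence the affine solution set of \eqref{MUBur} (nonempty, as it contains $c_n$) has dimension $d^2-1$ and, containing $\mathcal A$, coincides with it. Either patch closes the gap; without one of them the proposal remains a plan rather than a proof.
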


\begin{proof}
For $l=1,\ldots,n$ denote by
$\mathcal{C}_{l}:=\{C\in\mathcal{C:}\rho_{l}\in C\}$,
$A_{l}:=\{j=1,\ldots,n:\operatorname{tr}\left( \rho_{j}\rho_{l}\right) =a\}$ and
$O_{l}:=\{j=1,\ldots,n:\operatorname{tr}\left(  \rho_{j}\rho_{l}\right)  =0\}$.
Note that $\left| \mathcal{C}_{l}\right|=cq$.\smallskip

The implication $(ii)\Rightarrow(iii)$ follows from the fact that each clique is
contained in a maximal one.\smallskip

To show $(iii)\Rightarrow(i)$, set $l=1,\ldots,n$. Applying \eqref{MUBin},
we get
\[
\sum_{C\in\mathcal{C}_{l}}\sum_{j\in J_{C}}p_{j}\leq cq\frac{d}{n}\text{.}%
\]
In consequence,
\[
cqp_{l}+c\sum_{j\in O_{l}}p_{j}=cqp_{l}+\sum_{C\in\mathcal{C}_{l}}(\sum_{j\in
J_{C}}p_{j}-p_{l})\leq cq\frac{d}{n}\text{,}%
\]
and so
\begin{equation}
qp_{l}+\sum_{j\in O_{l}}p_{j}\leq q\frac{d}{n}\text{.}\label{in1}%
\end{equation}
Multiplying both sides of \eqref{in1}\ by $(d+1)a$, we get%
\begin{align*}
\frac{q}{r+q}=(d+1)aq\frac{d}{n}
& \geq (d+1)a(qp_{l}+\sum_{j\in O_{l}}p_{j})\\
& =(d+1)aqp_{l}+(d+1)a(1-p_{l}-\sum_{j\in A_{l}}p_{j})\\
& =(d+1)\left(  aq+(1-a)\right)  p_{l}-(d+1)(p_{l}+a\sum_{j\in A_{l}}
p_{j})+(d+1)a\\
& =\frac{n}{d}p_{l}-(d+1)(p_{l}+a\sum_{j\in A_{l}}p_{j})+\frac{r+2q}{r+q}.
\end{align*}
Thus,%
\begin{equation}
\frac{n}{d}p_{l}\leq(d+1)(p_{l}+a\sum_{j\in A_{l}}p_{j})-1\text{.}\label{in2}%
\end{equation}
On the other hand, the sums of both sides of \eqref{in2} over $l=1,\ldots,n$
are equal as%
\[
\sum_{l=1}^{n}((d+1)(p_{l}+a\sum_{j\in A_{l}}p_{j})-1)=(d+1)(1+a\left(
n-\varkappa-1\right)  )-n=\frac{n}{d}=\frac{n}{d}\sum_{l=1}^{n}p_{l}\text{.}%
\]
Hence, $(i)$ follows.\smallskip

To show $(i)\Rightarrow(ii)$, we note first that summing the equalities in \eqref{itm:lin} from Theorem \ref{thm2des}
over $l=1,\ldots,n$, we get
\[
\sum_{l=1}^{n}p_{l}=\frac{n}{\left(  d+1\right)  \left( \left(  n-\varkappa-1 \right)a+1 \right)  -n/d}=1\text{.}
\]
Now fix $C\in\mathcal{C}$. Then
\begin{align*}
\frac{n}{d}\sum_{l\in J_C}p_{l} &  =(d+1)(\sum_{l\in J_C}p_{l}+a\sum_{l\in J_C}%
\sum_{j\in A_{l}}p_{j})-d\\
&  =(d+1)\sum_{l\in J_C}p_{l}+a(d+1)\left(  d-\psi\right)  \left(  1-\sum_{l\in
J_C}p_{l}\right)  -d\\
&  =(d+1)\sum_{l\in J_C}p_{l}+(d+1)\left(  1-\sum_{l\in J_C}p_{l}\right)  -d,
\end{align*}
and so \eqref{MUBur} holds, which completes the proof.
\end{proof}

\begin{Ex}
[Two-distance $2$-design in dimension $5$]\label{dim5}This $2$-design consists
of $45$ projections onto vectors $(1,0,0,0,0)$ and $\frac{1}{4}(0,1,\pm
\eta,\pm\eta,\pm1)$ under all cyclic permutations of their coordinates, where
$\eta:=e^{2\pi i/3}$. The $2$-design POVM is then formed by rescaling these
projections by $\frac{1}{9}$. The inner products between
different states of this $2$-design take the values $0$ and $\frac{1}{4}$. As
a two-distance $2$-design, it carries a strongly regular graph
$\operatorname{srg}(45,12,3,3)$. It turns out that among $78$ non-isomorphic
strongly regular graphs with such parameters, our graph is the one generated by
the point graph of the generalised quadrangle of order $(4,2)$, meaning that
the $45$ vectors constituting the $2$-design can be arranged into $27$
orthonormal bases in such a way that every vector belongs to $3$ bases, every
$2$ vectors belong to at most $1$ basis, and for every vector $v$ and every
base $B$ such that $v\notin B$ there exist a unique vector $v^{\prime}$ and a
basis $B^{\prime}$ such that $v^{\prime}\in B^{\prime}$, $v\in B^{\prime}$ and
$v^{\prime}\in B$, see Fig. \ref{GQ42} (adapted from \cite[p. 61]{Pol98} and
\cite[Fig. 1]{San11}). Obviously, the sum of the probabilities over any basis
is equal to $\frac{1}{9}$. But there are $27$ bases and we need just
$45-(25-1)=21$ linear equations to describe the affine space $\mathcal A$. An example of how to get rid of
$6$  equations (respective bases are marked in blue) in order to obtain a
system of $21$ linearly independent equations is presented in Fig. \ref{GQ42}.
\end{Ex}
\begin{figure}[htb]
	\begin{center}
		\includegraphics[scale=0.045]{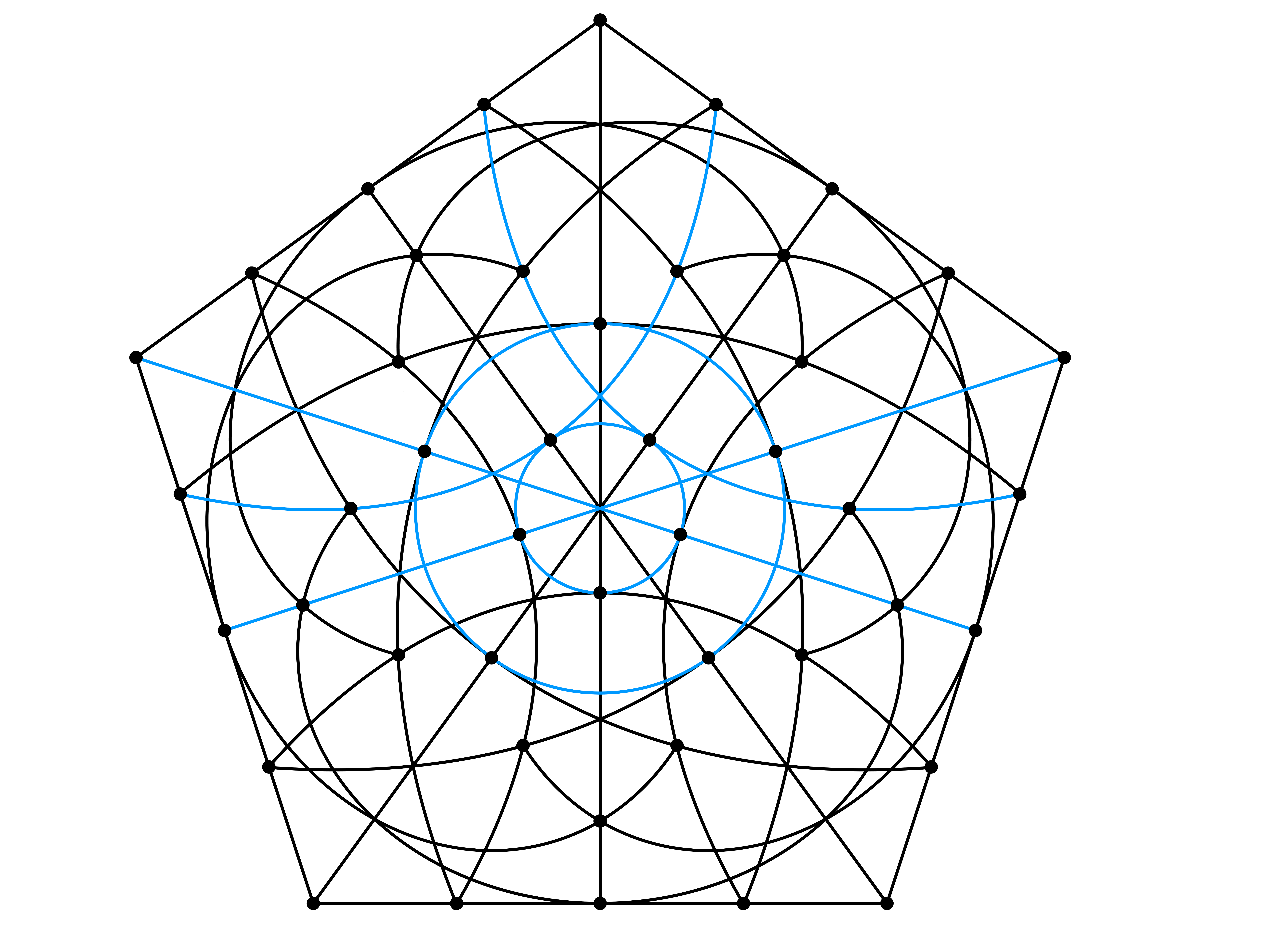}
	\end{center}
	\caption{Two-distance 2-design in $\mathbb C^5$ as $\textnormal{GQ}(4,2)$.
		Vertices represent vectors and curves (lines, circles or arcs of ellipses) represent bases.
		Sums of probabilities over each of blue bases can be omitted in the description of the primal affine subspace
		$\mathcal A$.}
	\label{GQ42}
\end{figure}
\section{Geometric properties of generalised qplex}
\label{Geometric properties of generalised qplex}

To understand a geometric picture hidden behind the algebraic equations
presented in the previous sections, let us consider a $2$-design POVM
$\Pi=(\frac{d}{n}\rho_{j})_{j=1}^{n}$, where $\{
\rho_{j}\}  _{j=1}^{n}\subset\mathcal{P}(\mathbb{C}^{d})$. We show that the geometry of generalised qplex generated by this design is
much alike to that of (Hilbert) qplex, in spite of the fact that now this object is
not located between two simplices but between two dual polytopes lying in a
medial ($d^2-1$)-section of the probability simplex by an affine space. In our reasoning we use similar steps as
in \cite{Appetal17}, although some alterations related to the fact that $\mathcal Q_\Pi$ is not a full-dimensional
subset of $\Delta_n$ are necessary. Finally, we show which geometric properties are preserved if a general morphophoric POVM is considered.

Let us recall that from Corollary \ref{cor2des} it follows that the measurement map
$p_{\Pi}:\mathcal{S}\left(  \mathbb{C}^{d}\right)  \rightarrow\mathcal{Q}%
_{\Pi}$ is a similarity of ratio $m\sqrt{d}$, where $m:=1/\sqrt{n(d+1)}
$. Clearly, the \textit{uniform distribution} $c_{n}:=\left(  1/n,\ldots
,1/n\right)  =p_{\Pi}\left(  \rho_{\ast}\right)  \in\mathcal{Q}_{\Pi}$. We know that
the maximal ball centred at $\rho_{\ast}$ and contained in $\mathcal{S}\left(
\mathbb{C}^{d}\right)  $ has radius $1/\sqrt{d(d-1)}$. Hence, the maximal ball
centred at $c_{n}$ and contained in $\mathcal{Q}_{\Pi}$ has radius
$r:=m/\sqrt{d-1}$. On the other hand, $\mathcal S(\mathbb C^d)$ is contained
in the ball with centre $\rho_{\ast}$ and radius $\sqrt{(d-1)/d}$ with the pure
states contained in the corresponding sphere. Thus, $\mathcal{Q}_{\Pi}$ is contained
in the ball centred at $c_n$ with radius $R:=m\sqrt{d-1}$ and $\mathcal Q_\Pi^1$ is
contained in the corresponding sphere. In short,
\begin{equation}
\label{balls}B_\mathcal A(c_n,r)\subset\mathcal Q_\Pi\subset B_\mathcal A(c_n,R)\quad \textnormal{and}\quad \mathcal Q_\Pi^1\subset \partial B_\mathcal A(c_n,R).
\end{equation}

For $j=1,\ldots,n$ introduce the \textit{basis distributions} given by
$f_{j}:=p_{\Pi}\left(  \rho_{j}\right)  $ generating the \textit{basis
polytope}
$$
D:=\operatorname*{conv}\left\{  f_{j}:j=1,\ldots,n\right\}  \text{.}$$
Note that $\left(  f_{j}\right)_{k}=\left(  f_{k}\right)  _{j}$ for $j,k=1,\ldots,n$ and the \textit{uniform distribution} $c_{n}:=\left(  1/n,\ldots
,1/n\right) = p_{\Pi}\left(  \rho_{\ast}\right)  =p_{\Pi}\left(  \frac{1}%
{n}\sum_{j=1}^{n}\rho_{j}\right) = \frac{1}{n}\sum_{j=1}^{n}f_{j} \in D$.
\begin{F}\label{outball}
$B_{\mathcal{A}}\left(  c_{n},R\right)  $ is the circumscribed ball of $D$.
\end{F}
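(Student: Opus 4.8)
The plan is to exploit the two ingredients already in place just before the statement: that every vertex $f_{j}=p_{\Pi}(\rho_{j})$ lies on the sphere $\partial B_{\mathcal{A}}(c_{n},R)$ — this is the inclusion $\mathcal{Q}_{\Pi}^{1}\subset\partial B_{\mathcal{A}}(c_{n},R)$ from \eqref{balls}, valid since each $\rho_{j}$ is pure — and that $c_{n}=\frac1n\sum_{j=1}^{n}f_{j}$ is the barycentre of the basis distributions. From the first fact, $B_{\mathcal{A}}(c_{n},R)$ is a ball containing all the $f_{j}$; being convex, it contains $D=\operatorname{conv}\{f_{j}:j=1,\ldots,n\}$. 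So the only thing left to verify is that no ball of smaller radius contains $D$, i.e.\ that $B_{\mathcal{A}}(c_{n},R)$ is the (unique) minimal enclosing ball.

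For this I would take an arbitrary closed ball $B_{\mathcal{A}}(c',R')\supseteq D$ inside the primal affine space $\mathcal{A}$ and show $R'\ge R$, with equality only if $c'=c_{n}$. Since $c_{n}$ is the centroid of $f_{1},\ldots,f_{n}$, the parallel-axis identity gives
$$\sum_{j=1}^{n}\|f_{j}-c'\|^{2}=\sum_{j=1}^{n}\|f_{j}-c_{n}\|^{2}+n\|c_{n}-c'\|^{2}=nR^{2}+n\|c_{n}-c'\|^{2},$$
the last equality using $\|f_{j}-c_{n}\|=R$ for all $j$. Hence $\max_{1\le j\le n}\|f_{j}-c'\|^{2}\ge R^{2}+\|c_{n}-c'\|^{2}\ge R^{2}$, and since $R'\ge\|f_{j}-c'\|$ for every $j$ (as $f_{j}\in D\subseteq B_{\mathcal{A}}(c',R')$), we get $R'\ge R$; moreover $R'=R$ forces $c'=c_{n}$. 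This identifies $B_{\mathcal{A}}(c_{n},R)$ as the circumscribed ball of $D$.

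There is essentially no obstacle here — the argument is a one-line application of the centroid identity — once the two ingredients are granted. The only points worth double-checking are that the barycentre relation $c_{n}=\frac1n\sum_{j}f_{j}$ really holds (it follows from $\frac1n\sum_{j}\rho_{j}=I/d=\rho_{\ast}$, a restatement of $\sum_{j}\frac{d}{n}\rho_{j}=I$, together with linearity of $p_{\Pi}$) and that the $f_{j}$ lie on the boundary sphere rather than merely inside the ball, which is exactly the content of $\mathcal{Q}_{\Pi}^{1}\subset\partial B_{\mathcal{A}}(c_{n},R)$ recorded in \eqref{balls}.
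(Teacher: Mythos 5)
Your proof is correct and rests on exactly the two ingredients the paper itself uses (the paper states this Fact without further argument): each $f_{j}=p_{\Pi}(\rho_{j})$ lies on $\partial B_{\mathcal{A}}(c_{n},R)$ by \eqref{balls} since $\rho_{j}$ is pure, and $c_{n}=\frac{1}{n}\sum_{j}f_{j}$. Your additional centroid (parallel-axis) argument showing that no smaller ball encloses $D$ is a valid and welcome extra step that pins down $B_{\mathcal{A}}(c_{n},R)$ as the unique minimal enclosing ball, but it goes beyond what the paper's reading of ``circumscribed ball'' (the central ball through all vertices of $D$) requires.
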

Now, the following statement is the result of simple computation.

\begin{claim}
\label{affine}Let $p\in\mathbb{R}^{n}$. The following conditions are equivalent:
\begin{enumerate}[i.]
\item $p\in\mathcal{A}$;
\item\label{urg} $p=\left(  d+1\right)  \sum_{j=1}^{n}p_{j}f_{j}-dc_{n}$;
\item $p-c_{n}=\left(  d+1\right)  \sum_{j=1}^{n}p_{j}\left(  f_{j}-c_{n}\right) $;
\item $\frac{1}{d+1}\left\langle p,e_{k}\right\rangle =\left\langle p,f_{k}\right\rangle -dm^{2}$ \ \ for every $k=1,\ldots,n$;
\item\label{orto} $p-c_{n}\perp e_{k}-\left(  d+1\right)  \left(  f_{k}-c_{n}\right)  $
\ \ for every $k=1,\ldots,n$.
\end{enumerate}
\end{claim}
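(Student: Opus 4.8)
The plan is to verify the equivalences by unwinding the definitions of $\mathcal{A}$, $f_j$, and $c_n$, and then reading everything through the defining linear equations (i) of Theorem~\ref{thm2des}. First I would recall that, by the discussion following Theorem~\ref{thm2des}, $p\in\mathcal{A}$ if and only if $p$ satisfies the system $\frac{n}{d}p_l=(d+1)\sum_{j=1}^n p_j\tr(\rho_j\rho_l)-1$ for $l=1,\dots,n$. The key translation is that $(f_j)_k=p_\Pi(\rho_j)_k=\frac{d}{n}\tr(\rho_j\rho_k)$, so $\tr(\rho_j\rho_l)=\frac{n}{d}(f_j)_l$ and the $l$-th equation becomes $p_l=(d+1)\sum_j p_j (f_j)_l-\frac{d}{n}=(d+1)\big(\sum_j p_j f_j\big)_l-(dc_n)_l$; collecting over $l$ gives exactly (ii). Subtracting $c_n$ from both sides of (ii) and using $c_n=\frac{1}{n}\sum_j f_j$ together with $\sum_j p_j=1$ (which holds on $\mathcal{A}\subset\{x:\sum x_k=1\}$, and is implied by summing (i) as in the proof of Theorem~\ref{thm2des}) yields $p-c_n=(d+1)\sum_j p_j(f_j-c_n)$, i.e.\ (iii); conversely (iii) plus $\sum_j p_j f_j$-bookkeeping returns (ii). So (i)$\Leftrightarrow$(ii)$\Leftrightarrow$(iii) is essentially a change of notation.

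For (iv) I would pair (ii) with the standard basis vector $e_k$: $\langle p,e_k\rangle=p_k=(d+1)\langle \sum_j p_j f_j,e_k\rangle-d\langle c_n,e_k\rangle=(d+1)\sum_j p_j(f_j)_k-\frac{d}{n}$. Now $\sum_j p_j(f_j)_k=\sum_j p_j(f_k)_j=\langle p,f_k\rangle$ using the symmetry $(f_j)_k=(f_k)_j$ noted just before the statement, and $\frac{d}{n}=(d+1)dm^2$ since $m^2=1/(n(d+1))$. Hence $p_k=(d+1)\langle p,f_k\rangle-(d+1)dm^2$, and dividing by $d+1$ gives (iv). Running this computation backwards (the map $p\mapsto(\langle p,e_k\rangle)_k$ is just the identity in coordinates, so equality for all $k$ is equivalent to the vector equation) recovers (ii).

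Finally, (iv)$\Leftrightarrow$(v) is pure linear algebra: (iv) says $\langle p,f_k\rangle-\frac{1}{d+1}\langle p,e_k\rangle=dm^2$ for every $k$; since $\langle c_n,f_k\rangle=\frac{1}{n}\sum_j(f_k)_j=\frac{1}{n}$ (as $\sum_j(f_k)_j=1$ because $f_k$ is a probability vector) and $\langle c_n,e_k\rangle=\frac{1}{n}$, one checks $\langle c_n,f_k\rangle-\frac{1}{d+1}\langle c_n,e_k\rangle=\frac{1}{n}(1-\frac{1}{d+1})=\frac{d}{n(d+1)}=dm^2$, so subtracting this constant equation from (iv) gives $\langle p-c_n,\,(d+1)(f_k-c_n)-(e_k-c_n)\rangle=0$ — wait, more directly: $\langle p-c_n,\,(d+1)f_k-e_k\rangle=(d+1)dm^2-\frac{1}{n}\cdot\frac{d}{n}\cdot(d+1)$, hm — the cleanest route is to note that (iv) is affine-linear in $p$ and holds at $p=c_n$, hence (iv) holds for $p$ iff it holds for $p-c_n$ with zero right-hand side, i.e.\ iff $\langle p-c_n,\,f_k-\frac{1}{d+1}e_k\rangle=0$, which after scaling by $d+1$ is exactly $p-c_n\perp e_k-(d+1)(f_k-c_n)$ once one observes $(d+1)f_k-e_k$ and $e_k-(d+1)(f_k-c_n)$ differ only by the multiple $(d+1)c_n$ of... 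The main (and only) obstacle is organising these constant-shuffling identities cleanly, in particular tracking the role of $\sum_j p_j=1$ and of the identity $\frac{d}{n}=(d+1)dm^2$; none of it is deep, so I would present it as a short sequence of substitutions rather than a case analysis.
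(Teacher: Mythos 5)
Your handling of (i)--(iv) is correct and is exactly the ``simple computation'' the paper leaves to the reader: since $(f_j)_l=\frac{d}{n}\tr(\rho_j\rho_l)$, the defining system of $\mathcal A$ (condition (i) of Theorem \ref{thm2des}) is coordinatewise identical to (ii); summing coordinates of (ii) (or of (iii)) forces $\sum_j p_j=1$, which makes (ii) and (iii) interchangeable; and pairing (ii) with $e_k$, using $(f_j)_k=(f_k)_j$ and $\frac{d}{n}=(d+1)dm^2$, gives (iv) and, read backwards, recovers (ii).

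The step (iv)$\Leftrightarrow$(v), however, is left genuinely unfinished, and your sentence trails off exactly where the content lies. Homogenising (iv) at the particular solution $c_n$ yields $\langle p-c_n,\,(d+1)f_k-e_k\rangle=0$ for all $k$, whereas (v) reads $\langle p-c_n,\,(d+1)f_k-e_k-(d+1)c_n\rangle=0$ (up to an overall sign); the two families of equations differ by the term $(d+1)\langle p-c_n,c_n\rangle=\frac{d+1}{n}\bigl(\sum_j p_j-1\bigr)$, which does \emph{not} vanish for arbitrary $p\in\mathbb{R}^n$ --- and the statement is about arbitrary $p\in\mathbb{R}^n$, so you cannot quietly assume $\sum_j p_j=1$ at this point. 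The missing (one-line) idea is that each of the systems (iv) and (v) \emph{by itself} forces $\sum_j p_j=1$: sum the equations over $k$, using $\sum_k e_k=\mathbf{1}$, $\sum_k f_k=nc_n$ and $ndm^2=\frac{d}{d+1}$. From (iv) this gives $\frac{1}{d+1}\sum_k p_k=\sum_k p_k-\frac{d}{d+1}$, and from (v) it gives $\langle p-c_n,\mathbf{1}\rangle=0$; either way $\sum_j p_j=1$, so $\langle p-c_n,c_n\rangle=0$, the discrepancy term disappears, and the two orthogonality families coincide. With that observation inserted where your argument breaks off, the proof is complete.
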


Note that \eqref{urg} is just the famous primal equation
(`Urgleichung') introduced by the founders of QBism
and applied to the situation where $\Pi$ plays the role of both the `ground'
and the `sky' measurement.

Let us now consider  the homothety in $\mathcal{A}$ with the centre at $c_{n}$
and the ratio equal to $1/(d+1)$, i.e. the map $h:\mathcal{A}\rightarrow
\mathcal{A}$ given by
\[
h\left(  p\right)  :=c_{n}+\frac{p-c_{n}}{d+1}=\frac{1}{d+1}p+\frac{d}%
{d+1}c_{n}%
\]
for $p\in\mathcal{A}$. Let $P_{\mathcal{A}}:\mathbb{R}^{n}\rightarrow
\mathcal{A}$ denote the orthogonal projection on $\mathcal{A}$.

\begin{claim}
\label{ortproj}In the above situation $h\left(  P_{\mathcal{A}}e_{k}\right)
=f_{k}$ for every $k=1,\ldots,n$. In particular, $h\left(  P_{\mathcal{A}%
}\Delta_{n}\right)  =D$.
\end{claim}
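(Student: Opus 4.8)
The plan is to produce an explicit closed form for $P_{\mathcal A}e_k$ and then just apply $h$. I expect the right guess to be
\[
P_{\mathcal A}e_k=(d+1)f_k-dc_n=c_n+(d+1)(f_k-c_n)\qquad(k=1,\ldots,n),
\]
which is precisely the combination dictated by the shape of condition~(v) in Fact~\ref{affine}. Granting this, the definition of $h$ gives immediately $h(P_{\mathcal A}e_k)=c_n+\tfrac1{d+1}\bigl((d+1)(f_k-c_n)\bigr)=f_k$, which is the first assertion.

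To justify the displayed identity I would invoke the characterisation of the orthogonal projection onto an affine space: since $c_n\in\mathcal Q_\Pi\subset\mathcal A$, the affine space $\mathcal A$ has direction space $\mathcal A_0=\{p-c_n:p\in\mathcal A\}$, and $P_{\mathcal A}e_k$ is the unique point $g\in\mathcal A$ with $e_k-g\perp\mathcal A_0$. I would check this for $g_k:=(d+1)f_k-dc_n$ in two steps. First, $g_k\in\mathcal A$ because it is the affine combination $(d+1)f_k+(-d)c_n$ (coefficients summing to $1$) of the points $f_k\in\mathcal Q_\Pi^1\subset\mathcal A$ and $c_n\in\mathcal A$, and $\mathcal A$ is affine. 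Second, for every $p\in\mathcal A$,
\[
\langle e_k-g_k,\,p-c_n\rangle=\langle e_k-(d+1)(f_k-c_n),\,p-c_n\rangle-\langle c_n,\,p-c_n\rangle,
\]
where the first term vanishes by the implication (i)$\Rightarrow$(v) of Fact~\ref{affine} applied to $p$, and the second term vanishes because $\mathcal A\subset\{x:\sum_ix_i=1\}$ forces $p-c_n$ to have coordinate sum $0$, so $\langle c_n,p-c_n\rangle=\tfrac1n\sum_i(p-c_n)_i=0$. Hence $e_k-g_k\perp\mathcal A_0$ and $g_k=P_{\mathcal A}e_k$, as claimed.

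For the ``in particular'' statement I would only use affinity: $\Delta_n=\conv\{e_1,\ldots,e_n\}$, and both $P_{\mathcal A}$ (orthogonal projection onto an affine subspace) and $h$ (a homothety of $\mathcal A$) are affine maps, hence commute with the formation of convex hulls; therefore $h(P_{\mathcal A}\Delta_n)=\conv\{h(P_{\mathcal A}e_k):k=1,\ldots,n\}=\conv\{f_k:k=1,\ldots,n\}=D$. I do not expect a genuine obstacle here; the only step that is not purely mechanical is spotting the closed form for $P_{\mathcal A}e_k$ suggested by Fact~\ref{affine}, together with remembering that $c_n$ is a normal direction to the hyperplane $\{\sum_ix_i=1\}$ containing $\mathcal A$ — which is exactly what kills the stray term $\langle c_n,p-c_n\rangle$ in the orthogonality check.
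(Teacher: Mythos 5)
Your proof is correct and takes essentially the same route as the paper: the paper likewise reads off $P_{\mathcal A}e_{k}=c_{n}+(d+1)(f_{k}-c_{n})$ from condition (v) of Fact \ref{affine} and then applies $h$, with the ``in particular'' part following by affinity. You merely make explicit the verification the paper leaves implicit, namely that the candidate point lies in $\mathcal A$ and that the residual is orthogonal to the direction space, using $\langle c_{n},p-c_{n}\rangle=0$ for $p\in\mathcal A$.
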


\begin{proof}
Let $k=1,\ldots,n$. Then it follows from  Fact \ref{affine}.\ref{orto}
that $P_{\mathcal{A}}e_{k}=c_{n}+\left(  d+1\right)  \left(  f_{k}%
-c_{n}\right)  $. Hence, $h\left(  P_{\mathcal{A}}e_{k}\right)  =f_{k}$, as desired.
\end{proof}

Define the \textit{primal polytope} $\Delta$ as the ($d^2-1$)-section of
$\Delta_{n}$ by $\mathcal{A}$, i.e.%
\[
\Delta:=\mathcal{A}\cap\Delta_{n}\text{.}%
\]
It follows from Theorem \ref{thm2des} that%
\[
D\subset\mathcal{Q}_{\Pi}\subset\Delta\text{,}%
\]
see Fig. \ref{Fig4}. Thus%
\[
D\cup B_{\mathcal{A}}\left(  c_{n},r\right)  \subset\mathcal{Q}_{\Pi}%
\subset\Delta\cap B_{\mathcal{A}}\left(  c_{n},R\right)  \text{.}%
\]

\begin{Th}
The above ($d^2-1$)-section is \textit{medial}, i.e. the vertices of $\Delta_{n}$ are
equidistant from $\mathcal{A}$. More precisely,
\[
\operatorname{dist}\left(  e_{k},\mathcal{A}\right)  =\sqrt{1-\frac{d^{2}}{n}}%
\]
for every $k=1,\ldots,n$.
\end{Th}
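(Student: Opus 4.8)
The plan is to compute the distance from a vertex $e_k$ of $\Delta_n$ to the primal affine space $\mathcal{A}$ directly, using the orthogonal projection $P_\mathcal{A}$ onto $\mathcal{A}$. Since $\operatorname{dist}(e_k,\mathcal{A}) = \|e_k - P_\mathcal{A} e_k\|$, the first step is to get a usable expression for $P_\mathcal{A} e_k$. This is already available: the proof of Fact \ref{ortproj} (via Fact \ref{affine}.\ref{orto}) gives
\[
P_\mathcal{A} e_k = c_n + (d+1)(f_k - c_n),
\]
where $f_k = p_\Pi(\rho_k)$ is the $k$-th basis distribution and $c_n = (1/n,\ldots,1/n)$.

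Next I would expand $\|e_k - P_\mathcal{A} e_k\|^2 = \|(e_k - c_n) - (d+1)(f_k - c_n)\|^2$ by bilinearity, which reduces everything to the three inner products $\langle e_k - c_n, e_k - c_n\rangle$, $\langle e_k - c_n, f_k - c_n\rangle$, and $\langle f_k - c_n, f_k - c_n\rangle$. The first is $1 - 1/n$. For the other two, note $f_k - c_n = p_\Pi(\rho_k) - p_\Pi(\rho_\ast) = p_\Pi(\pi_0(\rho_k))$, so by the similarity relation \eqref{iso} (with similarity-ratio square $\alpha = d/(n(d+1)) = dm^2$, from Corollary \ref{cor2des}) we have $\langle f_j - c_n, f_k - c_n\rangle = dm^2\,(\rho_j - \rho_\ast \mid \rho_k - \rho_\ast) = dm^2(\tr(\rho_j\rho_k) - 1/d)$. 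Hence $\langle f_k - c_n, f_k - c_n\rangle = dm^2(1 - 1/d) = m^2 d(d-1) = (d-1)/n$ (consistent with $R = m\sqrt{d-1}$ being the circumradius of $D$). For the cross term, $\langle e_k - c_n, f_k - c_n\rangle = \langle e_k, f_k\rangle - \langle e_k, c_n\rangle - \langle c_n, f_k\rangle + \langle c_n, c_n\rangle = (f_k)_k - 1/n - 1/n + 1/n$; and $(f_k)_k = (p_\Pi(\rho_k))_k = \frac{d}{n}\tr(\rho_k^2) = d/n$, so this equals $(d-1)/n$ as well. (Alternatively, Fact \ref{affine}.\ref{urg} applied to $p = f_k$, or direct use of $P_\mathcal{A} e_k \in \mathcal{A}$, confirms these.)

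Plugging in, $\|e_k - P_\mathcal{A} e_k\|^2 = (1 - 1/n) - 2(d+1)\cdot\frac{d-1}{n} + (d+1)^2\cdot\frac{d-1}{n}$. The last two terms combine to $\frac{d-1}{n}(d+1)\big((d+1) - 2\big) = \frac{(d-1)(d+1)(d-1)}{n} = \frac{(d^2-1)(d-1)}{n}$; wait — I should instead simply collect: $-2(d+1)(d-1)/n + (d+1)^2(d-1)/n = \frac{(d-1)(d+1)}{n}\big((d+1)-2\big) = \frac{(d^2-1)(d-1)}{n}$. Hmm, that does not immediately give $1 - d^2/n$, so the careful bookkeeping here is exactly the point where I must not slip: re-examine whether $(f_k)_k = d/n$ feeds the cross term as $(d-1)/n$ or whether a sign or a $c_n$-term was mishandled. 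The correct target is $1 - d^2/n$, so the algebra must collapse to $(1 - 1/n) + (\text{something}) = 1 - d^2/n$, i.e. the non-constant terms must sum to $(1 - d^2)/n + 1/n \cdot(\ldots)$; concretely one expects $\|e_k - P_\mathcal{A}e_k\|^2 = \|e_k - c_n\|^2 - \|P_\mathcal{A}e_k - c_n\|^2 = (1 - 1/n) - (d+1)^2\cdot\frac{d-1}{n}$ is wrong too. The clean route, and the one I would actually write, is the Pythagorean identity $\|e_k - P_\mathcal{A}e_k\|^2 = \|e_k\|^2 - \|P_\mathcal{A}e_k\|^2$ is not valid unless $\mathcal{A}$ passes through the origin, which it does not; instead use $\|e_k - P_\mathcal{A}e_k\|^2 = \|e_k - c_n\|^2 - \|P_\mathcal{A}e_k - c_n\|^2$, valid because $c_n \in \mathcal{A}$ and $e_k - P_\mathcal{A}e_k \perp \mathcal{A} - c_n$. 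Then $\|P_\mathcal{A}e_k - c_n\|^2 = (d+1)^2\|f_k - c_n\|^2 = (d+1)^2\cdot\frac{d-1}{n}$, which is too large; so in fact $P_\mathcal{A}e_k - c_n = (d+1)(f_k - c_n)$ must be re-derived, and the honest computation is $\operatorname{dist}(e_k,\mathcal{A})^2 = \|e_k - c_n\|^2 - \|P_\mathcal{A}(e_k - c_n)\|^2$ with $P_\mathcal{A}(e_k - c_n)$ the projection onto the linear subspace $\mathcal{A} - c_n$. The main obstacle, then, is purely this bookkeeping: pinning down $P_\mathcal{A}(e_k-c_n)$ correctly (it is \emph{not} $(d+1)(f_k-c_n)$ in norm terms unless one is careful about the distinction between $P_\mathcal{A}e_k$ landing in $\mathcal{A}$ versus projecting the difference), and then verifying $\|e_k - c_n\|^2 - \|P_\mathcal{A}(e_k-c_n)\|^2 = 1 - d^2/n$. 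Once the projection is correctly identified — for instance by checking $\langle e_k - P_\mathcal{A}e_k, f_j - c_n\rangle = 0$ for all $j$ using the inner-product formula above — the final arithmetic is a one-line simplification, and symmetry in $k$ is manifest since nothing in the computation distinguishes the index. I would present it as: establish $\operatorname{dist}(e_k,\mathcal{A}) = \operatorname{dist}(e_k - c_n, \mathcal{A} - c_n)$, then use that $\{f_j - c_n\}$ spans $\mathcal{A} - c_n$ together with the Gram matrix $\langle f_j - c_n, f_k - c_n\rangle = dm^2(\tr(\rho_j\rho_k) - 1/d)$ to evaluate the projection, and finish with the $2$-design identity $\sum_j \tr(\rho_j\rho_l) = \frac{n}{d}\cdot\frac{1}{d+1} + 1 - \frac{1}{d+1}\cdot\frac{?}{}$ — more simply, $\sum_j (f_j)_l = \sum_j (f_l)_j = 1$ and $\sum_j \tr(\rho_j\rho_l)\cdot\frac{d}{n} = (d+1)^{-1}(\frac{n}{d}p_l + 1)$ from \eqref{itm:lin}, giving the reduction cleanly.
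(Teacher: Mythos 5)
Your intended route is exactly the paper's: use $P_{\mathcal A}e_k=c_n+(d+1)(f_k-c_n)$ (Fact \ref{ortproj}) together with the Pythagorean identity $\operatorname{dist}^2(e_k,\mathcal A)=\|e_k-c_n\|^2-\|P_{\mathcal A}e_k-c_n\|^2$. However, the computation is derailed by a concrete arithmetic slip and is never brought to a conclusion. The error is in the norm of $f_k-c_n$: the similarity relation gives
\[
\|f_k-c_n\|^2=dm^2\bigl(\operatorname{tr}(\rho_k^2)-\tfrac1d\bigr)=dm^2\Bigl(1-\tfrac1d\Bigr)=(d-1)m^2=\frac{d-1}{n(d+1)},
\]
not $(d-1)/n$ as you assert (your own consistency check should have caught this, since $R^2=(d-1)m^2$ is precisely the circumradius squared of $D$). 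Your cross term $\langle e_k-c_n,f_k-c_n\rangle=(f_k)_k-\tfrac1n=\tfrac{d-1}{n}$ is correct, and indeed equals $(d+1)\|f_k-c_n\|^2$ with the corrected norm, which confirms (rather than contradicts) the orthogonality $e_k-P_{\mathcal A}e_k\perp\mathcal A-c_n$. Because you carried the wrong value, both your bilinear expansion and the Pythagorean route appeared to miss the target $1-d^2/n$, and you then (wrongly) cast doubt on the projection formula $P_{\mathcal A}e_k-c_n=(d+1)(f_k-c_n)$, which is in fact valid; the proposal then trails off into placeholders without ever producing the claimed identity. That is a genuine gap: no proof is actually completed.

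With the corrected norm the argument closes in one line, exactly as in the paper:
\[
\operatorname{dist}^2(e_k,\mathcal A)=\|e_k-c_n\|^2-(d+1)^2\|f_k-c_n\|^2=\frac{n-1}{n}-(d+1)^2(d-1)m^2=\frac{n-1}{n}-\frac{d^2-1}{n}=1-\frac{d^2}{n},
\]
and your direct expansion gives the same:
\[
\frac{n-1}{n}-2(d+1)\frac{d-1}{n}+(d+1)^2\frac{d-1}{n(d+1)}=\frac{n-1-2(d^2-1)+(d^2-1)}{n}=1-\frac{d^2}{n}.
\]
So the missing ingredient is not a new idea but the correct evaluation $\|f_k-c_n\|^2=(d-1)m^2$ (equivalently, using Theorem \ref{thm2des}.(ii), $\|f_k\|^2=\tfrac{2d}{n(d+1)}$, hence $\|f_k-c_n\|^2=\tfrac{2d}{n(d+1)}-\tfrac1n$); once that is in place, your plan coincides with the paper's proof.
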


\begin{proof}
Let $k=1,\ldots,n$. From the Pythagorean theorem, Fact \ref{ortproj}, \eqref{balls}, and Theorem \hyperref[itm:squ]{\ref*{thm2des}.\ref*{itm:squ}} we get
\begin{align*}
\operatorname{dist}^2\left(  e_{k},\mathcal{A}\right)   &  =\left\|
P_{\mathcal{A}}e_{k}-e_{k}\right\|  ^{2}=\left\|  e_{k}-c_{n}\right\|
^{2}-\left\|  P_{\mathcal{A}}e_{k}-c_{n}\right\|  ^{2}\\
&  =\frac{n-1}{n}-\left(  d+1\right)  ^{2}\left\|  f_{k}-c_{n}\right\|
^{2}=\frac{n-1}{n}-\left(  d+1\right)  ^{2}\left(  d-1\right)  m^{2} =1-\frac{d^{2}}{n}\text{.}\qedhere
\end{align*}
\end{proof}
\begin{figure}[htb]	
\begin{center}	\includegraphics[scale=0.7]{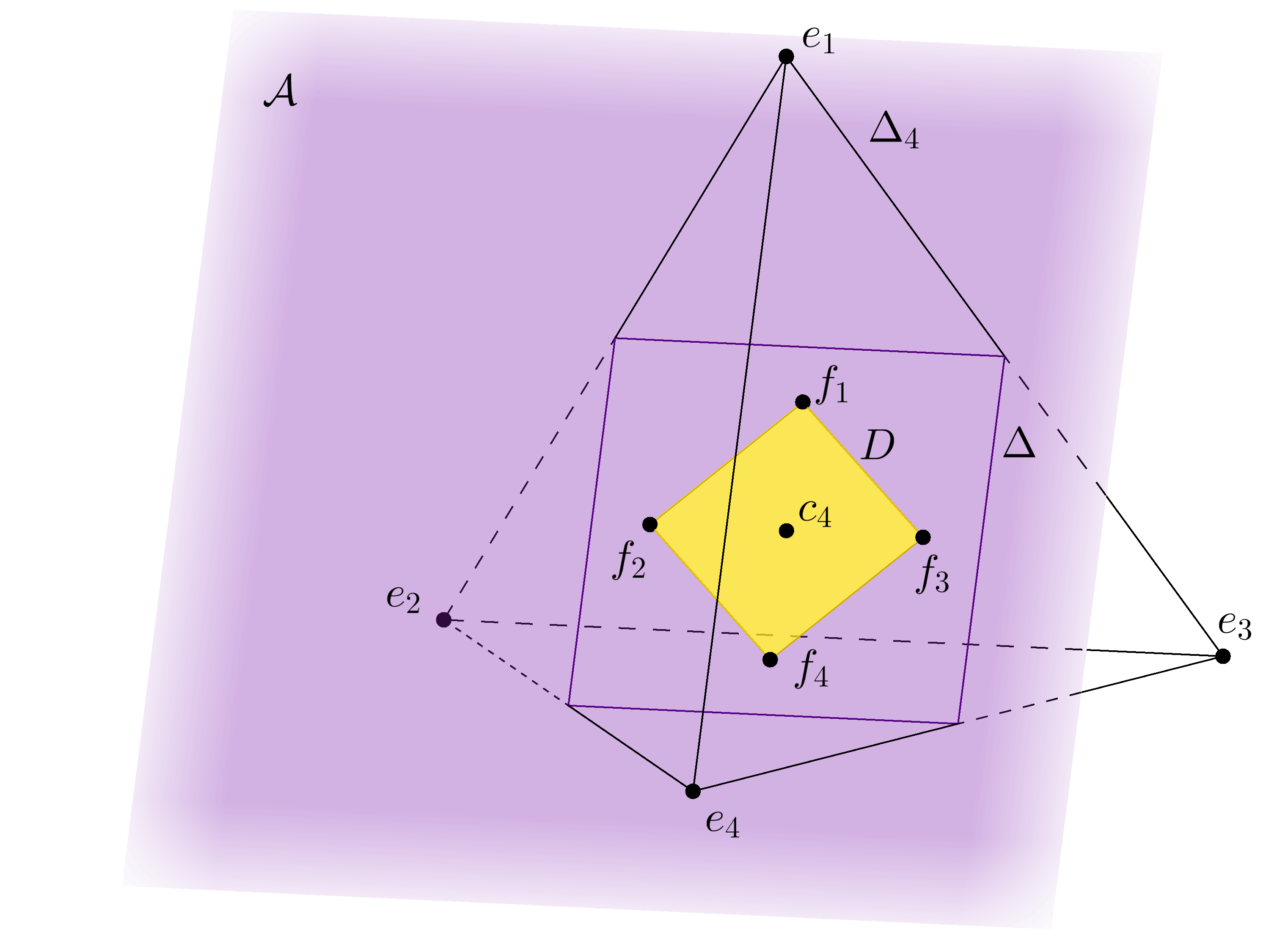}
\end{center}
	\caption{A central and medial 2-section of $\Delta_4$ by the primal affine space $\mathcal A$. The points $f_1,\ldots,f_4$ are images under a homothety in $\mathcal A$ with the centre at $c_4$ of the orthogonal projections of $e_1,\ldots,e_4$ onto $\mathcal A$.}
\label{Fig4}
\end{figure}
Note that $\mathcal{A}\subset\{p\in\mathbb{R}^{n}:\sum_{j=1}^{n}p_{j}=1\}$.
Hence $c_{n}\perp\mathcal{A}$, and so $\left\langle p-c_{n},q-c_{n}%
\right\rangle =\left\langle p,q\right\rangle -\frac{1}{n}$ for $p,q\in
\mathcal{A}$. Let $\iota$ be the inversion in $\mathcal{A}$ through $c_{n}$,
given by $\iota\left(  p\right)  :=2c_{n}-p$ for $p\in\mathcal{A}$. Now we can characterise $\mathcal{A}$ more precisely. The following fact is the result of direct computation.

\begin{claim}
\label{Delta}Let $p\in\mathcal{A}$. The  following conditions are equivalent:

\begin{enumerate}[i.]
\item $p\in\Delta$;

\item $p_{k}\geq0$ $\ \ $for every $k=1,\ldots,n$;

\item $\left\langle p,f_{k}\right\rangle \geq dm^{2}$ $\ \ $for every
$k=1,\ldots,n$;

\item $\left\langle p,q\right\rangle \geq dm^{2}$ $\ \ $for every $q\in D$;

\item \label{Delta5} $m^{2}\geq\left\langle p-c_{n},q-c_{n}\right\rangle $ \ $\ $for every
$q\in\iota\left(  D\right)  $.
\end{enumerate}
\end{claim}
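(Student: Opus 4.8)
The plan is to prove the five conditions equivalent via the short cycle $(i)\Leftrightarrow(ii)\Leftrightarrow(iii)\Leftrightarrow(iv)\Leftrightarrow(v)$, each step being essentially a one-line computation built on ingredients already at hand: the definition $\Delta=\mathcal A\cap\Delta_n$; the equivalent descriptions of membership in $\mathcal A$ from Fact~\ref{affine}; the identity $\left\langle p-c_n,q-c_n\right\rangle=\left\langle p,q\right\rangle-1/n$ valid for $p,q\in\mathcal A$ (recorded just before the statement); and the numerical relation $m^2=1/(n(d+1))$, which I shall use in the form $dm^2=1/n-m^2$.

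For $(i)\Leftrightarrow(ii)$: since $p$ is already assumed to lie in $\mathcal A\subset\{x\in\mathbb R^n:\sum_j x_j=1\}$, the only extra content of $p\in\Delta=\mathcal A\cap\Delta_n$ is nonnegativity of every coordinate. For $(ii)\Leftrightarrow(iii)$: the fourth equivalent form in Fact~\ref{affine}.(iv) reads $\tfrac{1}{d+1}\left\langle p,e_k\right\rangle=\left\langle p,f_k\right\rangle-dm^2$, and $\left\langle p,e_k\right\rangle=p_k$, so $p_k=(d+1)\bigl(\left\langle p,f_k\right\rangle-dm^2\bigr)$; hence $p_k\geq0$ exactly when $\left\langle p,f_k\right\rangle\geq dm^2$. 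For $(iii)\Leftrightarrow(iv)$: the map $q\mapsto\left\langle p,q\right\rangle$ is affine, so its minimum over the polytope $D=\conv\{f_j:j=1,\ldots,n\}$ is attained at a vertex; thus $\left\langle p,q\right\rangle\geq dm^2$ for all $q\in D$ iff this holds for each $f_k$.

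For $(iv)\Leftrightarrow(v)$: write an arbitrary element of $\iota(D)$ as $q'=2c_n-q$ with $q\in D$, so that $q'-c_n=-(q-c_n)$; combining this with the polarisation identity gives $\left\langle p-c_n,q'-c_n\right\rangle=-\bigl(\left\langle p,q\right\rangle-1/n\bigr)=1/n-\left\langle p,q\right\rangle$. Hence $(v)$ asserts $\left\langle p,q\right\rangle\geq 1/n-m^2$ for every $q\in D$, and since $1/n-m^2=dm^2$ this is precisely $(iv)$. The computation is routine throughout; the only points demanding care are the bookkeeping of the constant $dm^2$ together with its alternative expression $1/n-m^2$, and the sign reversal introduced by the inversion $\iota$ when translating statements about $D$ into statements about $\iota(D)$, so I anticipate no genuine obstacle.
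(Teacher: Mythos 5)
Your proposal is correct and fills in exactly the ``direct computation'' the paper alludes to: you use the hypothesis $p\in\mathcal{A}$ together with Fact~\ref{affine}.(iv), convexity of $D$, the identity $\left\langle p-c_{n},q-c_{n}\right\rangle=\left\langle p,q\right\rangle-\tfrac{1}{n}$ recorded just before the claim, and the numerical relation $\tfrac{1}{n}-m^{2}=dm^{2}$, which is precisely the intended route. All steps check out, including the sign reversal under $\iota$ and the applicability of the inner-product identity to $q\in D\subset\mathcal{A}$.
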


\begin{F} $B_{\mathcal{A}}%
(c_{n},r)$ is the inscribed (maximal central) ball contained in $\Delta$.
\end{F}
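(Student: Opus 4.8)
The plan is to show that the distance, measured within $\mathcal{A}$, from the centre $c_{n}$ to each bounding hyperplane of $\Delta$ equals exactly $r$; together with the inclusion $B_{\mathcal{A}}(c_{n},r)\subset\mathcal{Q}_{\Pi}\subset\Delta$ already recorded in \eqref{balls}, this identifies $r$ as the largest radius of a central ball contained in $\Delta$. By Fact \ref{Delta}, $\Delta=\{p\in\mathcal{A}:p_{k}\geq 0\text{ for all }k=1,\ldots,n\}$, so the hyperplanes that matter are $H_{k}:=\{p\in\mathcal{A}:p_{k}=0\}$; since $(c_{n})_{k}=1/n>0$, the centre lies strictly on the admissible side of every $H_{k}$, and hence the inscribed central ball of $\Delta$ has radius $\min_{k}\operatorname{dist}_{\mathcal{A}}(c_{n},H_{k})$.

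The single computation to carry out is this distance. Let $\mathcal{A}_{0}$ denote the linear part of $\mathcal{A}$, so that $c_{n}\perp\mathcal{A}_{0}$; then the coordinate functional $p\mapsto p_{k}=\langle p,e_{k}\rangle$ restricted to $\mathcal{A}$ has gradient, within $\mathcal{A}_{0}$, equal to $P_{\mathcal{A}_{0}}e_{k}=P_{\mathcal{A}}e_{k}-c_{n}=(d+1)(f_{k}-c_{n})$, the last equality being precisely what is drawn from Fact \ref{affine}.\ref{orto} in the proof of Fact \ref{ortproj}. Using moreover that $\|f_{k}-c_{n}\|=R=m\sqrt{d-1}$ (because $f_{k}\in\mathcal{Q}_{\Pi}^{1}\subset\partial B_{\mathcal{A}}(c_{n},R)$ by \eqref{balls}) and that $(c_{n})_{k}=1/n=(d+1)m^{2}$ (directly from $m^{2}=1/(n(d+1))$), the elementary formula for the distance between parallel level sets of an affine functional gives
\begin{equation*}
\operatorname{dist}_{\mathcal{A}}(c_{n},H_{k})=\frac{(c_{n})_{k}}{\|P_{\mathcal{A}_{0}}e_{k}\|}=\frac{(d+1)m^{2}}{(d+1)\,m\sqrt{d-1}}=\frac{m}{\sqrt{d-1}}=r
\end{equation*}
for every $k=1,\ldots,n$.

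It then remains only to assemble the pieces: $B_{\mathcal{A}}(c_{n},r)\subset\Delta$ by \eqref{balls}, while for any $r'>r$ the ball $B_{\mathcal{A}}(c_{n},r')$ reaches beyond $H_{1}$ and so contains a point with first coordinate negative, hence a point outside $\Delta$; therefore $B_{\mathcal{A}}(c_{n},r)$ is the inscribed (maximal central) ball of $\Delta$. I do not anticipate a real obstacle — the argument is essentially dual to the mediality computation and recycles its outputs — the only point needing a little care being the passage from the ambient gradient $e_{k}$ of the coordinate functional to its gradient $P_{\mathcal{A}_{0}}e_{k}$ inside the affine slice, together with the identification $P_{\mathcal{A}_{0}}e_{k}=P_{\mathcal{A}}e_{k}-c_{n}$, which relies on $c_{n}\perp\mathcal{A}_{0}$.
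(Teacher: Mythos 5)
Your argument is correct, but it proceeds differently from the paper's proof, so a comparison is worthwhile. The paper exhibits explicit tangency points: for each $k$ it takes the state $\rho_k':=(I-\rho_k)/(d-1)$ and its image $f_k':=p_\Pi(\rho_k')$, and uses the similarity to check $\|f_k'-c_n\|=r$ and $\langle f_k',f_k\rangle=dm^2$, so that by Fact \ref{Delta} (equivalently, $(f_k')_k=0$) the point $f_k'$ lies on $\partial_{\mathcal A}\Delta\cap\partial B_{\mathcal A}(c_n,r)$; combined with the inclusion $B_{\mathcal A}(c_n,r)\subset\mathcal Q_\Pi\subset\Delta$ this gives maximality. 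You instead compute the inradius directly as $\min_k\operatorname{dist}_{\mathcal A}(c_n,H_k)$, using the in-slice gradient $P_{\mathcal A_0}e_k=P_{\mathcal A}e_k-c_n=(d+1)(f_k-c_n)$ from Facts \ref{affine} and \ref{ortproj} together with $\|f_k-c_n\|=R$ and $(c_n)_k=(d+1)m^2$, obtaining $\operatorname{dist}_{\mathcal A}(c_n,H_k)=m^2/R=r$ for every $k$. The two arguments are dual to each other: your perpendicular foot on $H_k$ is exactly the paper's $f_k'$, since $\rho_k'-\rho_\ast=-\tfrac{1}{d-1}(\rho_k-\rho_\ast)$, so $f_k'-c_n$ points along $-(f_k-c_n)$. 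What each buys: the paper's route produces the explicit touch points as images of quantum states (reinforcing that the inner ball of $\mathcal Q_\Pi$ already touches $\partial_{\mathcal A}\Delta$), while yours is a self-contained Euclidean computation that does not actually need the inclusion \eqref{balls} at all (your facet-distance formula simultaneously gives $B_{\mathcal A}(c_n,r)\subset\Delta$ and its maximality) and makes the duality relation $rR=m^2$ visible in the arithmetic. Your care about replacing the ambient gradient $e_k$ by $P_{\mathcal A_0}e_k$, justified by $c_n\perp\mathcal A_0$, is exactly the point that needed checking, and it is handled correctly.
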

\begin{proof}
Taking $\rho_{k}^{\prime
}:=(I-\rho_{k})/(d-1)$ and $f_{k}^{\prime}:=p_{\Pi
}(\rho_{k}^{\prime})$, from the similarity of $\mathcal S(\mathbb C^d)$ and $\mathcal Q_\Pi$ we get  $\left\|  f_{k}^{\prime
}-c_{n}\right\|^2=dm^2\|\rho_k'-\rho_{\ast}\|_{HS}^2=\frac{m^2}{d-1}=r^2$ and $\langle f_k',f_k\rangle=dm^2(\tr(\rho_k'\rho_k)+1)=dm^2$. Thus, from Fact~\ref{Delta} it follows that $f_{k}^{\prime}%
\in\partial_{\mathcal{A}}\Delta\cap B(c_{n},r)$.
\end{proof}

Now we discuss the polarity of basis and primal polytopes.

\begin{definition}
Let $C$ be a convex subset of $\mathcal{A}$, $s>0$. Define \textit{polar} and
\textit{dual} of $C$ in $\mathcal{A}$ with respect to the  sphere with centre at $c$ of radius $s$
by%
\[
C^{\circ}:=\left\{  p\in\mathcal{A}:s^{2}\geq\left\langle p-c%
,q-c\right\rangle \text{ for every }q\in C\right\}
\quad \textnormal{and}\quad
C^{\star}:=\iota\left(  C^{\circ}\right)\text{.}\]
\end{definition}

Note that if the sphere in the definition above is centred at $c_n$, we get $$C^{\star}=\left\{  p\in\mathcal{A}%
:\left\langle p,q\right\rangle \geq \frac{1}{n}-s^{2}\text{ for every }q\in C\right\}
\text{.}$$
From the convex analysis we know that the following fact holds.

\begin{claim}
\label{dual}In the above situation

\begin{enumerate}[i.]
\item $C^{\circ\circ}=C$;

\item $C^{\star}=\left(  \iota\left(  C\right)  \right)  ^{\circ}$;

\item $C^{\star\star}=C$.
\end{enumerate}
\end{claim}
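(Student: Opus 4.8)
The plan is to recognise (i) as the classical bipolar theorem and then deduce (ii) and (iii) by elementary manipulations exploiting that $\iota$ is an involution. Throughout I take the defining sphere to be central, $c=c_{n}$; this is the only case used in this section and the one in which the inversion centre $c_{n}$ and the centre of the sphere coincide, which is what makes $C^{\star}=\iota(C^{\circ})$ well behaved.

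For (i) I would translate $\mathcal{A}$ so that $c_{n}$ becomes the origin; then $C^{\circ}=\{p\in\mathcal{A}:\langle p,q\rangle\le s^{2}\ \text{for all }q\in C\}$, which, up to the harmless rescaling $p\mapsto p/s$, is exactly the polar set of convex analysis. The bipolar theorem in the finite-dimensional space $\mathcal{A}$ gives $C^{\circ\circ}=\overline{\conv}(C\cup\{c_{n}\})$. For the three sets that actually occur here, $C=D$, $C=\Delta$ and $C=\mathcal{Q}_{\Pi}$, the set $C$ is compact, convex and contains $c_{n}$, so the right-hand side collapses to $C$ and (i) follows. (In general (i) holds precisely when $C$ is a closed convex subset of $\mathcal{A}$ containing $c_{n}$; this is the standing hypothesis.)

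For (ii) I would first record the auxiliary identity $(\iota(E))^{\circ}=\iota(E^{\circ})$, valid for any $E\subseteq\mathcal{A}$. Indeed, a generic point of $\iota(E)$ is $q=2c_{n}-q'$ with $q'\in E$, so $q-c_{n}=-(q'-c_{n})$; since also $\iota(p)-c_{n}=-(p-c_{n})$, the requirement $\langle p-c_{n},q-c_{n}\rangle\le s^{2}$ for all $q\in\iota(E)$ is equivalent to $\langle\iota(p)-c_{n},q'-c_{n}\rangle\le s^{2}$ for all $q'\in E$, i.e. to $\iota(p)\in E^{\circ}$, and since $\iota\circ\iota=\mathrm{id}$ this reads $p\in\iota(E^{\circ})$. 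Taking $E=C$ and using $C^{\star}:=\iota(C^{\circ})$ gives $C^{\star}=\iota(C^{\circ})=(\iota(C))^{\circ}$, which is (ii).

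Finally, (iii) follows by chaining the previous items: $C^{\star\star}=\iota\!\big((C^{\star})^{\circ}\big)=\iota\!\big((\iota(C^{\circ}))^{\circ}\big)=\iota\!\big(\iota(C^{\circ\circ})\big)=\iota\!\big(\iota(C)\big)=C$, where the first equality is the definition of the dual applied to $C^{\star}$, the second substitutes $C^{\star}=\iota(C^{\circ})$, the third applies the auxiliary identity with $E=C^{\circ}$, the fourth is (i), and the last uses $\iota\circ\iota=\mathrm{id}$. I do not expect a genuine obstacle here: the whole content is the bipolar theorem plus bookkeeping, and the only point that really needs care is the hypothesis in (i) — $C^{\circ\circ}=C$ requires $C$ to be closed, convex and to contain the inversion centre $c_{n}$, which is automatic for the compact convex bodies $D$, $\Delta$ and $\mathcal{Q}_{\Pi}$ appearing in this section.
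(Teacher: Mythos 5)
Your proof is correct and follows the same route the paper (implicitly) takes: the paper simply cites convex analysis for (i) and (ii) and notes that (iii) is a direct consequence, which is exactly the bipolar theorem plus the involution bookkeeping you spell out. Your added precision that $C^{\circ\circ}=C$ genuinely requires $C$ to be closed, convex and to contain the centre $c_{n}$ (automatic for $D$, $\Delta$, $\mathcal{Q}_{\Pi}$ and the balls, where the sphere is central) is a correct sharpening of the ``in the above situation'' hypothesis that the paper leaves implicit.
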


The last statement is a direct consequence of the preceding two. Two convex sets $C$ and $G$ are called \textit{polar} (resp. \textit{dual)
in }$\mathcal{A}$ with respect to the central sphere of radius $s$ if
$C^{\circ}=G$ (resp. $C^{\star}=G$). From condition \eqref{Delta5} in Fact \ref{Delta}
we deduce, see Fig. \ref{Fig5}:

\begin{Th}
The polytopes $D$ and $\Delta$ are \textit{dual }in $\mathcal{A}$ with respect
to the central sphere of radius $m=\sqrt{rR}$. The same is true for the balls
$B_{\mathcal{A}}\left(  c_{n},r\right)  $ and $B_{\mathcal{A}}\left(
c_{n},R\right)  $.
\end{Th}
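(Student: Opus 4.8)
The plan is to read off the duality of $D$ and $\Delta$ directly from Fact~\ref{Delta} together with the formal properties of polarity collected in Fact~\ref{dual}, and then to dispatch the two balls by the elementary computation of the polar of a central ball. First I would record the numerical identity behind the radius: since $r=m/\sqrt{d-1}$ and $R=m\sqrt{d-1}$, we have $rR=m^{2}$, so $m=\sqrt{rR}$ and the sphere appearing in the statement is exactly the central sphere of radius $m$ about $c_{n}$ that has been used implicitly throughout this subsection.

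For the polytopes, the crucial point is condition~\eqref{Delta5} in Fact~\ref{Delta}, which says that for $p\in\mathcal{A}$ one has $p\in\Delta$ if and only if $m^{2}\geq\langle p-c_{n},q-c_{n}\rangle$ for every $q\in\iota(D)$. Comparing this with the definition of the polar $C^{\circ}$ (instantiated with centre $c=c_{n}$ and radius $s=m$), this is precisely the equality $\Delta=(\iota(D))^{\circ}$ inside $\mathcal{A}$. By Fact~\ref{dual}.(ii), $(\iota(D))^{\circ}=D^{\star}$, hence $D^{\star}=\Delta$, which is exactly the assertion that $D$ and $\Delta$ are dual in $\mathcal{A}$ with respect to the central sphere of radius $m=\sqrt{rR}$. (Note $D,\Delta\subset\mathcal{A}$, so everything takes place in $\mathcal{A}$ as required.)

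For the balls, I would first observe that the inversion $\iota$ through $c_{n}$ maps $B_{\mathcal{A}}(c_{n},r)$ onto itself, so $B_{\mathcal{A}}(c_{n},r)^{\star}=\iota(B_{\mathcal{A}}(c_{n},r)^{\circ})$ coincides with $B_{\mathcal{A}}(c_{n},r)^{\circ}$ as soon as the latter is again a $c_{n}$-centred ball. The polar of a central ball is routine: for $p\in\mathcal{A}$, Cauchy--Schwarz gives $\sup\{\langle p-c_{n},q-c_{n}\rangle : q\in B_{\mathcal{A}}(c_{n},r)\}=r\|p-c_{n}\|$, so $p\in B_{\mathcal{A}}(c_{n},r)^{\circ}$ iff $r\|p-c_{n}\|\leq m^{2}$, i.e. $\|p-c_{n}\|\leq m^{2}/r=m\sqrt{d-1}=R$. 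Thus $B_{\mathcal{A}}(c_{n},r)^{\circ}=B_{\mathcal{A}}(c_{n},R)$, and since $\iota$ fixes this ball, also $B_{\mathcal{A}}(c_{n},r)^{\star}=B_{\mathcal{A}}(c_{n},R)$; the two balls are therefore dual (indeed polar) in the required sense.

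There is no serious obstacle here: the statement is an assembly of Fact~\ref{Delta}, Fact~\ref{dual}, and the definition of polarity, together with the one-line polar-of-a-ball computation. The only points demanding a little care are checking that the centre/radius pair $(c_{n},m)$ is instantiated consistently across the polytope and ball cases, and confirming that $\iota$ genuinely leaves the $c_{n}$-centred balls invariant (so that $\circ$ and $\star$ agree for them), whereas for $D$ and $\Delta$ this fails, which is exactly why the polytope result is phrased via the dual $\star$ rather than the polar $\circ$ (cf.\ that $D$ is polar to $\iota(\Delta)$, not to $\Delta$).
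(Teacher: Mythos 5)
Your argument is correct and is essentially the paper's own: the paper deduces the duality $D^{\star}=\Delta$ directly from condition \eqref{Delta5} of Fact~\ref{Delta} combined with Fact~\ref{dual}.(ii), exactly as you do. The paper leaves the ball case implicit, and your Cauchy--Schwarz computation showing $B_{\mathcal{A}}(c_{n},r)^{\circ}=B_{\mathcal{A}}(c_{n},R)$ (hence $B_{\mathcal{A}}(c_{n},r)^{\star}=B_{\mathcal{A}}(c_{n},R)$, since central balls are $\iota$-invariant) is the intended routine verification.
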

\begin{figure}[htb]
	\begin{center}
	\includegraphics[scale=0.4]{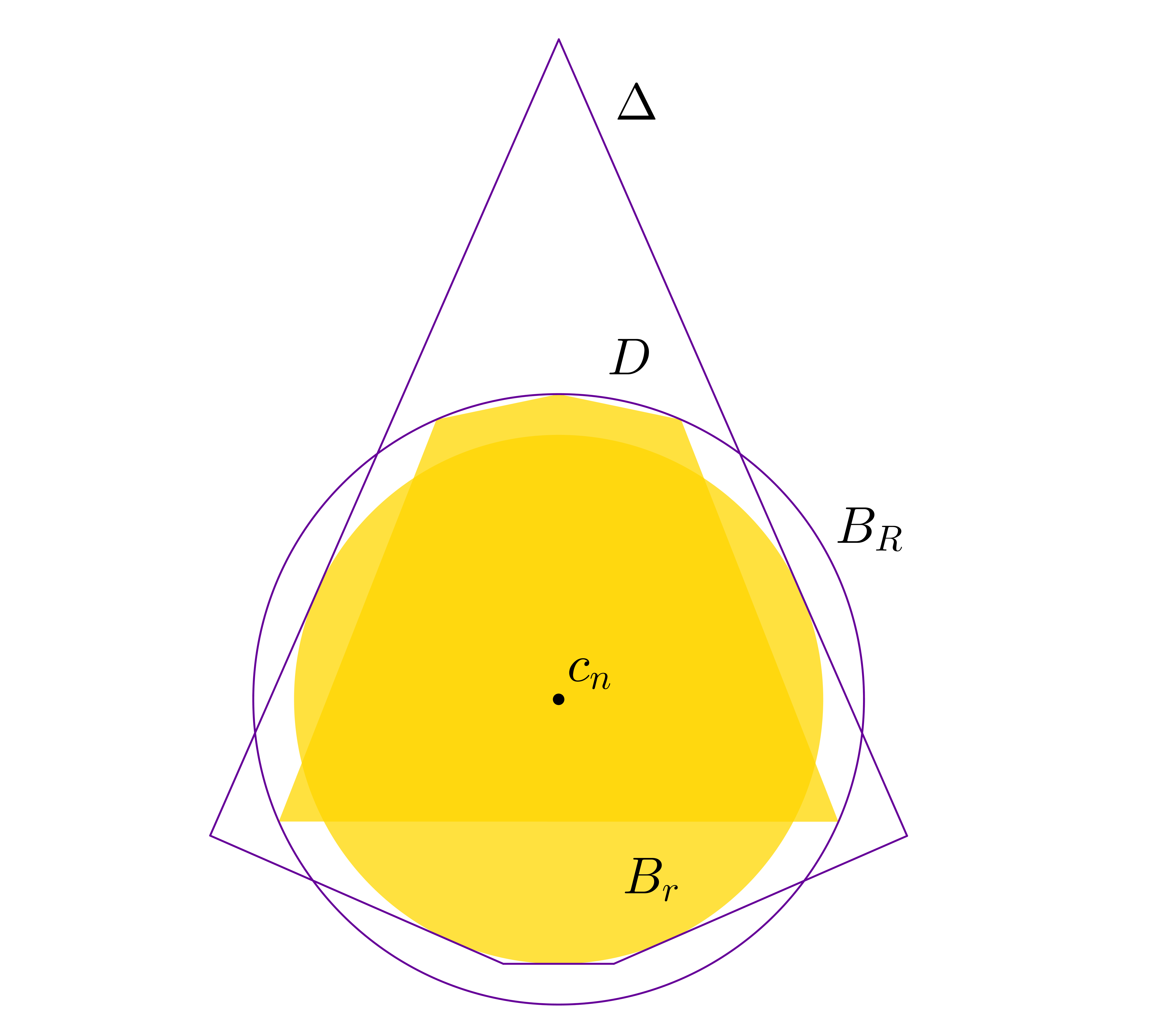}
	\end{center}
	\caption{Duality of the polytopes $D$ and $\Delta$ and of the balls $B_r$ and $B_R$. The qplex $\mathcal Q_\Pi$ is a~convex set containing $D$ and $B_r$ and contained in the intersection of $\Delta$ and $B_R$.}
\label{Fig5}
\end{figure}

Now, we show that the elements of $\mathcal{Q}_{\Pi}$ fulfil the following
\textit{fundamental inequalities} (or are \textit{consistent} \cite{Appetal17}).

\begin{Th}
\label{fund}For $p,q\in\mathcal{Q}_{\Pi}$ we have%
\[
dm^{2}\leq\left\langle p,q\right\rangle \leq2dm^{2}\text{.}%
\]
\end{Th}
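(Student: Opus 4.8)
The plan is to reduce the claimed inequalities to the elementary fact that $0\le\tr(\rho\sigma)\le 1$ for density operators. Since $p,q\in\mathcal{Q}_\Pi=p_\Pi(\mathcal S(\mathbb C^d))$, first fix $\rho,\sigma\in\mathcal S(\mathbb C^d)$ with $p=p_\Pi(\rho)$ and $q=p_\Pi(\sigma)$. For a $2$-design POVM every effect $\Pi_j=\frac{d}{n}\rho_j$ has trace $d/n$, so the centre of the qplex is the uniform distribution, $c_\Pi=p_\Pi(\rho_{\ast})=c_n$, and by Corollary~\ref{cor2des} the square of the similarity ratio is $\alpha=\frac{d}{n(d+1)}=dm^2$.

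Next I would invoke the similarity relation \eqref{iso}, which for these data reads
$$\langle p-c_n,q-c_n\rangle=dm^2\,(\rho-\rho_{\ast}\mid\sigma-\rho_{\ast})=dm^2\bigl(\tr(\rho\sigma)-1/d\bigr).$$
On the other hand, since $p,q\in\Delta_n$ we have $\langle p,c_n\rangle=\langle q,c_n\rangle=\langle c_n,c_n\rangle=1/n$, hence $\langle p-c_n,q-c_n\rangle=\langle p,q\rangle-1/n$. Combining the two displays and using $1/n=(d+1)m^2$ gives
$$\langle p,q\rangle=(d+1)m^2-m^2+dm^2\tr(\rho\sigma)=dm^2\bigl(1+\tr(\rho\sigma)\bigr).$$
It then only remains to note that $\tr(\rho\sigma)\ge 0$ by positivity of $\rho,\sigma$ and that $\tr(\rho\sigma)\le\|\rho\|_{HS}\|\sigma\|_{HS}\le 1$ by Cauchy--Schwarz together with $\tr\rho^2\le(\tr\rho)^2=1$ (and likewise for $\sigma$). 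Substituting $0\le\tr(\rho\sigma)\le 1$ yields $dm^2\le\langle p,q\rangle\le 2dm^2$, as claimed.

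There is no real obstacle here; the only non-routine point is recognising that, after subtracting $c_n$ and rescaling, the Euclidean inner product on $\mathbb R^n$ collapses to an affine function of $\tr(\rho\sigma)$, so that the whole assertion is literally equivalent to $0\le\tr(\rho\sigma)\le 1$. I would also observe that the upper bound has a purely ``external'' proof: by Cauchy--Schwarz in $\mathbb R^n$ and \eqref{balls}, $\langle p,q\rangle=\langle p-c_n,q-c_n\rangle+1/n\le R^2+1/n=2dm^2$ since $R^2+1/n=m^2(d-1)+(d+1)m^2=2dm^2$. The lower bound, however, is cleanest through the state-space picture above; deriving it instead from self-duality of $\mathcal{Q}_\Pi$ would be circular, as that property is itself established using the present theorem.
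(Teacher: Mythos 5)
Your proof is correct and follows essentially the same route as the paper: choose preimages $\rho,\sigma\in\mathcal S(\mathbb C^d)$, use the similarity of $\mathcal S(\mathbb C^d)$ and $\mathcal Q_\Pi$ to obtain $\langle p,q\rangle=dm^2(\tr(\rho\sigma)+1)$, and conclude from $0\leq\tr(\rho\sigma)\leq1$. You merely spell out the intermediate computation (centring at $c_n$ and using $1/n=(d+1)m^2$) that the paper leaves implicit, and your closing remark about avoiding circularity with self-duality is accurate.
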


\begin{proof} Take $\rho,\sigma\in\mathcal{S}\left(  \mathbb{C}^{d}\right)  $ such that
$p_{\Pi}\left(  \rho\right)  =p$ and $p_{\Pi}\left(  \sigma\right)  =q$.
From the similarity of $\mathcal S(\mathbb C^d)$ and $\mathcal Q_\Pi$ it follows that $\langle p,q\rangle=dm^2(\tr(\rho\sigma)+1)$.
Now, it is enough to apply the well-known inequality $0\leq\operatorname*{tr}%
(\rho\sigma)\leq1$.
\end{proof}

Finally, let us set again $s=m$ in the definition of dual sets. Then $\mathcal{Q}_{\Pi}$ is a self-dual (and hence \textit{maximally consistent}) set.

\begin{Th}
The probability range $\mathcal{Q}_{\Pi}$ is self-dual%
\[
\mathcal{Q}_{\Pi}^{\star}=\mathcal{Q}_{\Pi}\text{.}%
\]
\end{Th}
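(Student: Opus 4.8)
The plan is to unwind the definition of the dual for $s=m$ and centre $c_{n}$, and identify $\mathcal{Q}_\Pi^{\star}$ with the set carved out by the left-hand fundamental inequality of Theorem~\ref{fund}. First I would record the numerical identity $\tfrac1n-m^{2}=dm^{2}$, which follows at once from $m^{2}=1/(n(d+1))$. By the reformulation of $C^{\star}$ for a sphere centred at $c_{n}$ stated just before Fact~\ref{dual}, this turns the dual into
$$\mathcal{Q}_\Pi^{\star}=\bigl\{p\in\mathcal{A}:\langle p,q\rangle\geq dm^{2}\ \text{for every }q\in\mathcal{Q}_\Pi\bigr\}.$$
The inclusion $\mathcal{Q}_\Pi\subseteq\mathcal{Q}_\Pi^{\star}$ is then nothing but the left inequality in Theorem~\ref{fund}, so no further work is needed for it.

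For the reverse inclusion, take $p\in\mathcal{Q}_\Pi^{\star}$, so in particular $p\in\mathcal{A}$. Since $\mathcal{A}$ is the affine span of $\mathcal{Q}_\Pi=p_\Pi(\mathcal S(\mathbb C^d))$, the map $p_\Pi\vert_{\mathcal L_s^0(\mathbb C^d)}$ is a similarity (Corollary~\ref{cor2des}), and the affine span of $\mathcal S(\mathbb C^d)$ is the whole hyperplane $\{\tau\in\mathcal L_s(\mathbb C^d):\tr\tau=1\}$, there is a Hermitian operator $\tau$ with $\tr\tau=1$ and $p=p_\Pi(\tau)$. Extending the isometry relation~\eqref{iso} from states to trace-one Hermitian operators by linearity (exactly as in the proof of Theorem~\ref{thmsim}) and using $c_{n}\perp\mathcal{A}$, one gets for every $\sigma\in\mathcal S(\mathbb C^d)$, writing $q=p_\Pi(\sigma)$,
$$\langle p,q\rangle=\langle p-c_{n},q-c_{n}\rangle+\tfrac1n=dm^{2}\bigl(\tr(\tau\sigma)-\tfrac1d\bigr)+\tfrac1n=dm^{2}\bigl(\tr(\tau\sigma)+1\bigr),$$
which is the same computation already used in the proof of Theorem~\ref{fund}. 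Hence the defining inequality of $\mathcal{Q}_\Pi^{\star}$ reads $\tr(\tau\sigma)\geq0$ for every state $\sigma$, which together with $\tr\tau=1$ forces $\tau\geq0$, i.e.\ $\tau\in\mathcal S(\mathbb C^d)$, and therefore $p=p_\Pi(\tau)\in\mathcal{Q}_\Pi$.

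I do not expect a genuine obstacle here: the whole statement reduces to Theorem~\ref{fund} plus the similarity~\eqref{iso}. The only two points needing a moment's care are the identity $\tfrac1n-m^{2}=dm^{2}$ (so that the two fundamental inequalities become precisely the polarity conditions) and the observation that membership $p\in\mathcal{A}$ already supplies a trace-one Hermitian pre-image under $p_\Pi$, so that positive semidefiniteness of $\tau$ is the sole condition separating $\mathcal{Q}_\Pi^{\star}$ from $\mathcal{Q}_\Pi$.
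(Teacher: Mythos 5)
Your proposal is correct and follows essentially the same route as the paper's proof: the inclusion $\mathcal{Q}_\Pi\subseteq\mathcal{Q}_\Pi^{\star}$ from the left fundamental inequality, and for the converse, a trace-one Hermitian pre-image $\tau$ of $p\in\mathcal{Q}_\Pi^{\star}$ whose duality condition becomes $\tr(\tau\sigma)\geq 0$ for all states $\sigma$, forcing $\tau\geq 0$. You merely spell out the intermediate computation $\langle p,q\rangle=dm^{2}(\tr(\tau\sigma)+1)$ and the existence of $\tau$ a bit more explicitly than the paper does.
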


\begin{proof}
It follows from the first inequality in Theorem \ref{fund} that $\mathcal{Q}%
_{\Pi}\subset\mathcal{Q}_{\Pi}^{\star}$. Let $p\in\mathcal{Q}_{\Pi}^{\star}$
and  $\tau\in\mathcal L_s(\mathbb C^d)$ be such that $\tr\tau=1$ and $p=p_\Pi(\tau)$. Since  \eqref{iso} holds for all $\rho,\sigma\in \mathcal L_s(\mathbb C^d)$ such that $\tr\rho=\tr\sigma=1$, from the definition of duality we get  $\tr(\tau\rho)\geq 0$ for every $\rho\in\mathcal S(\mathbb C^d)$. Hence $\tau\geq 0$ and, in consequence,
$\tau\in\mathcal{S}\left(  \mathbb{C}^{d}\right)  $ and $p\in\mathcal Q_\Pi$.
\end{proof}

It turns out that some of the geometric properties  presented in this section hold true for every morphophoric POVM. The reasoning behind these observations is basically the same as in the case of 2-design POVMs, although some adjustments are necessary. First of all, the balls that sandwich $\mathcal Q_\Pi$ need to be centred at $c_\Pi=(\tr\Pi_1/d,\ldots,\tr\Pi_n/d)$. The radii of these balls are defined in the same way as previously, note, however, that the parameter $m$ involved in the definition is strictly related to the similarity ratio, i.e. $m=\sqrt{\alpha/d}$. Relations \eqref{balls}  now take the following form:
$$B_\mathcal A(c_\Pi,r)\subset\mathcal Q_\Pi\subset B_\mathcal A(c_\Pi,R)\quad \textnormal{and}\quad \mathcal Q_\Pi^1\subset \partial B_\mathcal A(c_\Pi,R).$$
We can also define the basis distributions $f_j=p_{\Pi}(\rho_j)$ ($j=1,\dots,n$) and the basis polytope $D=\operatorname*{conv}\left\{  f_{j}:j=1,\ldots,n\right\}$ setting $\rho_j:=\Pi_j/\tr\Pi_j$. The states $\rho_j$ need not to be pure this time, and so we do not have an analogue of Fact \ref{outball} but we still have $$D\cup B_\mathcal A(c_\Pi,r)\subset\mathcal Q_\Pi\subset\Delta\cap B_\mathcal A(c_\Pi,R).$$

Finally, and most importantly, the appropriate dualities are preserved. Indeed, for $\iota$ denoting now the inversion in $\mathcal A$ through $c_\Pi$ (i.e. $\iota(p):=2c_\Pi-p$) we get the analogue of Fact \ref{Delta} from which we deduce the following properties:
\begin{enumerate}[i.]
	\item The polytopes $D$ and $\Delta$ are dual in $\mathcal A$ with respect to the sphere with centre at $c_\Pi$ of radius $m=\sqrt{\alpha/d}=\sqrt{rR}$. The same is true for the balls $B_\mathcal A(c_\Pi,r)$ and $B_\mathcal A(c_\Pi,R)$.
	\item For $p,q\in\mathcal Q_\Pi$ we have $-m^2\leq\langle p-c_\Pi,q-c_\Pi\rangle\leq R^2$.
	\item The probability range $\mathcal Q_\Pi$ is self-dual ($\mathcal Q_\Pi^\star=\mathcal Q_\Pi$) with respect to the sphere with centre at $c_\Pi$ of radius $m$.
\end{enumerate}

\section{Morphophoric POVM\ and primal equation}

Assume that a morphophoric POVM $\Pi=(\Pi_{j})_{j=1}^{n}$ is given and
$\rho\in\mathcal{S}\left(  \mathbb{C}^{d}\right)  $ is the state of the system
before the measurement. Then the probability that the result of the
measurement $\Pi$ is $j$ ($j=1,\ldots,n$) equals $p_{j}^{\Pi}\left(
\rho\right)  =\operatorname{tr}(\Pi_{j}\rho)$. (In this section we put the POVM in the superscript  and the coordinates in the subscript.) Assume also that the state of the system after the measurement is
described by the \textsl{(generalised) L\"{u}ders instrument }\cite[p.
243]{HeiZim12}, i.e. it is given by $\rho_{j}\left(  \rho\right)  :=\Pi
_{j}^{1/2}\rho\Pi_{j}^{1/2}/p_{j}^{\Pi}\left(  \rho\right)  $ provided the
result was actually $j$ and $p_{j}^{\Pi}\left(  \rho\right)  \neq0$. Let us
now consider another POVM $\Xi=(\Xi_{k})_{k=1}^{N}$. Fuchs \textit{et al}
\cite{FucSta19} called the former measurement \textsl{in the sky} and the
latter \textsl{on the ground}, and analysed two paths leading to the results
of the second measurement. One can either perform the measurement `on
the ground' directly or proceed first to the (counterfactual) measurement `in
the sky' and subsequently to the measurement `on the ground'. In the
former case the probability that the result of the measurement $\Xi$ is $k$
($k=1,\ldots,N$) is given by $p_{k}^{\Xi}\left(  \rho\right)
=\operatorname{tr}\left(  \Xi_{k}\rho\right)  $. In the latter case, the
probability that the results of the measurements $\Pi$ and $\Xi$ are
subsequently $j$ and $k$ ($j=1,\ldots,n$, $k=1,\ldots,N$) is given by
$p_{jk}^{\Pi\Xi}\left(  \rho\right)  =\operatorname{tr}\left(  \Xi_{k}\rho
_{j}\left(  \rho\right)  \right)  p_{j}^{\Pi}\left(  \rho\right)
=\operatorname{tr}\left(  \Xi_{k}\Pi_{j}^{1/2}\rho\Pi_{j}^{1/2}\right)  $ and
can be expressed via the conditional probability $p_{k|j}^{\Xi|\Pi}\left(
\rho\right)  :=\operatorname{tr}\left(  \Xi_{k}\rho_{j}\left(  \rho\right)
\right)  $ as $p_{jk}^{\Pi\Xi}\left(  \rho\right)  =p_{k|j}^{\Xi|\Pi}\left(
\rho\right)  p_{j}^{\Pi}\left(  \rho\right)  $.

The central theme of QBism is answering the question of how
the probabilities $p_{k}^{\Xi}$ depend on $p_{j}^{\Pi}$ and $p_{k|j}^{\Xi|\Pi}$.
Classically, if the process of measurement does not alter the state, such
quantities should be linked by the \textsl{law of total probability}:
$p_{k}^{\Xi}=\sum_{j=1}^{n}p_{j}^{\Pi}p_{k|j}^{\Xi|\Pi}$; however, in the quantum
world the situation is quite different. In the case of rank-$1$ POVMs the
post-measurement states $\rho_{j}$ ($j=1,\ldots,n$) and, in consequence, the
conditional probabilities $p_{k|j}^{\Xi|\Pi}$ do not depend on the
pre-measurement state $\rho$. For SIC-POVMs Fuchs and Schack \cite[p.
23]{FucSch09} used this fact to formulate the fundamental equation of QBism
(called the `primal equation' or the `Urgleichung'):

\begin{equation}
p_{k}^{\Xi}\left(  \rho\right)  =\sum_{j=1}^{n}p_{k|j}^{\Xi|\Pi}\left(
\left(  d+1\right)  p_{j}^{\Pi}\left(  \rho\right)  -1/d\right)
\text{.}\label{Urgleichung SIC}%
\end{equation}
Note, however, that the derivation of such an equation does not require $\Pi$ to be a
SIC-POVM; in fact, any morphophoric rank-$1$ equal-trace POVM, i.e. a $2$-design POVM (see Corollary
\ref{cor2des}), suits here, resulting in the equation of the form%
\begin{equation}
p_{k}^{\Xi}\left(  \rho\right)  =\sum_{j=1}^{n}p_{k|j}^{\Xi|\Pi}\left(
\left(  d+1\right)  p_{j}^{\Pi}\left(  \rho\right)  -d/n\right)
\text{.}\label{Urgleichung rank 1}%
\end{equation}
The proof  is similar to the SIC-POVM case and follows directly from \eqref{2des}.

For an arbitrary morphophoric POVM $\Pi$ the situation could be much more
complicated, as, in general, the post-measurement states for $\Pi$ depend on
the initial state of the system. To derive an analogue of \eqref{Urgleichung
	rank 1}, we have to introduce some notation. Recall that $c_{\Pi}=p_{\Pi
}\left(  \rho_{\ast}\right)$ and $c_{\Xi}=p_{\Xi}\left(  \rho_{\ast}\right)$, where $\rho_{\ast
}:=I/d$ is the maximally mixed state. Consider the \textsl{deviation} of the
probability distribution of the measurement results for the pre-measurement
state $\rho$ from that for $\rho_{\ast}$ for both POVMs:
\begin{align*}
\delta_{\Pi}\left(  \rho\right)   &  :=p_{\Pi}\left(  \rho\right)  -c_{\Pi
}\text{,}\\
\delta_{\Xi}\left(  \rho\right)   &  :=p_{\Xi}\left(  \rho\right)  -c_{\Xi
}\text{,}%
\end{align*}
and the transposed covariance matrix $C$ of two consecutive measurements $\Pi$
and $\Xi$ given by:
\begin{align*}
C_{kj} &  :=p_{jk}^{\Pi\Xi}\left(  \rho_{\ast}\right)  -p_{j}^{\Pi}\left(
\rho_{\ast}\right)  p_{k}^{\Xi}\left(  \rho_{\ast}\right)  =\frac{1}{d}\left(
\operatorname{tr}\left(  \Xi_{k}\Pi_{j}\right)  -\operatorname{tr}\left(
\Xi_{k}\right)  \operatorname{tr}\left(  \Pi_{j}\right)  /d\right)  \\
&  =\operatorname{tr}\left(  \pi_{0}\left(  \Xi_{k}\right)  \pi_{0}\left(
\Pi_{j}\right)  \right)  /d
\end{align*}
for $j=1,\ldots,n$ and $k=1,\ldots,N$, where $\pi_{0}\left(  A\right)
=A-\left(  \operatorname{tr}A\right)  \rho_{\ast}$ for $A\in\mathcal{L}%
_{s}(\mathbb{C}^{d})$, see \eqref{pi0}. Then the following simple equation,
which in the current context replaces \eqref{Urgleichung rank 1}, is fulfilled (see Theorem \ref{genurg}):

\begin{equation}
\delta_{\Xi}=\frac{d}{\alpha}C\delta_{\Pi}\text{,}
\label{Primal equation}%
\end{equation}
where
\begin{equation}
\alpha=\frac{d}{\dim(\mathcal{A})}\sum_{j=1}^{n}\left(  p_{jj}^{\Pi\Pi
}\left(  \rho_{\ast}\right)  -\left(  p_{j}^{\Pi}\left(  \rho_{\ast}\right)
\right)  ^{2}\right)  =\frac{1}{d^{2}-1}\sum_{j=1}^{n}\left(
\operatorname{tr}\left(  \Pi_{j}^{2}\right)  -\frac{\left(  \operatorname{tr}%
	\left(  \Pi_{j}\right)  \right)  ^{2}}{d}\right)  \label{alpha}%
\end{equation}
is the square of the similarity ratio for morphophoric POVM, see \eqref{squaresim}. Note that, as in the original `Urgleichung', all the quantities that appear
in \eqref{Primal equation} can be interpreted in probabilistic terms. In fact,
the above equation fully characterises morphophoric POVMs as the following
theorem shows:

\begin{Th}\label{genurg}
	Assume that $\Pi=(\Pi_{j})_{j=1}^{n}$ is a POVM in $\mathbb{C}^{d}$. Then the
	following conditions are equivalent:
	
	\begin{enumerate}[i.]
		\item $\Pi$ is morphophoric;
		
		\item  for every POVM $\Xi=(\Xi_{k})_{k=1}^{N}$, \eqref{Primal equation} holds
		with $\alpha$ given by \eqref{alpha};
		
		\item  for some IC-POVM $\Xi=(\Xi_{k})_{k=1}^{N}$ and $\alpha>0$, \eqref{Primal equation} holds.
	\end{enumerate}
\label{morpheq}
\end{Th}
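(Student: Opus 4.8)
The plan is to reduce the vector identity~\eqref{Primal equation} to a single statement about the frame superoperator $S=\sum_{j=1}^{n}|\pi_{0}(\Pi_{j}))(\pi_{0}(\Pi_{j})|$ on $\mathcal L_{s}^{0}(\mathbb C^{d})$, and then to feed it to Theorem~\ref{thmsim}. First I would unwind both sides of~\eqref{Primal equation} at an arbitrary state $\rho\in\mixed$. With $\pi_{0}$ as in~\eqref{pi0} and $\tr\pi_{0}(\rho)=0$, one has $(\delta_{\Pi}(\rho))_{j}=\tr(\Pi_{j}\pi_{0}(\rho))=(\pi_{0}(\Pi_{j})\,|\,\pi_{0}(\rho))$ and likewise $(\delta_{\Xi}(\rho))_{k}=(\pi_{0}(\Xi_{k})\,|\,\pi_{0}(\rho))$; combined with $C_{kj}=(\pi_{0}(\Xi_{k})\,|\,\pi_{0}(\Pi_{j}))/d$, the $k$-th component of $(d/\alpha)\,C\delta_{\Pi}(\rho)$ collapses to $(1/\alpha)\,(\pi_{0}(\Xi_{k})\,|\,S\,|\,\pi_{0}(\rho))$. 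Hence~\eqref{Primal equation} holds at $\rho$ exactly when
\[
(\pi_{0}(\Xi_{k})\,|\,\alpha\mathcal I_{0}-S\,|\,\pi_{0}(\rho))=0\qquad\text{for }k=1,\ldots,N,
\]
and, since $\{\pi_{0}(\rho):\rho\in\mixed\}$ spans $\mathcal L_{s}^{0}(\mathbb C^{d})$ (as noted at the start of the proof of Theorem~\ref{thmsim}) and $\alpha\mathcal I_{0}-S$ is self-adjoint, \eqref{Primal equation} holds for \emph{all} states if and only if $(\alpha\mathcal I_{0}-S)\pi_{0}(\Xi_{k})=0$ for every $k$.

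With this reformulation I would run the cycle $(i)\Rightarrow(ii)\Rightarrow(iii)\Rightarrow(i)$. For $(i)\Rightarrow(ii)$: if $\Pi$ is morphophoric, Theorem~\ref{thmsim} gives $S=\alpha\mathcal I_{0}$ with $\alpha$ equal to~\eqref{squaresim}$=$\eqref{alpha}, so $\alpha\mathcal I_{0}-S=0$ and the above holds for every POVM $\Xi$. For $(ii)\Rightarrow(iii)$: the constant~\eqref{alpha} equals $(d^{2}-1)^{-1}\sum_{j}\|\pi_{0}(\Pi_{j})\|_{HS}^{2}$, which must be strictly positive for the right-hand side of~\eqref{Primal equation} to be defined, so fixing any IC-POVM $\Xi$ in $\mathbb C^{d}$ (such POVMs exist in every dimension) yields $(iii)$. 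For $(iii)\Rightarrow(i)$, the substantive direction: the reformulation turns $(iii)$ into $(\alpha\mathcal I_{0}-S)\pi_{0}(\Xi_{k})=0$ for all $k$; as $\Xi$ is informationally complete, Proposition~\ref{IC} gives that $\{\pi_{0}(\Xi_{k})\}_{k=1}^{N}$ spans $\mathcal L_{s}^{0}(\mathbb C^{d})$, whence $\alpha\mathcal I_{0}-S=0$, i.e.\ $S=\alpha\mathcal I_{0}$; then $\pi_{0}(\Pi)$ is a tight operator frame by Theorem~\ref{tf}, so $\Pi$ is morphophoric by Theorem~\ref{thmsim}, and taking the trace of $S=\alpha\mathcal I_{0}$ forces $\alpha$ to be the value~\eqref{alpha}.

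I do not expect a genuine obstacle here: the whole content is the bookkeeping in the first paragraph, rewriting $\delta_{\Pi}$, $\delta_{\Xi}$ and $C$ through $\pi_{0}$ so that the matrix product $C\delta_{\Pi}$ becomes a single application of $S$. The two points that call for a moment's care are (a) that the traceless parts $\pi_{0}(\rho)$ of genuine states already span $\mathcal L_{s}^{0}(\mathbb C^{d})$, so the affine extension used elsewhere in the proof of Theorem~\ref{thmsim} is unnecessary here, and (b) that in $(ii)\Rightarrow(iii)$ one should record $\alpha>0$, so that the asserted equation is not vacuous. Everything else follows directly from Theorems~\ref{tf} and~\ref{thmsim} and Proposition~\ref{IC}.
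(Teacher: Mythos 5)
Your proposal is correct and follows essentially the same route as the paper: both rewrite $\delta_\Pi$, $\delta_\Xi$ and $C$ through $\pi_0$, use that $\{\rho-\rho_\ast\}$ spans $\mathcal L_s^0(\mathbb C^d)$ together with Proposition~\ref{IC} for the IC-POVM $\Xi$, and then conclude via Theorems~\ref{tf} and~\ref{thmsim}. Packaging the key identity as $(\alpha\mathcal I_0-S)\pi_0(\Xi_k)=0$ rather than as the bilinear equation \eqref{frame pi} is only a cosmetic difference.
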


\begin{proof}
	The implication $(ii)\Rightarrow(iii)$ is obvious. We prove $(iii)\Rightarrow(i)$. Let
	$k=1,\ldots,N$ and $\rho\in\mathcal{S}\left(  \mathbb{C}^{d}\right)  $. Then
	\begin{align*}
	\alpha\operatorname{tr}\left(  \pi_{0}\left(  \Xi_{k}\right)  \left(
	\rho-\rho_{\ast}\right)  \right)   &  =\alpha\operatorname{tr}\left(  \Xi
	_{k}\left(  \rho-\rho_{\ast}\right)  \right)  \\
	&  =\alpha(\delta_{\Xi}\left(  \rho\right)  )_{k}\\
	&  =\sum_{j=1}^{n}\operatorname{tr}\left(  \pi_{0}\left(  \Xi_{k}\right)
	\pi_{0}\left(  \Pi_{j}\right)  \right)  \left(  p^\Pi_{j}\left(  \rho\right)
	-p^\Pi_{j}\left(  \rho_{\ast}\right)  \right)  \\
	&  =\sum_{j=1}^{n}\operatorname{tr}\left(  \pi_{0}\left(  \Xi_{k}\right)
	\pi_{0}\left(  \Pi_{j}\right)  \right)  \operatorname{tr}\left(  \pi
	_{0}\left(  \Pi_{j}\right)  \left(  \rho-\rho_{\ast}\right)  \right)  \text{.}%
	\end{align*}
	
	The set $\left\{  \rho-\rho_{\ast}:\rho\in\mathcal{S}\left(  \mathbb{C}%
	^{d}\right)  \right\}  $ generates $\mathcal{L}_{s}^{0}(\mathbb{C}^{d})$, and, as $\Xi$ is informationally complete, 
	the same is true for $\left\{  \pi_{0}\left(  \Xi_{k}\right)
	:k=1,\ldots,N\right\}$, see Proposition \ref{IC}. In
	consequence,
	\begin{equation}
	\alpha\operatorname{tr}\left(  AB\right)  =\sum_{j=1}^{n}\operatorname{tr}%
	\left(  A\pi_{0}\left(  \Pi_{j}\right)  \right)  \operatorname{tr}\left(
	\pi_{0}\left(  \Pi_{j}\right)  B\right)  \label{frame pi}%
	\end{equation}
	for every $A,B\in\mathcal{L}_{s}^{0}(\mathbb{C}^{d})$. Now, from Theorem
	\ref{tf} we deduce that $\pi_{0}\left(  \Pi\right)  $ is a tight frame in
	$\mathcal{L}_{s}^{0}(\mathbb{C}^{d})$, and so, by Theorem \ref{thmsim}, we get
	$(i)$. Finally, in the same manner from $(i)$ we get \eqref{frame pi} with $\alpha$
	given by \eqref{alpha}. Let us consider an arbitrary (not necessarily IC) POVM
	$\Xi=(\Xi_{k})_{k=1}^{N}$. Putting $A:=\pi_{0}\left(  \Xi_{k}\right)  $
	($k=1,\ldots,N$) and $B:=\rho-\rho_{\ast}$ ($\rho\in\mathcal{S}\left(
	\mathbb{C}^{d}\right)  $), we can derive \eqref{Primal equation} in the same
	way as above.
\end{proof}

It is an easy computation to show that for rank-$1$ equal-trace POVMs \eqref{Urgleichung rank 1} and \eqref{Primal equation} are equivalent. Hence, and from Corollary
\ref{cor2des}, we get the following result:

\begin{Cor}
	Assume that $\Pi=(\Pi_{j})_{j=1}^{n}$ is a rank-$1$ equal-trace POVM in $\mathbb{C}^{d}$.
	Then the following conditions are equivalent:
	
	\begin{enumerate}[i.]
		\item $\Pi$ is morphophoric;
		
		\item $\Pi$ is a $2$-design POVM;
		
		\item  for every POVM $\Xi=(\Xi_{k})_{k=1}^{N}$ \eqref{Urgleichung rank 1} holds;
		
		\item  for some IC-POVM $\Xi=(\Xi_{k})_{k=1}^{N}$ and $\alpha>0$
		\eqref{Urgleichung rank 1} holds.
	\end{enumerate}
\label{morpheqcor}
\end{Cor}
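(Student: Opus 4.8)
The plan is to lean on the two facts already established: Corollary~\ref{cor2des} gives $(i)\Leftrightarrow(ii)$ at once (a rank-$1$ equal-trace POVM has $\mathcal Q_\Pi$ similar to $\mathcal S(\mathbb C^d)$ precisely when it is a $2$-design POVM), while Theorem~\ref{genurg} characterises morphophoricity through the general primal equation \eqref{Primal equation}. It therefore suffices to show that, \emph{whenever $\Pi$ is rank-$1$ and of equal trace}, equation \eqref{Urgleichung rank 1} holds for a given POVM $\Xi$ if and only if \eqref{Primal equation} holds for the same $\Xi$ with $\alpha$ as in \eqref{alpha}. Granting this, condition $(iii)$ becomes ``\eqref{Primal equation} holds for every $\Xi$'', which by Theorem~\ref{genurg}$(ii)$ is equivalent to $(i)$; the implication $(iii)\Rightarrow(iv)$ is trivial; and $(iv)$ becomes ``\eqref{Primal equation} holds for some IC-POVM $\Xi$'', which by Theorem~\ref{genurg}$(iii)$ implies $(i)$. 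Together with $(i)\Leftrightarrow(ii)$ this closes the cycle $(i)\Leftrightarrow(ii)\Rightarrow(iii)\Rightarrow(iv)\Rightarrow(i)$.

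First I would collect the simplifications afforded by rank-$1$ and equal trace. Writing $\Pi=(\tfrac{d}{n}\rho_j)_{j=1}^n$ with $\rho_j\in\mathcal P(\mathbb C^d)$, the L\"uders post-measurement state is $\Pi_j^{1/2}\rho\,\Pi_j^{1/2}/p_j^\Pi(\rho)=\rho_j$, independent of $\rho$, so the conditional probability $p_{k|j}^{\Xi|\Pi}=\operatorname{tr}(\Xi_k\rho_j)$ is state-independent; moreover $p_j^\Pi(\rho_\ast)=\operatorname{tr}\Pi_j/d=1/n$, i.e.\ $c_\Pi=c_n$. From $\sum_j\tfrac{d}{n}\rho_j=I$ we get, applying $\operatorname{tr}(\Xi_k\,\cdot\,)$, the identity $\sum_{j=1}^n p_{k|j}^{\Xi|\Pi}=\tfrac{n}{d}\operatorname{tr}\Xi_k=n\,(c_\Xi)_k$. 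Inserting $\pi_0(\Pi_j)=\tfrac{d}{n}(\rho_j-\rho_\ast)$ into $C_{kj}=\operatorname{tr}(\pi_0(\Xi_k)\pi_0(\Pi_j))/d$ yields $C_{kj}=\tfrac1n\big(p_{k|j}^{\Xi|\Pi}-(c_\Xi)_k\big)$. Finally, a one-line evaluation of \eqref{squaresim} (equivalently \eqref{alpha}) using $\operatorname{tr}\Pi_j=\operatorname{tr}\Pi_j^2=d/n$ shows that for such $\Pi$ one \emph{always} has $\alpha=\tfrac{d}{n(d+1)}$; hence $d/\alpha=n(d+1)$ and $\tfrac{d}{\alpha}C_{kj}=(d+1)\big(p_{k|j}^{\Xi|\Pi}-(c_\Xi)_k\big)$.

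With these identities the promised equivalence is a direct substitution. Componentwise, \eqref{Primal equation} reads $(\delta_\Xi(\rho))_k=\sum_j\tfrac{d}{\alpha}C_{kj}(\delta_\Pi(\rho))_j$; replacing $\tfrac{d}{\alpha}C_{kj}$ by $(d+1)(p_{k|j}^{\Xi|\Pi}-(c_\Xi)_k)$, using $\sum_j(p_j^\Pi(\rho)-1/n)=0$ (normalisation of $\Pi$) and $\sum_j p_{k|j}^{\Xi|\Pi}=n(c_\Xi)_k$, the $(c_\Xi)_k$ contributions collect so as to cancel the deviation on the left, and what remains is exactly $p_k^\Xi(\rho)=\sum_j p_{k|j}^{\Xi|\Pi}\big((d+1)p_j^\Pi(\rho)-d/n\big)$, i.e.\ \eqref{Urgleichung rank 1}; every step is a reversible linear manipulation, so the converse holds too. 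For $(ii)\Rightarrow(iii)$ one may also argue more directly: apply $\operatorname{tr}(\Xi_k\,\cdot\,)$ to the $2$-design identity \eqref{2des} with $\tau=\rho$ and use $\operatorname{tr}\Xi_k=\tfrac{d}{n}\sum_j p_{k|j}^{\Xi|\Pi}$.

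The computations are routine; the only place where care is needed — and a slip would propagate — is the bookkeeping of the normalisation constants: one must check that $\alpha$ from \eqref{alpha} is forced to the value $d/(n(d+1))$ for \emph{every} rank-$1$ equal-trace POVM (so that the factor $d/\alpha$ in \eqref{Primal equation} matches the $d+1$ of \eqref{Urgleichung rank 1}), and that the additive shift in \eqref{Urgleichung rank 1} is $d/n$ rather than the SIC value $1/d$, reflecting $\operatorname{tr}\Pi_j=d/n$. Note also that the parameter $\alpha$ appearing in condition $(iv)$ is inert, since \eqref{Urgleichung rank 1} does not involve it; it is retained only to parallel Theorem~\ref{genurg}$(iii)$.
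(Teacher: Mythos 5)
Your proposal is correct and follows essentially the same route as the paper: the paper also gets (i)$\Leftrightarrow$(ii) from Corollary~\ref{cor2des} and reduces (i)$\Leftrightarrow$(iii)$\Leftrightarrow$(iv) to Theorem~\ref{genurg} via the ``easy computation'' that for rank-$1$ equal-trace POVMs \eqref{Urgleichung rank 1} and \eqref{Primal equation} are equivalent, which is exactly the substitution you carry out (state-independence of the L\"uders states, $C_{kj}=\frac{1}{n}(p_{k|j}^{\Xi|\Pi}-(c_\Xi)_k)$, $d/\alpha=n(d+1)$). One small slip in your bookkeeping: for $\Pi_j=\frac{d}{n}\rho_j$ one has $\tr\Pi_j^2=d^2/n^2$, not $d/n$; with this corrected, \eqref{alpha} does evaluate to $\alpha=d/(n(d+1))$ for every rank-$1$ equal-trace POVM, as you assert, so the matching of constants and the rest of your argument stand.
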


Note that \eqref{Primal equation} follows from the morphophoricity of the measurement rather than from its quantumness.
Considering a classical fuzzy measurement of a classical system that is morphophoric, we get exactly the same equation.

\section{Conclusions}
We believe that we have correctly identified and characterised morphophoricity as the property of
a~quantum measurement that enables the extension of QBism formalism to the broadest possible class
of measurements. In case of rank-$1$ equal-trace morphophoric POVMs, i.e.
$2$-design POVMs, both algebraic and geometric properties of the resulting generalised
qplexes are similar to the SIC-POVM case. In the general case the situation is more complex.
However, even then we can recover some essential geometrical features of qplexes as well as find a surprising
generalisation of the quantum total law of probability. Thus, we hope that our paper
may become the first step towards an abstract definition of a generalised qplex.

Finally, we would like to mention that figuring out how our qplex
is located inside the overall simplex may provide a way to
solve the problems of minimising and maximising the Shannon entropy of morphophoric POVMs.
This in turn leads to the understanding of the lower and upper bounds for the uncertainty of the
measurement results \cite{Braetal16,Dal14,Daletal11,Oreetal11,Szy14,SloSzy16,SzySlo16,Szy18}.

\section*{Acknowledgements}
We thank Anna Szczepanek for numerous remarks that improve the readability of the paper.
AS is supported by Grant No.\ 2016/21/D/ST1/02414 of the Polish National Science Centre.
WS is supported by Grant No.\ 2015/18/A/ST2/00274 of the Polish National Science Centre.

\end{document}